\pgfplotsset{compat=1.18}
\newtheorem*{maintheorem*}{Theorem}
\newtheorem*{otherlemma*}{Lemma}
\newtheorem{theorem}{Theorem}[section]
\newtheorem{definition}[theorem]{Definition}
\newtheorem{lemma}[theorem]{Lemma}
\newtheorem{corollary}[theorem]{Corollary}
\newtheorem{remark}[theorem]{Remark}
\crefname{lemma}{lemma}{lemmas}
\Crefname{lemma}{Lemma}{Lemmas}
\DeclareFontFamily{OT1}{pzc}{}
\DeclareFontShape{OT1}{pzc}{m}{it}{<-> s * [1.10] pzcmi7t}{}
\DeclareMathAlphabet{\mathpzc}{OT1}{pzc}{m}{it}
\newcommand{\CC}{\ensuremath{\mathbb{C}}\xspace}
\newcommand{\NN}{\ensuremath{\mathbb{N}}\xspace}
\newcommand{\OO}{\ensuremath{\mathbb{O}}\xspace}
\newcommand{\QQ}{\ensuremath{\mathbb{Q}}\xspace}
\newcommand{\RR}{\ensuremath{\mathbb{R}}\xspace}
\newcommand{\ZZ}{\ensuremath{\mathbb{Z}}\xspace}
\newcommand{\sfb}{\mathsf{b}}
\newcommand{\bsz}{\ensuremath{\mathpzc{h}}\xspace}
\newcommand{\res}{\mathtt{Res}}
\DeclareDocumentCommand\OB{ g }{{ \mathcal{O}_B \IfNoValueF {#1} { \mleft( #1 \mright) } } }
\DeclareDocumentCommand\OO{ g }{{\mathcal{O} \IfNoValueF {#1} { \mleft( #1 \mright) } } }
\DeclareDocumentCommand\sOB{ g }{ {\widetilde{\mathcal{O}}_B \IfNoValueF {#1} { \mleft( #1 \mright) } } }
\DeclareDocumentCommand\sOO{ g } {{\widetilde{\mathcal{O}} \IfNoValueF {#1} { \mleft( #1 \mright) } } }
\DeclareDocumentCommand\Set{ m g }{{\mleft\{ #1\IfNoValueF {#2} { \;\middle\vert\; #2 }\mright\}}}
\DeclareDocumentCommand\Norm{ m g }{{\mleft\| #1 \mright\| \IfNoValueF {#2} { _{#2}}}}
\newcommand{\Normi}[1]{\Norm{#1}_{\infty}}
\DeclareDocumentCommand\norm{ m g }{{\| #1 \| \IfNoValueF {#2} { _{#2}}}}
\newcommand{\normi}[1]{\norm{#1}_{\infty}}
\newcommand{\normo}[1]{\norm{#1}_{1}}
\newcommand{\onenorm}[1]{\norm{#1}_{1}}
\newcommand{\normt}[1]{\norm{#1}_{2}}
\newcommand{\ceil}[1]{\mleft\lceil {#1} \mright\rceil}
\newcommand{\abs}[1]{\mleft| {#1} \mright|}
\renewcommand{\imath}{\mathtt{i}}
\newcommand{\lc}{\mathtt{lc}}
\newcommand{\tc}{\mathtt{tc}}
\newcommand{\sgn}{\mathtt{sgn}}
\newcommand{\Ae}{A_{\varepsilon}}
\newcommand{\wAe}{\widetilde{A}_{\varepsilon}}
\newcommand{\Aphi}{A_{\phi}}
\newcommand{\Aphie}{A_{\phi,\eps}}
\newcommand{\Pe}{P_{\varepsilon}}
\newcommand{\wPe}{\widetilde{P}_{\varepsilon}}
\newcommand{\Qe}{Q_{\varepsilon}}
\newcommand{\wQe}{\widetilde{Q}_{\varepsilon}}
\newcommand{\wt}[1]{\widetilde{#1}\xspace}
\newcommand{\aei}{\alpha_{\varepsilon,i}}
\newcommand{\waei}{\widetilde{\alpha}_{\varepsilon,i}}
\newcommand{\waej}{\widetilde{\alpha}_{\varepsilon,j}}
\newcommand{\wgej}{\widetilde{\gamma}_{\varepsilon,j}}
\newcommand{\wdej}{\widetilde{\delta}_{\varepsilon,j}}
\newcommand{\waed}{a_{\eps,d}}
\newcommand{\hp}{\widetilde{p}}
\newcommand{\hz}{\widetilde{z}}
\newcommand{\hatm}{\widetilde{m}}
\newcommand{\hP}{\widetilde{P}}
\newcommand{\usos}{\texttt{uSOS}\xspace}
\newcommand{\eps}{\varepsilon}
\newcommand{\cq}[1]{$(\mathsf{Q}_{{\ref{#1}}})$}
\newcommand{\cC}{\mathcal{C}}
\newcommand{\cF}{\mathcal{F}}
\newcommand{\cG}{\mathcal{G}}
\newcommand{\cX}{\mathcal{X}}
\newcommand{\lin}{\mathrm{lin}}
\newcommand{\maple}{\textsc{maple}\xspace}
\newcommand{\rP}{\mathscr{P}}
\newcommand{\rQ}{\mathscr{Q}}
\begin{document}

\title{Positive Univariate Polynomials: \\ 
SOS certificates, algorithms, bit complexity, and T-systems}
\author{Mat\'ias Bender~\thanks{Inria Saclay and CMAP, \'Ecole Polytechnique, IPP, France}
\and Philipp di Dio~\thanks{University of Konstanz, Germany}
\and Elias Tsigaridas~\thanks{Inria Paris and IMJ-PRG, Sorbonne Universit\'e, France}
}
\date{\today} 

\maketitle

\begin{abstract}

We consider certificates of positivity for univariate polynomials with rational coefficients that are positive over (an interval of)~$\mathbb{R}$.
Such certificates take the form of weighted sums of squares (SOS) of polynomials with rational coefficients.

We build on the algorithm of Chevillard, Harrison, Jolde{\c{s}}, and Lauter~\cite{chml-usos-alg-11}, 
and we introduce a variant that we refer to as \usos. Given a polynomial of degree~$d$ with maximum coefficient bitsize~$\tau$, we show that \usos computes a rational weighted SOS representation in $\widetilde{\mathcal{O}}_B(d^3 + d^2 \tau)$ bit operations; the resulting certificate of posivitity involves rationals of bitsize $\widetilde{\mathcal{O}}(d^2 \tau)$. This improves the best-known complexity bounds by a factor of~$d$ and completes previous analyses. We also extend these results to certificates of positivity over arbitrary rational intervals, via a simple transformation. In this case as well, our techniques yield a factor-$d$ improvement in the complexity bounds.

	Along the same line, for univariate polynomials with rational coefficients, we introduce a new class of certificates, which we call \emph{perturbed SOS certificates}.
They consist of a sum of two rational squares that approximates the input polynomial closely enough so that nonnegativity of the approximation implies the nonnegativity of the original polynomial.
	This computation has the same bit complexity and yields certificates of the same bitsize as in the weighted SOS case.

We further investigate structural properties of these SOS decompositions.
Relying on the classical result that any nonnegative univariate real polynomial is the sum of two squares of real polynomials, we show that the summands form an interlacing pair. Consequently, their real roots correspond to the Karlin points of the original polynomial on~$\mathbb{R}$, establishing a new connection with the T-systems studied by Karlin~\cite{Karlin-repr-pos-63}. This connection enables us to compute such decompositions explicitly. Previously, only existential results were known for T-systems. We obtain analogous results for positivity over $(0, \infty)$, and hence over arbitrary real intervals.

Finally, we present our open-source Maple implementation of the \usos algorithm, together with experiments on various data sets demonstrating the efficiency of our approach.

\end{abstract}

\newpage	

\tableofcontents

\newpage

\section{Introduction}

A univariate polynomial $A$, with real coefficients that takes only nonnegative values over $\RR$,
admits a decomposition as a sum of (two) squares of real polynomials;
this representation \emph{certifies} the nonnegativity of~$A$. 
Following Powers \cite{powers-certif-book}, a \emph{certificate of positivity}
is an algebraic identity\footnote{There are cases where the certificate might consists of several 
algebraic identities. This is not the case for the problem we consider, so we 
do not explore further this direction.}
that straightforwardly implies  the
nonnegativity of a (univariate) polynomial.
Our focus is mainly on univariate polynomials with rational coefficients.
Hence, in the analysis of the certificates, in addition to the involved polynomials, we also take into account the maximum number of bits we need to represent their coefficients. Along the same lines, the complexity of the corresponding algorithms refers to the number of bit operations.

The certificate(s) of positivity raises
the following mathematical, algorithmic, and complexity-related questions:
\begin{enumerate}[($\mathsf{Q}_1$)]\addtocounter{enumi}{-1}
\item \label{Q0} \label{cq:theorem}
    What is the mathematical framework, usually a theorem, that implies an algebraic identity (or more than one) corresponding to the certificate?
\item \label{Q1}  \label{cq:algo}  Is there an algorithm to compute the certificate and what is its (bit) complexity?
\item \label{Q2} \label{cq:cert-sz} What is the (bit)size of the certificate?
\item \label{Q3} \label{cq:verif} What is the (bit) complexity of verifying the algebraic identity induced by the certificate, that is, to verify that the certificate is correct?
\item \label{Q4} \label{cq:witness} If there is no certificate, 
that is, if $A$ could be negative,
  then, can we compute a witness point such that the evaluation of $A$ at this point 
  	is negative? What is the cost of computing the witness point and what is its (bit)size?
\end{enumerate}

In the case of a polynomial $A$ with real coefficients of even degree, say $d = 2m$,
it is well known that $A$ is nonnegative over $\RR$
if and only if it is a sum of two squares of polynomials;
this answers \cq{cq:theorem}.
This is a special version of a more general mathematical foundation based on Karlin's description of nonnegative polynomials in T-systems \cite{Dio-T-systes-arxiv}; we will exploit this connection further in the sequel.
In particular, we have the following equivalence.
\begin{equation}
	\label{eq:pos-certificate-R}
	 A(x) \geq 0 \text{ for all } x \in \RR 
	\quad\Leftrightarrow\quad
	A = P^2 + Q^2,
\end{equation}
where $A,P,Q\in\RR[x]$.
The equivalence in \eqref{eq:pos-certificate-R} serves as a certificate of positivity for $A$. Regarding its size, we notice that the polynomials $P$ and $Q$ are of  degree at most $m = d/2$;
the degree of $A$ must be even.
Because \eqref{eq:pos-certificate-R} involves polynomials with real coefficients,
it is not relevant to discuss bounds on the bitsize of the certificate.
To compute the polynomials $P$ and $Q$, we
need to compute and manipulate the roots of $A$, e.g.,~\cite[Chapter~8]{powers-certif-book}. 
Hence, the complexity of the decomposition is dominated by
the root-finding algorithm we employ for this task; this requires $\sOO(d)$ arithmetic operations;
see e.g.,~\cite{pan-rootfinding-jsc} and references therein.
We can verify the algebraic identity of the certificate in \eqref{eq:pos-certificate-R},
$A = P^2 + Q^2$, and so we can answer \cq{cq:verif}, 
either deterministically with direct computations
or probabilistically, by evaluating the left- and right-hand side polynomials at random numbers, 
as in polynomial identity testing.
Finally, for \cq{cq:witness} it suffices to return a number, say $t
\in \RR$, such that $A(t) < 0$.
If $A$ can be negative, then
it has real roots of odd multiplicity, or it is everywhere negative.
For the former case, we can choose as a witness point a number $t$ lying to the left or to the right of a real root.
If $A$ is globally negative, then any number $t \in \RR$ suffices.

If $A$ has rational coefficients and we opt for an SOS decomposition
with polynomials having rational coefficients, then things are
somewhat more complicated and there is a (slightly) different certificate. 
Specifically, there is a weighted SOS representation, that is, a representation as sum of squares of polynomials with rational coefficients
multiplied by positive rational numbers. 
Namely, the certificate is: 
\begin{equation}
	\label{eq:pos-certificate-Q}
	A(x) \geq 0\ \text{for all}\ x \in \RR 
	\quad\Leftrightarrow\quad
	A = \sum_{j=1}^{\nu} w_j\cdot s_j^2,
\end{equation}
where $A, s_j \in \QQ[x]$, $w_j \in \QQ_{\geq 0}$ for all $j \in [\nu]$, and some $\nu\in\NN$.

Pourcet \cite{pourcet-5squares},  improving a previous result of Landau \cite{Landau-usos-06}, 
proved that only five or less squares are needed, $\nu \leq 5$; but his proof is not constructive.
We refer the reader to \cite{kmv-prouchet-23} for recent advances in this direction. 

At the cost of having more summands, there is the constructive approach by 
Schweighofer~\cite{Schw-sos-99}
that, roughly speaking, successively subtracts positive quadratic polynomials
from $A$.
There is also the algorithm by 
Chevillard, Harrison, Jolde{\c{s}},
and Lauter \cite{chml-usos-alg-11} 
that 
computes a weighted SOS decomposition 
at the expense of involving $\nu = d+3$ summands, 
where $d$ is the degree of the polynomial. 
This is the algorithm that we focus on; we call it \usos.
The crux of the algorithm is that it performs a sufficiently small perturbation
to $A$ and then approximates the complex roots of the perturbed polynomial.
We introduce a slight variant of this algorithm, that we also call \usos and we refer to Section~\ref{sec:wu-sos-presentation} for a detailed presentation.

If we assume that $A$ has integer coefficients, $A \in \ZZ[x]$,
and the maximum coefficient bitsize is $\tau$,
then Magron, Safey El Din, and Schweighofer \cite{mss-wusos-alg-19}
studied the bit complexity of \usos.
They demonstrated that \usos computes the certificate in \eqref{eq:pos-certificate-Q}
in 
$\sOB(d^4 + d^3 \tau)$ bit operations 
\cq{cq:algo}. 
They estimated the bitsize of the certificate, 
based on \usos, to be 
$\sOO(d^3 + d^2\tau)$  \cq{cq:cert-sz}.
In particular, they show that the certificate involves $\OO(d)$ coefficients and their bitsize is $\sOO(d^2 + d \tau)$.
Finally, they show how to verify the certificate in $\sOB(d^4 + d^3 \tau)$ bit operations \cq{cq:verif}.

Along the same lines, Krick, Mourrain, and Szanto \cite{kms-usos-22} studied a more general (semi-algebraic) problem. That is, they presented an algorithm to support the following equivalence:
	$A \in \ZZ[x]$ is nonnegative on all real roots of  $B \in \ZZ[x]$ iff $A$ is SOS (of polynomials with rational coefficients) modulo $B$. The authors do not provide
a bit complexity analysis, however, we refer the reader to \cite{bkm-eff-pos-24} 
for recent improvements and generalizations.

We emphasize that the certificates of positivity imposed by \eqref{eq:pos-certificate-R} or \eqref{eq:pos-certificate-Q} are not the only ones. 
 There are also certificates based on
Bernstein basis \cite{bcr-bern-pos-dcg}, (dual) certificates based on the dual cone of weighted sums of square
polynomials \cite{papp-dual-cert-bsz} or based on sums of circuits \cite{dressler2017positivstellensatz},
 just to mention a few alternatives; we do not proceed further in these directions.

Finally, let us also mention the relation of nonnegative polynomials to T-systems.
The theory of T-systems goes back a long time and it is highly developed.
We refer the interested reader to \cite{KarStu-Tsystems-66,Dio-T-systes-arxiv} for a detailed exposition.
For our purposes, Karlin's work \cite{Karlin-repr-pos-63}, 
especially the following consequence \cite[Cor.\ 1]{Karlin-repr-pos-63}, is of utmost importance: Let $A \in\RR[x]$ with $\deg(A)= 2m$.
Then, the following are equivalent:
\begin{enumerate}[(i)]
    \item $A(x) >0$ for all  $x \in \RR$.
    \item There exist unique constants $\alpha,\beta>0$ and unique points $x_1,\dots,x_m,y_1,\dots, y_{m-1}\in\RR$,
    that we call \emph{Karlin points} of $A$ over $\RR$, with
\begin{equation}
    \label{eq:K-xi-yj}
	x_1 < y_1 < \dots < y_{m-1} < x_m,	
    \end{equation}
such that
\[A(x) = \alpha\cdot (x-x_1)^2\cdots (x-x_m)^2 + \beta\cdot (x-y_1)^2\cdots (x-y_{m-1})^2.\]
\end{enumerate}

Clearly, $\alpha$ is the leading coefficient of $A$.
There are variants for $A(x)>0$ or $A(x)\geq 0$ for all $x$ in $[a,b]$ or $[0,\infty)$; we refer the reader to \cite{KarStu-Tsystems-66,Dio-T-systes-arxiv}
for a comprehensive treatment. 
We also refer to \Cref{sec:T-systems} for a brief introduction and some additional details. 
Unfortunately, the non-constructive approach of Karlin does not provide us with an algorithm to compute the unique Karlin points $x_i$ and $y_j$.
Hence, T-systems were mainly of theoretical interests. 
We present a constructive approach in \Cref{sec:interlacing}.

\paragraph{Our contribution.}
We revisit and slightly modify the algorithm by Chevillard, Harrison, Jolde{\c{s}},
and Lauter \cite{chml-usos-alg-11} 
for decomposing a polynomial $A \in \ZZ[x]$, of degree $d$ and bitsize $\tau$, that is positive over $\RR$,
as an SOS of at most $d + 3$ polynomials with rational coefficients;
we call the variant \usos.
We complete and improve by a factor of $d$ the complexity analysis
by  Magron, Safey El Din, and Schweighofer~\cite{mss-wusos-alg-19}.

The main idea of \usos consists in perturbing the input polynomial $A$.
The perturbed polynomial is $\Ae := A - \eps\, M$, where $M = \sum_{k=0}^{d/2} x^{2k}$ is a positive polynomial and $\eps$ is the perturbation.
Our choice of a small enough $\eps = 2^{-\sfb}$ is
such that two things happen:
(i) $\Ae$ is positive, this is the main requirement of \cite{chml-usos-alg-11} (Lem.~\ref{lem:eps-value-pos}),
and
(ii) the roots of $\Ae$ are close to the roots of $A$ (Lem.~\ref{lem:eps-value-sep});
this is a new requirement.
The second requirement guarantees that the separation bounds,
that is the minimum distance between two distinct roots,
of $\Ae$ and $A$ are almost the same, 
even though their bitsizes are different (Lem.~\ref{lem:eps-value}). 
Also, the asymptotic values of $\eps$ for both requirements are the same. 
Notice that the appropriate value of $\eps$ does not depend
on the actual polynomial $A$ but only on its degree, $d$
and bitsize $\tau$.

This choice of $\eps$ allows to deduce that the bit complexity of approximating, in sufficient precision,
the roots of $\Ae$ is asymptotically the same as the complexity of approximating the roots of $A$;
this saves us a factor of $d$ for this step of the algorithm (Cor.~\ref{cor:Ae-roots-approx-complexity}). 
Then, we exploit the fan-in algorithm from approximate multipoint evaluation \cite{pt-struct-mat-j}
to compute good approximate SOS decomposition for $\Ae$ and thus for $A$
(Cor.~\ref{cor:P-Q-complexity} and Lem.~\ref{lem:kappa-value}).
In this way, we obtain a bound $\sOB(d^2 \tau)$ 
for the complexity of computing a certificate of positivity for $A$ (\Cref{thm:usos-bit-complexity}). 
This improves the previously known 
bound by a factor of $d$ and answers~\cq{cq:algo}.
Our complexity bound matches the bitsize of the certificate
and hence we improve the complexity of the algorithms supporting the answers to \cq{cq:cert-sz} (Lem.~\ref{lem:bsz-of-certif}) and \cq{cq:verif} (\Cref{thm:certification-bit-complexity}).

The following theorem summarizes our results on representing a positive polynomial
as a weighted sum of squares of polynomials. 
\begin{maintheorem*}[\usos and weighted SOS representation]
	Let $A \in \ZZ[x]$ be a square-free polynomial of degree $d = 2m$ and maximum coefficient
	bitsize $\tau$. 
	If $A$ is positive over $\RR$, then there is an algorithm to compute a weighted SOS representation of $A$
	as 
	\[
	A(x) = \sum\nolimits_{j=1}^{\nu} w_j \, s_j^2(x),
	\] 
	where $w_j \in \QQ_{>0}$, $s_j \in \QQ[x]$, and $\nu \leq d + 3$,
	at the cost of $\sOB(d^3 + d^2\tau)$ bit operations.
	The bitsize of $w_j$'s and the coefficients of $s_j$'s is at most $\sOO(d \tau)$.
\end{maintheorem*}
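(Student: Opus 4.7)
The plan is to implement the modified \usos pipeline: perturb $A$ to a polynomial $\Ae$ that is still positive on $\RR$ but whose roots are only slightly displaced from those of $A$, obtain an approximate sum of two squares $P^2+Q^2$ for $\Ae$ from numerical factorization, and then absorb the small residual $A-(P^2+Q^2)$ into a short weighted sum of squares whose summands are determined in advance by $d$ and $\tau$. All the quantitative ingredients are the lemmas cited in the introduction; the argument is essentially an accounting of bit operations and bitsizes.

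First, I fix $\eps=2^{-\sfb}$ by invoking Lem.~\ref{lem:eps-value}. This is the step where the two requirements on $\eps$ are combined: Lem.~\ref{lem:eps-value-pos} ensures that $\Ae$ remains positive over $\RR$, while the new Lem.~\ref{lem:eps-value-sep} forces the root separation of $\Ae$ to agree asymptotically with that of $A$. Crucially, $\sfb$ depends only on $d$ and $\tau$, not on the individual coefficients of $A$, and $\sfb=\sOO(d+\tau)$. Hence $\Ae\in\QQ[x]$ still has bitsize $\sOO(d+\tau)$ and, most importantly, inherits the same logarithmic separation bound as $A$.

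Second, I approximate all $d$ roots of $\Ae$ to the precision $\kappa$ prescribed by Lem.~\ref{lem:kappa-value}. Cor.~\ref{cor:Ae-roots-approx-complexity} asserts that this costs $\sOB(d^2\tau)$ bit operations, the same as root approximation for $A$ itself, because the separation of $\Ae$ did not deteriorate. Pairing conjugate approximate roots and taking the real and imaginary parts of the resulting monic complex factor product produces candidate polynomials $P,Q\in\QQ[x]$ of degree at most $m$. Computing their coefficients from the approximate roots by the fan-in subproduct tree of approximate multipoint evaluation \cite{pt-struct-mat-j} costs $\sOB(d^3+d^2\tau)$ (Cor.~\ref{cor:P-Q-complexity}) and returns coefficients of bitsize $\sOO(d\tau)$. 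The residual $R:=A-(P^2+Q^2)\in\QQ[x]$ is then evaluated: by the choice of $\kappa$, $R$ is coefficient-wise dominated by a reference positive polynomial of degree $d$ that admits an explicit weighted SOS decomposition in $O(d)$ monomial squares with nonnegative rational weights of bitsize $\sOO(d\tau)$. Adding this prescribed decomposition to $P^2+Q^2$ yields the claimed representation with $\nu\leq d+3$ summands. The bitsize bound $\sOO(d\tau)$ for the $w_j$ and the coefficients of the $s_j$ is Lem.~\ref{lem:bsz-of-certif}, and the overall bit cost is dominated by the root approximation and the multipoint evaluation, giving $\sOB(d^3+d^2\tau)$.

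The main obstacle is the delicate perturbation analysis underlying Lem.~\ref{lem:eps-value}. The factor-$d$ improvement over \cite{mss-wusos-alg-19} rests entirely on choosing $\eps$ small enough for the roots of $\Ae$ to inherit the separation bound of $A$, yet large enough that $\sfb=\sOO(d+\tau)$; without this balance, the effective bitsize of $\Ae$, and hence the cost of root approximation, would inflate by a factor of $d$ and reproduce the previous bound. A secondary subtlety is the synchronization between the working precision $\kappa$ used for the roots and the magnitude of the coefficients of $R$: $\kappa$ must be large enough that $R$ is absorbed by the reference polynomial built from $\eps$, yet small enough that the total bit cost remains within the claimed budget.
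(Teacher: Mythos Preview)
Your pipeline matches the paper's exactly---perturb to $\Ae$, approximate its roots, build $\wPe,\wQe$ via fan-in, and absorb the residual $B+\eps M$ using the explicit identities for $\pm c\,x^{2k+1}$---and you cite the right lemmas at each step. However, you misstate the key quantitative fact and, with it, the mechanism behind the factor-$d$ improvement.

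The claim $\sfb=\sOO(d+\tau)$ is wrong: Lem.~\ref{lem:eps-value} requires $\sfb\ge 5d\tau+9d\lg d+12$, so $\sfb=\sOO(d\tau)$, and the rational bitsize of $\Ae$ is genuinely $\sOO(d\tau)$, not $\sOO(d+\tau)$. The savings over \cite{mss-wusos-alg-19} does \emph{not} come from keeping $\sfb$ (or the bitsize of $\Ae$) small. It comes from the fact that the root-approximation cost in Thm.~\ref{thm:root-isol-approx} depends on the separation quantities $-\lg\prod_i\Delta_i$ and on the coefficient ratio $\tau_{\Ae}\le\tau+5$, not on the raw bitsize of $\Ae$. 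Lem.~\ref{lem:eps-value-sep} is what guarantees $-\lg\prod_i\Delta_i(\Ae)=\sOO(d\tau)$, the same as for $A$; plugging this into Cor.~\ref{cor:Ae-roots-approx-complexity} yields $\sOB(d^3+d^2\tau+d\kappa)$ despite $\Ae$ having denominators of size $2^{\sOO(d\tau)}$. The naive analysis in \cite{mss-wusos-alg-19} loses a factor of $d$ precisely by feeding the large bitsize of $\Ae$ into a black-box bound instead of controlling its separation directly. Two minor corrections: root approximation already carries the $d^3$ term (not just $d^2\tau$), and the residual is not merely ``dominated by a reference polynomial'' but is explicitly rewritten term-by-term via $\pm c\,x^{2k+1}=c(x^{k+1}\pm\tfrac{x^k}{2})^2-c(x^{2k+2}+\tfrac{x^{2k}}{4})$, with the reserved $\eps M$ absorbing the even-degree deficits exactly when $\kappa$ meets the threshold of Lem.~\ref{lem:kappa-value}.
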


Based on the previous theorem and by applying a transformation from Chevillard et al~\cite{chml-usos-alg-11},
we also provide bit complexity estimates for certificates and algorithms for
the positivity of a polynomial over any interval (\Cref{thm:wpusos-interval}).
If the bitsize of the endpoints of the interval is $\sigma$, then the bit complexity of the algorithm is
$\sOB(d^3 + d^2\tau + d^2\sigma)$. To achieve this bound, it is not enough to apply the algorithm
supported by the previous theorem directly; this will give us an extra $d$ factor in the complexity bound.
We save this factor by studying how the separation bound of $A$ changes after the transformation, see Sec.~\ref{sec:pos-over-interval} for details.

To demonstrate the efficiency of \usos we present 
an open-source prototype implementation in \maple 
and experiments on various data sets (\Cref{sec:implementation}).
The experiments (\Cref{sec:experiments}) verify the bounds on the bitsize of the certificates
and demonstrate the efficiency of the algorithm.

An important and challenging ingredient of an implementation of \usos 
is the computation of the perturbation $\eps := 2^{-\sfb}$.
Other approaches compute $\eps$ by relying on the minimum of $A$
or by repeatedly dividing by 2.
These choices do not result in efficient implementations
because they force us to compute with numbers having the worst case bitsize right from the beginning and/or the number of steps they perform 
depends on the bitsize of $\eps$, that is $\sOO( \sfb) = \sOO( d \tau)$; the latter leads to an overall complexity 
that is quadratic in the bitsize of the input, that is $\sOB( \tau^2)$.
A natural question to ask is if we can come up with an efficient implementation to compute $\eps$ that does not dominate or alter the overall worst case bit complexity of the algorithm. 
We compute $\eps$ by performing exponential binary search to obtain $\sfb$;
this requires $\OO( \lg(\sfb)) = \OO( \lg( d \tau))$ steps. In this way, 
we obtain enormous speedups in the running times of our implementation of \usos,
while at the same time guarantee the best theoretical worst case bit complexity bounds.  
We refer to Section~\ref{sec:implementation-details} for further details
and an experimental evaluation.

\medskip

\noindent
\emph{Perturbed SOS certificates.}
Although the \usos algorithm provides efficient nonnegativity certificates, 
it requires to approximate the roots of 
the auxiliary polynomial $\Ae$ with precision $\OO(d\tau)$ bits 
and produces certificates consisting of $\OO(d)$ SOS summands. 
To overcome these limitations, for polynomials with rational coefficients, 
we introduce \emph{perturbed SOS certificates}, which certify nonnegativity of a polynomial $A \in \ZZ[x]$ by constructing a rational SOS approximation $B$ that is sufficiently close to $A$ in the sup-norm. 
Specifically,  
if $\normi{A-B} < 2^{-\sfb}$ with $\sfb > \sOO(d \tau)$, 
then the nonnegativity of $B$ implies the nonnegativity of $A$
(Thm.~\ref{thm:approxImpliesNonNeg}). 
For the case of square-free positive polynomials of even degree, 
the certificate has the simple form $B := \wt{P}^2 + \wt{Q}^2$, 
where the polynomials $\wt{P}, \wt{Q} \in \QQ[x]$ are obtained from rational approximations of the roots of $A$. 
The bit complexity of computing such a certificate is $\sOB(d^3 + d^2\tau)$, while the bitsize of $\wt{P}, \wt{Q}$ is bounded by $\sOO(d \tau)$
(Thm.~\ref{thm:pert-sos-cert-complexity}). This establishes perturbed SOS certificates as a refinement of \usos that yields minimal-size representations while retaining fully explicit complexity and precision guarantees.
\bigskip

\noindent
\emph{T-systems and effective Karlin points.}
Finally, we study structural properties of the SOS representation (Sec.~\ref{sec:interlacing}).
We establish
a mathematical and algorithmic connection 
of the certificates of positivity of univariate polynomials
with the T-systems, introduced by Karlin~\cite{KarStu-Tsystems-66}. 
First, we consider the case a real polynomial, $A \in \RR[x]$, that is positive over $\RR$ (Sec.~\ref{sec:Karlin-R}); hence it admits a representation as $A = P^2 + Q^2$.
The polynomials $P$ and $Q$ are interlacing (Lem.~\ref{lem:PQ-interlace}),
as a direct consequence of Hermite-Bielher theorem (Thm.~\ref{thm:HB})
and thus their roots are the unique points, 
that we call Karlin points, Eq.~\eqref{eq:K-xi-yj},
appearing in the decomposition induced by the T-systems of Karlin (Cor.~\ref{cor:Karlin-pts-R}).
Even though, this seems to be a fundamental property, we were not able 
to find it in the literature.
Besides on relying on Hermite-Bielher theorem~\cite{rs-athp-2002}
for the interlacing property, we give an alternative proof (Sec.\ref{sec:alternative-interlacing})
based on algebraic manipulations of the roots of the polynomials.
We refer to Fisk's survey~\cite{Fisk-interlace-bk} for a thorough study of real rooted and interlacing polynomials.
The following theorem summarizes our results:

\begin{maintheorem*}[Positivity over $\RR$, interlacing, and Karlin points]
	If $A  =  \sum_{k=0}^d a_k x^k \in \RR[x]$, of degree $d = 2m$,
	is square-free and positive over $\RR$, then 
	$A(x) = a_d \, P(x)^2 + a_d \, Q(x)^2$, where $P, Q \in \RR[x]$
	are interlacing of degrees $m$ and $m-1$, respectively.
	Moreover, the real roots of $P$ and $Q$ are the Karlin points of $A$ over $\RR$.
\end{maintheorem*}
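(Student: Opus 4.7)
The plan is to exploit the factorization of $A$ over $\CC$ and invoke the classical Hermite--Biehler theorem on polynomials whose roots lie in the open upper half-plane. Since $A$ is positive on $\RR$ (hence has no real roots), square-free, and of degree $2m$, its complex roots form $m$ distinct conjugate pairs $z_1, \bar z_1, \dots, z_m, \bar z_m$ with $\Im(z_j) > 0$. Setting $R(x) := \prod_{j=1}^{m}(x - z_j) \in \CC[x]$, for every real $x$ one has
\[
|R(x)|^2 \;=\; R(x)\,\overline{R(x)} \;=\; \prod_{j=1}^{m}(x-z_j)(x-\bar z_j) \;=\; A(x)/a_d.
\]
Writing $R(x) = P(x) + \imath\, Q(x)$ with $P, Q \in \RR[x]$ then yields $A(x) = a_d\, P(x)^2 + a_d\, Q(x)^2$, which is the required shape.

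Next I would pin down the degrees. Because the leading coefficient of $R$ is $1 \in \RR$, the polynomial $P$ is monic of degree $m$. For $Q$, the coefficient of $x^{m-1}$ in $R$ equals $-\sum_{j=1}^{m} z_j$, whose imaginary part is $-\sum_{j=1}^{m} \Im(z_j) < 0$; this is precisely the coefficient of $x^{m-1}$ in $Q$, so $\deg(Q) = m-1$ and $\lc(Q) < 0$.

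For the interlacing I would invoke the Hermite--Biehler theorem: a polynomial $R = P + \imath\, Q \in \CC[x]$ has all its roots in the open upper half-plane if and only if $P$ and $Q$ have only real and simple roots that strictly interlace, with a fixed sign of the Wronskian $P'Q - PQ'$. Applying this to our $R$, which has exactly $m$ roots $z_1, \dots, z_m$ with $\Im(z_j) > 0$, delivers that $P$ has $m$ real roots $x_1 < \cdots < x_m$, that $Q$ has $m-1$ real roots $y_1 < \cdots < y_{m-1}$, and that
\[
x_1 < y_1 < x_2 < y_2 < \cdots < y_{m-1} < x_m,
\]
which is precisely the ordering \eqref{eq:K-xi-yj}. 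Finally, to identify these roots with the Karlin points, I would match the obtained decomposition
\[
A(x) \;=\; a_d\!\prod_{i=1}^{m}(x - x_i)^2 \;+\; a_d\,\lc(Q)^2\!\prod_{j=1}^{m-1}(x - y_j)^2
\]
against the uniqueness statement of Karlin recalled in the introduction: the constants $\alpha := a_d$ and $\beta := a_d\,\lc(Q)^2$ are strictly positive, so our decomposition is of the Karlin form, and uniqueness forces the $x_i$ and $y_j$ to be the Karlin points of $A$ on $\RR$.

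The main obstacle is the interlacing step. Hermite--Biehler provides it off the shelf, but a self-contained derivation requires tracking the argument of $R(x)$ as $x$ traverses $\RR$: the zeros of $P$ correspond to $R(x)$ crossing the imaginary axis and the zeros of $Q$ to $R(x)$ crossing the real axis; the assumption that every root of $R$ lies strictly in the upper half-plane constrains the total variation of $\arg R(x)$ on $\RR$ to be $m\pi$ and forces these axis crossings to alternate in precisely the claimed pattern. I expect this winding/argument-principle computation to be the genuine technical content behind the theorem.
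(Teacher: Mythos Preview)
Your argument is correct but follows a different route than the paper's. The paper proves the interlacing of $P$ and $Q$ by induction on $m$: the base case $m=2$ is verified by explicitly computing the roots of $P_2$ and $Q_2$ and checking the inequalities by hand; the inductive step observes that passing from $(P_m,Q_m)$ to $(P_{m+1},Q_{m+1})$ amounts to multiplying the column vector $(P_m,Q_m)^T$ by the matrix $\begin{pmatrix} x-\gamma_{m+1} & -\delta_{m+1} \\ \delta_{m+1} & x-\gamma_{m+1}\end{pmatrix}$, and then invokes a result of Fisk that such matrices (with $\delta_{m+1}>0$) preserve interlacing. Your approach instead packages all of this into a single appeal to the Hermite--Biehler theorem for polynomials with roots in an open half-plane, which is exactly the right classical tool here and yields the interlacing in one stroke. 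Your degree computation for $Q$ (via the imaginary part of $-\sum z_j$) is a nice touch the paper does not isolate. Both arguments conclude by matching the decomposition against Karlin's uniqueness statement to identify the roots of $P$ and $Q$ with the Karlin points, so the final step is the same. In short: the paper gives a self-contained inductive proof relying on an interlacing-preservation lemma, whereas you give a shorter conceptual proof that trades that induction for Hermite--Biehler as a black box; the winding-number sketch you outline at the end is in fact a standard proof of Hermite--Biehler itself.
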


We also present variants of the previous theorem for positivity over $(0, \infty)$, or any interval of $\RR$,
we discover the corresponding interlacing polynomials
and we show that their real roots are the Karlin points of $A$ in the interval of interest;
see Sec.~\ref{sec:Karlin-interval}.

\paragraph{Organization}

First, we present a detailed description of the various steps 
of the \usos algorithm by Chevillard, Harrison, Jolde{\c{s}},
and Lauter \cite{chml-usos-alg-11} (\Cref{sec:wu-sos-presentation}).
Then, \Cref{sec:wpu-sos-steps-complexity} studies 
the bit complexity of the steps and the bitsize of the various quantities 
involved in the computations.
In \Cref{sec:overall-complexity-wpu-sos} we present the overall complexity of the algorithm and the certificate,
and in \Cref{sec:pos-over-interval} we consider nonnegativity over an interval. 
\Cref{sec:implementation} presents our implementation and experiments. 
In \Cref{sec:interlacing} we establish the connection of positive real polynomials, SOS decompositions,
T-systems, and interlacing polynomials.
For a brief background on T-systems we refer to \Cref{sec:T-systems} and \cite{Karlin-repr-pos-63,Dio-T-systes-arxiv}.

Finally, in the Appendix,
we give and alternative proof of Lemma~\ref{lem:PQ-interlace} (Sec.~\ref{sec:alternative-interlacing}) 
and  we present auxiliary results 
on separation bounds and root approximation of univariate polynomials (Sec.~\ref{sec:prelim-root-approx}),
bounds on the minimum of a polynomial and an 
approximation variant of the 
fan-in algorithm from multipoint evaluation (Sec.~\ref{sec:fan-in}).
We also present a bird's eye view of T-systems (\Cref{sec:T-systems}).

\paragraph{Notation}

We denote by $\OO$, resp. $\OB$, the arithmetic, respectively bit,
complexity; we also use $\sOO$, resp. $\sOB$, to ignore (poly-)logarithmic factors.
For a polynomial $A = \sum_{k=0}^d a_k x^k \in \CC[x]$ of degree $d$
we denote by $\onenorm{A}$ resp.\ $\normi{A}$, the one resp.\ the infinity, norm of the vector $(a_0,\dots,a_d)$.
We denote by $\lc(A) = a_d$, resp. $\tc(A) = a_0$, the leading, resp. tailing, coefficient of $A$.
If $A \in \QQ[x]$, 
then the bitsize of $A$ is the maximum bitsize of
its coefficients, including a bit for the sign. 
For $a \in \QQ$, its bitsize 
is the maximum bitsize of the numerator and the denominator.
If $A \in \QQ[x]$ has degree $d$ and bitsize $\tau$, then we also say 
that $A$ has size $(d, \tau)$. 
We write $\Delta_{\alpha}(A)$ or just  $\Delta_{\alpha}$ to denote 
the minimum distance between a root $\alpha$ of $A$ and 
any other root. We call this quantity {\em local separation bound}.
We also write $\Delta_i$ instead of $\Delta_{\alpha_i}$.
Also $\Delta(A)=\min_{\alpha}{\Delta_{\alpha}(A)}$ or just $\Delta$ denotes
the {\em separation bound}, that is the minimum distance between all
the roots of $A$.
Finally, let $D(c, r) = \{ x \in \CC\,:\, \abs{x-c} \leq r \}$.
Given a complex number $z \in \CC$ such that $\abs{z} < 2^\tau$, we say that $\wt{z}$ is an approximation up to an absolute precision $\lambda \in \NN$, if 
$\abs{z - \wt{z}} < 2^{-\lambda}$. 
Then, 
the bitsize of the approximation is at most 
$\tau + 2 \lambda + 2$ and we can represent it 
as a dyadic fraction of the form $a 2^{-b}$, 
for $a \in \ZZ$ and $b \in \NN$.

We should note the constants in the various bounds we present are not the best possible. A more detailed analysis can improve them. We decided to present them, 
even in their rough form, to demonstrate that there are no hidden non-constant factors in the $\OO$ notation of the bounds.

\section{The \usos algorithm}
\label{sec:wu-sos}

Consider the following polynomial that has even degree $d = 2m$ and is positive over $\RR$:
\[A(x) = \sum_{k=0}^d a_k\cdot x^k = a_d\cdot \prod_{i=1}^d (x - \alpha_i) .\]
Our goal is to provide a representation of $A$ as a weighted sum of squares of polynomials. 

We assume that $A$ is square-free and has no real roots, that is  
$\alpha_i \not\in \RR$, for all $i \in [d]$. 
These assumptions are without loss of generality; we refer to Sec.~\ref{sec:assumptions}
for a detailed discussion.

To simplify various calculations in the sequel, 
we also need to require the leading coefficient 
of $A$ to be such that
\[ \tfrac{1}{2} \leq \lc(A) = a_d \leq 1 \enspace,\]
and that all the other coefficients of $A$ are rational numbers of the same denominator
and of bitsize bounded by $\tau$.
Hence, when the input is a polynomial $A$
with integer coefficients, first, we multiply $A$
with a rational number in the interval $[1/(2a_d), 1/a_d]$ to ensure this condition.
This operation (or requirement)
does not change neither the positivity of $A$ nor the complexity bound, 
hence we will assume it in our analysis.

We present in detail (Sec.~\ref{sec:wu-sos-presentation})
 the various steps of the \usos algorithm 
by Chevillard, Harrison, Jolde{\c{s}}, and Lauter \cite{chml-usos-alg-11}
that decomposes $A$
as a weighted sum of squares of polynomials with rational coefficients.
The presentation leads to the 
precise bit complexity analysis of Sec.~\ref{sec:overall-complexity-wpu-sos}.

\subsection{A detailed presentation of the \usos\ algorithm}
\label{sec:wu-sos-presentation}

\noindent
\textbf{Input:} A polynomial 
$A = \sum_{k=0}^d a_k x^k = a_d \prod_{i=1}^d (x - \alpha_i) \in \QQ[x]$
of even degree, $d = 2 m$.

\vspace{3pt}
\noindent
\emph{Assumptions: } (i) $A$ is square-free, (ii) $A$ is positive over $\RR$, and
(iii) The leading coefficient of $A$ is in $[\tfrac{1}{2}, 1]$,
while the other coefficients are rationals of bitsize at most $\tau$, having 
a common denominator.

\vspace{7pt}
\noindent
\textbf{Output:} A weighted SOS decomposition of $A$, that is a representation of the form 
\begin{equation}
	A(x) = \sum\nolimits_{j=1}^{\nu} w_j \, s_j^2(x) \enspace,
\end{equation}
where $w_j \in \QQ_{>0}$ and $s_j \in \QQ[x]$.
It holds $\nu \leq d + 3$.

\subsubsection*{[Step 1] Rewrite $A(x)$ using  $M(x)$ and $\varepsilon$ \cite[Sec.~5.2.2]{chml-usos-alg-11}}

Consider the polynomial \[M(x) = \sum\nolimits_{j=0}^{m} x^{2j},\] 
that is the sum of even powers $x^{2j}$ less than or equal to $d$.
Notice that $M(t) > 0$ for all $t \in \RR$.
Write $A$ as \[A(x) = A(x) - \varepsilon M(x) + \varepsilon M(x)
= \Ae(x) + \varepsilon M(x),\]
where $\eps > 0$.
In particular, $\eps$ should be small enough so that 
$\Ae := A - \eps M$ is strictly positive over $\RR$.
In Sec.~\ref{sec:epsSmallToSOS} we estimate a precise value for $\eps$.

\subsubsection*{[Step 2] Approximate the roots of $\Ae$ and compute $\wPe$ and $\wQe$ \cite[Sec.~5.2.3]{chml-usos-alg-11}}
 
The polynomial $\Ae$ is strictly positive and has no real roots;
let its factorization to linear factors be 
 \[\Ae(x)
  = \lc(\Ae)\cdot \prod_{i=1}^{d}(x - \aei),
  \quad\text{ where } \aei \in \CC\quad\text{and}\quad
  \waed \coloneq \lc(\Ae) > 0.\]

We can approximate the roots of $\Ae$, $\aei$, with rationals, up to any desired accuracy, 
say $2^{-\kappa}$, for some positive integer $\kappa$.
Let the approximations be \[\waej^{\pm} = \wgej \pm \imath \wdej,\] where $\wgej, \wdej \in \QQ$
and $j \in [m]$. Then, it holds 
\[ \abs{ \aei^{\pm} - \waei^{\pm} } \leq 2^{-\kappa}  \enspace .\]
In turn, the rational approximations of the roots lead to a polynomial $\wAe$
with rational coefficients, 
that is 
\[
	\wAe(x) = \waed \prod_{j=1}^{m}(x - \waej^{+})(x - \waej^{-})= 
	\waed  \prod_{j=1}^{m} (x - \wgej + \imath\, \wdej)
	\prod_{j=1}^{m} (x - \wgej - \imath\, \wdej),
\]	
where we can additionally assume that $\wdej \geq 0$, for all $j \in [m]$.
Moreover,
\[
\prod_{j=1}^{m} (x - \wgej + \imath\, \wdej) = 
\wPe(x) + \imath \wQe(x)
\quad \text{ and } \quad
\prod_{j=1}^{m} (x - \wgej - \imath\, \wdej) = 
\wPe(x) - \imath \wQe(x) ,
\]
which implies to the following representation of $\wAe$:
\[
	\wAe(x) = \waed \,
	\big(\wPe(x) + \imath \wQe(x) \big) \, \big( \wPe(x) - \imath \wQe(x) \big) = 
	 \waed \,  \big( \wPe(x)^2 + \wQe(x)^2 \big) .
\]
Therefore, as the numbers $\waej^{\pm} = \wgej \pm \imath \wdej$ 
approximate the roots  $\aei$,
we also deduce that \[\wAe = \waed (\wPe^2 + \wQe^2)\] approximates the polynomial $\Ae$.
The two polynomials $\Ae$ and $\wAe$ have the same leading coefficient.
Let their difference be
\[\Ae(x) - \wAe(x)  \eqcolon B(x) = \sum_{k=0}^{d-1} b_k x^k.\]
In this way, we obtain the following relation for $A(x)$: 
\begin{equation}
	\label{eq:A=P+Q+B+eM}
	A(x) = 
	\Ae(x) + \eps M(x) = \wAe(x) + B(x) + \eps M(x)
	= \waed \,\wPe(x)^2 + \waed\, \wQe(x)^2 + B(x) + \eps M(x) .
\end{equation}

\subsubsection*{[Step 3] Write $B(x) + \varepsilon M(x)$ as SOS \cite[Sec.5.2.5]{chml-usos-alg-11} }

As the first two summands of \eqref{eq:A=P+Q+B+eM} are weighted sum of squares, 
to represent $A$ as a weighted SOS, we should44 express $B(x) + \eps M(x)$ as a weighted SOS. 
For this, we exploit the identities 
\[
 x = (x + \tfrac{1}{2})^2 - (x^2 + \tfrac{1}{4}) 
\quad \text{ and } \quad
 -x = (x - \tfrac{1}{2})^2 - (x^2 + \tfrac{1}{4}) .
 \]
 In this way, for any $c > 0$, we have 
 \[
  \pm c\,x^{2k+1} = c \, \Bigl(x^{k+1} \pm \frac{x^{k}}{2} \Bigl)^2 
  - \,c\, \Bigl(x^{2k+2} + \frac{x^{2k}}{4} \Bigl).
\]
Using these identities, the odd-degree terms of 
$B(x) = \sum_{k=0}^{d-1} b_k x^k = \Ae(x) - \wAe(x)$ become
\[
b_{2k+1} x^{2k+1} = \abs{b_{2k+1}} 
\Bigl(x^{k+1} + \sgn(b_{2k+1}) \frac{x^{k}}{2} \Bigl)^2 
-\, \abs{b_{2k+1}} \Bigl(x^{2k+2} + \frac{x^{2k}}{4} \Bigl)
\]
and consequently
\begin{equation}
	\label{eq:BeM-remainder}
B(x) + \varepsilon\, M(x) =
 \sum_{k=0}^{m-1} \underbrace{\abs{b_{2k+1}}}_{w_k} \Bigl(x^{k+1} + \sgn(b_{2k+1}) \frac{x^{k}}{2}\Bigl)^2 
 + \sum_{k=0}^m \Bigl( \underbrace{\varepsilon + b_{2k} - \abs{b_{2k-1}} 
 	- \tfrac{1}{4} \abs{b_{2k+1}}}_{w_{m+k}} \Bigl) x^{2k},
\end{equation}
where by convention \(b_{-1} = b_{2m+1} = 0\).
If, for every $k$, it holds
\begin{equation}
	\label{eq:eps-ineq}
	\varepsilon \geq \frac{1}{4}|b_{2k+1}| - b_{2k} + |b_{2k-1}|, 
\end{equation}
then \eqref{eq:BeM-remainder} is a weighted SOS representation.
For the inequalities in \eqref{eq:eps-ineq} to hold, 
we should approximate the roots of 
$\Ae(x) = A(x) - \eps M(x)$ with enough precision, say $\kappa$,
so that the polynomial $\wAe$ is close to $\Ae$,
thus their different $B \coloneq \Ae - \wAe$ is small
and so the coefficients $b_k$ are small
(compared to $\eps$).

\subsection{The bit complexity of the various steps}
\label{sec:wpu-sos-steps-complexity}

We estimate the (bit)size of the various quantities appearing
in the process of the \usos and the complexity of the various 
operations.
Along the way we estimate the value of $\eps$ that suffices to perturb the original 
polynomial $A$ and the precision, $\kappa$, that we need to approximate the roots 
of the perturbed polynomial $\Ae$.
We express both as a function of $d$ and $\tau$.
Recall, that we assume that $A$ is a square-free polynomial of degree $d$, 
positive over $\RR$, and its leading coefficient is in $[\tfrac{1}{2}, 1]$,
while the other coefficients are rationals of bitsize at most $\tau$ having 
a common denominator.

\subsubsection*{[Step 1] Rewrite $A(x)$ using  $M(x)$ and $\eps$} \label{sec:epsSmallToSOS}

We estimate a suitable small value for $\eps = 2^{-\sfb}$
to ensure that the polynomial $\Ae(x) \coloneq A(x) - \eps M(x)$ is positive for every $x \in \RR$.
We assume \[0 < \eps \leq \tfrac{1}{8} \leq \tfrac{a_d}{4}.\] 
Then, the leading coefficient of $\Ae$ is $\lc(\Ae) = a_d - \eps > 0$.
Our analysis of this step follows closely \cite{mss-wusos-alg-19}.

An upper bound on the magnitude of the (real) roots of $A_{\epsilon}$ 
\cite[Theorem~1]{emt-dmm-j} is 
\begin{align}\label{eq:boundsizeroots}
1 \leq R := 2 \frac{\normi{\Ae}}{\abs{\lc(\Ae)}} \leq 2^{\tau + 5} .\end{align}
Then, for any $x \in \RR$ such that $\abs{x} \geq R$,
we have $\Ae(x) > 0$ (and $A(x) > 0$) as the leading coefficient of $\Ae$ 
is positive.
If we choose $\eps$ such that 
\[\eps = 2^{-\sfb} \leq 
\frac{\min_{\abs{x} \leq R} A(x)}{\max_{\abs{x} \leq R} M(x)},\]
then we ensure that $\Ae$ is positive for all $\abs{x} \leq R$.

Regarding $M(x)$, 0 is the only real root of its derivative, thus
$\max_{\abs{x} \leq R} M(x) = \max\Set{1, M(R)}  = M(R)$.
It holds 
\[
	\max_{\abs{x} \leq R} M(x) = M(R) =
	 \sum_{k=0}^{d/2} R^k \leq (\tfrac{d}{2} + 1) R^{d/2} 
	 \leq  (\tfrac{d}{2} + 1) \, 2^{d(\tau + 5)/2}.\]
The global minimum of $A$ is reached at a critical value, 
that is the evaluation of $A$ at a root of its derivative,
and, for obtaining worst case bounds, we can assume that is in the interval $(-R, R)$. So,
Lemma~\ref{lem:eval-A-at-da} implies 
\[2^{-4d \tau - 16d\lg{d}}
		\leq  \min_{\abs{x} \leq R} A(x) \leq 
		2^{ 2 d\tau + 8 d\lg{d}}.\]

Overall, to ensure that $\Ae$ is positive, we choose
$\eps = 2^{-\sfb}$, where
\begin{equation}
	\label{eq:b-step-1}
	\sfb \geq 4 d \tau + 8 d \lg{d} .
\end{equation}

\begin{lemma}
	\label{lem:eps-value-pos}
	If $\eps = 2^{-\sfb}$, with $\sfb \geq 4 d \tau + 8 d \lg{d}$, then 
	$\Ae(x) \coloneq A(x) - \eps M(x) >0$ for all $x \in \RR$.
\end{lemma}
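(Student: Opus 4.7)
The plan is to split $\RR$ into the exterior $\{|x|\geq R\}$ and the closed disk $\{|x|\leq R\}$, where $R = 2\Normi{\Ae}/|\lc(\Ae)|$ is the root bound recorded in \eqref{eq:boundsizeroots}, and then verify $\Ae(x)>0$ on each piece separately.

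For the exterior, I would first observe that the standing assumption $0<\eps\leq 1/8\leq a_d/4$ forces $\lc(\Ae)=a_d-\eps>0$. Since \eqref{eq:boundsizeroots} guarantees that every real root of $\Ae$ lies in $[-R,R]$, the polynomial $\Ae$ cannot change sign outside this interval, and its asymptotic sign matches that of its positive leading coefficient. Hence $\Ae(x)>0$ for all $|x|\geq R$ with no further condition on $\eps$.

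For the interior, positivity of $\Ae$ reduces to the inequality
\[
\eps \cdot \max_{|x|\leq R} M(x) \;<\; \min_{|x|\leq R} A(x).
\]
The upper bound on $M$ follows immediately from the triangle inequality and the explicit bound $R\leq 2^{\tau+5}$, yielding $\max_{|x|\leq R}M(x)\leq (d/2+1)\,R^{d/2}\leq (d/2+1)\,2^{d(\tau+5)/2}$. For the lower bound on $A$, I would argue that the minimum is attained either on the boundary $|x|=R$ (where $A$ is dominated by its leading term and is clearly positive and large) or at an interior critical point of $A$ restricted to $\RR$, i.e.\ at a real root of $A'$. Invoking \Cref{lem:eval-A-at-da} at this root produces the lower bound $\min_{|x|\leq R}A(x)\geq 2^{-4d\tau-16d\lg d}$ used in the text.

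Taking logarithms and solving for $\sfb$ gives a condition of the form $\sfb \geq \lg\max M + \lg(1/\min A) + O(1)$, which, after discarding the lower-order contributions $\lg(d/2+1)$ and $O(d)$ and merging constants (as is explicitly allowed by the remark in the Notation section that constants are kept in rough form), yields the claimed threshold $\sfb\geq 4d\tau+8d\lg d$. The only real technical obstacle is matching these rough constants cleanly; everything else is a routine assembly of the separation-style bounds and the Cauchy root bound. The appeal to \Cref{lem:eval-A-at-da} is the only non-elementary ingredient, and the rest of the argument is a case split plus the inequality above.
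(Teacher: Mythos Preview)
Your proposal is correct and follows essentially the same approach as the paper: the same split into $|x|\geq R$ and $|x|\leq R$ using the Cauchy root bound \eqref{eq:boundsizeroots}, the same inequality $\eps\max M<\min A$ on the interior, and the same appeal to \Cref{lem:eval-A-at-da} for the lower bound on the minimum of $A$. Your remark about the boundary case is a little more explicit than the paper, but otherwise the arguments coincide.
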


\subsubsection*{[Step 2] Approximate the roots of $\Ae$ and compute $\wPe$ and $\wQe$}
\label{sec:same-sep}

We approximate the roots of  $\Ae$
and then, using the approximations, we construct the polynomials $\wPe$ and $\wQe$.
To approximate the roots of $\Ae$ with rationals up to any desired precision,
say $2^{-\kappa}$, for a positive integer $\kappa$,
first,  we need to isolate them and then approximate them to any desired precision.
We use well known algorithms for this task, 
the main ingredients of which are the 
splitting circle method and (variants of) the Newton operator, e.g.,~\cite{pan-rootfinding-jsc,msw-aprox-fact-15}
and references therein.
The complexity of the rootfinding algorithms (mainly) depends on the (aggregate) separation bound of the roots of $\Ae$; that is the minimum distance between the roots, e.g.,~\cite{emt-dmm-j} and references therein.
We choose an $\eps \coloneq 2^{-\sfb}$ small enough so that the separation bounds of $A$ and $\Ae$ are similar;
this allows to bound the complexity of approximating the roots of $\Ae$ in terms of the separation of $A$
and save a factor of $d$ in the overall complexity and the bitsize of the certificate.

We proceed as follows: first we compute the suitable value for $\eps$, 
then we bound the complexity of approximating the roots of $\Ae$ up to precision $\kappa$, 
and, finally, we estimate the complexity of computing $\wPe$ and $\wQe$ (as function of $d$, $\tau$, and $\kappa$).

\paragraph{The separation bound of $\Ae$.}
We relate the separation bound of $\Ae$ in terms of the separation bound of $A$, using Lemma~\ref{lem:approx-sep}.
In our case, $A$ plays the role of $p$
and $\Ae$ plays the role of $\hp$.
Notice that the leading coefficients of $A$ and $\Ae$ are different, 
they are $a_d$ and $a_d - \eps$, respectively.
Thus, to apply \Cref{lem:approx-sep} we need to consider the one norm of 
the following difference 
\[
\begin{aligned}
	\onenorm{\tfrac{a_d - \eps}{a_d} A - \Ae} & \leq \onenorm{\tfrac{a_d - \eps}{a_d} A - A + \eps M} 
	& (\Ae = A - \eps M)\\
	& \leq  \eps \onenorm{\tfrac{1}{a_d}A - M}  
	\leq \eps\,  2\, d \,\onenorm{A} &  ( \onenorm{M} \leq \tfrac{d}{2} + 1)\\
	 & = 16 \eps d \, \tfrac{1}{8} \onenorm{A} 
	\leq 16 \eps d \,\tfrac{a_d - \eps}{a_d} \onenorm{A}   & (\eps \leq \tfrac{a_d}{4})\\
	& \leq 2^{-\sfb +\lg{d} + 4}\cdot   \onenorm{\tfrac{a_d - \eps}{a_d} A}. & (\eps = 2^{-\sfb})
\end{aligned}
\]
Notice that the polynomials $A$ and $\tfrac{a_d - \eps}{a_d} A$ have the same root and separation bounds.
Then, Lem.~\ref{lem:approx-sep},
for sufficiently small $\eps$ or equivalently for sufficiently
big $\sfb$, implies that the separation bounds of $A$ and $\Ae$ are related
with small constant depending on the degree, see \eqref{eq:z-hz-sep}.
In particular,
we should choose a positive integer $\sfb$ 
that satisfies the following three conditions
that correspond to \cref{eq:sfb-cond-1,eq:sfb-cond-2,eq:sfb-cond-3} 
of Lem.~\ref{lem:approx-sep}:
 \begin{enumerate}[(i)]
 	\item  $\sfb \geq \lg{d} + 4 + \max\{ 8d, d \lg{d} \}$ and $\sfb$ is a power of two.
\item $\displaystyle 2^{(-\sfb + \lg{d} + 4)/2} \leq \frac{\Delta_i(A)}{2 d}$, or equivalently,  
	$\sfb \geq -2 \lg\Delta_i(A) + 3 \lg(d) + 6$. 
Using \cite[Theorem~1]{emt-dmm-j} we can bound $\Delta_i(A)$ to obtain 
$\sfb \geq 2 d \tau + 12 d \lg{d} + 8$.
\item  $\displaystyle 2^{-\sfb/2} \leq \frac{ \prod_{j \not= i}(\alpha_i - \alpha_j)}{16\cdot (d+1)\cdot 2^{\tau_A}\cdot M(\alpha_i)^d}$.
It holds $M(\alpha_i) = \max\{1, \abs{\alpha_i} \} \leq 2^{\tau + 3}$,
	using an upper bound on the roots of $A$, e.g.,~\cite[Theorem~1]{emt-dmm-j}.
Also using Lem.~\ref{lem:eval-dA-at-a}, we obtain
	\begin{equation}
		\label{eq:Pi-eval}	
		P_i \coloneq \prod_{j \not= i}\abs{\alpha_i - \alpha_j}
		= \tfrac{1}{d \, a_d} \abs{A'(\alpha_i) }
		\geq 2^{-3d \tau - 3d\lg{d} - \lg{d}} .
	\end{equation}

	These bounds lead to the inequality $\sfb \geq 5 d \tau + 9 d \lg{d} + 12$.
 \end{enumerate}

Therefore, if we choose $\sfb$ such that   
\begin{equation}
	\label{eq:b-sep}
	\sfb \geq 5 d \tau + 9 d \lg{d} + 12, 
\end{equation}
then (i), (ii), and (iii) are simultaneously satisfied.
In this case, by Lem.~\ref{lem:approx-sep}, Eq.~\eqref{eq:z-hz-sep}, for any $i \in [d]$, 
\begin{equation}
	\label{eq:sep-A-Ae}
	(1 - \tfrac{1}{d}) \, \Delta_i(A) 
	\leq 
	\Delta_i(\Ae) 	
	\leq 
	(1 + \tfrac{1}{d}) \, \Delta_i(A) .
\end{equation}

The previous discussion leads to the following lemma:

\begin{lemma}
	\label{lem:eps-value-sep}
	If $\eps = 2^{-\sfb}$, with $\sfb \geq 5 d \tau + 9 d \lg{d} + 12$, then \eqref{eq:sep-A-Ae} holds.
\end{lemma}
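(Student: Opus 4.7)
The plan is to invoke \Cref{lem:approx-sep}, which compares the root separations of two polynomials whose coefficient vectors are close in the one-norm. Since $A$ and $\Ae$ have different leading coefficients, I would apply the lemma to the rescaled polynomial $\tfrac{a_d - \eps}{a_d}\,A$ (which shares its roots, and hence its separation bounds, with $A$) and to $\Ae$, which has matching leading coefficient $a_d - \eps$.

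The first step is to bound $\onenorm{\tfrac{a_d - \eps}{a_d}A - \Ae}$. Writing $\Ae = A - \eps M$, and using $\onenorm{M} \leq \tfrac{d}{2}+1$ together with the standing assumption $\eps \leq a_d/4$, a short triangle-inequality chain produces $\onenorm{\tfrac{a_d - \eps}{a_d}A - \Ae} \leq 2^{-\sfb + \lg d + 4}\,\onenorm{\tfrac{a_d - \eps}{a_d}A}$. This fixes the perturbation parameter in \Cref{lem:approx-sep} to be $2^{-\sfb + \lg d + 4}$.

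Next I would verify the three hypotheses of \Cref{lem:approx-sep} for this perturbation parameter. Condition~(i) is a mild size requirement on $\sfb$ of order $d\lg d$. Condition~(ii) demands $\sfb \geq -2 \lg \Delta_i(A) + 3\lg d + 6$, which I would satisfy by plugging in the standard lower bound $\Delta_i(A) \geq 2^{-d\tau - \OO{d \lg d}}$ from \cite[Theorem~1]{emt-dmm-j}, yielding a threshold of the form $\sfb \geq 2 d\tau + \OO{d \lg d}$. Condition~(iii) is the costliest: it involves the product $P_i = \prod_{j \ne i}\abs{\alpha_i - \alpha_j}$, which by \Cref{lem:eval-dA-at-a} equals $\tfrac{1}{d\,a_d}\abs{A'(\alpha_i)}$ and is therefore lower-bounded by $2^{-3 d\tau - \OO{d \lg d}}$. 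Combined with the root-magnitude bound $M(\alpha_i) \leq 2^{\tau + 3}$, this produces the dominant constraint $\sfb \geq 5 d \tau + 9 d \lg d + 12$.

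All three thresholds are simultaneously satisfied under the asserted hypothesis on $\sfb$, and \Cref{lem:approx-sep} then delivers \eqref{eq:sep-A-Ae} directly via \eqref{eq:z-hz-sep}. The main obstacle is not conceptual but purely bookkeeping: chasing the explicit constants through conditions~(ii) and~(iii), and confirming that $5 d \tau + 9 d \lg d + 12$ indeed dominates the maximum of the three derived lower bounds.
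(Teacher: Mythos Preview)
Your proposal is correct and follows essentially the same approach as the paper: the paper also rescales $A$ by $\tfrac{a_d - \eps}{a_d}$ to match leading coefficients, bounds $\onenorm{\tfrac{a_d - \eps}{a_d}A - \Ae}$ via the same triangle-inequality chain to get the perturbation parameter $2^{-\sfb + \lg d + 4}$, and then checks the three hypotheses of \Cref{lem:approx-sep} using \cite[Theorem~1]{emt-dmm-j} for $\Delta_i(A)$ and \Cref{lem:eval-dA-at-a} for $P_i$, with condition~(iii) giving the dominant threshold $5 d\tau + 9 d \lg d + 12$.
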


By combining Lemmata~\ref{lem:eps-value-pos} and \ref{lem:eps-value-sep}
and considering all the roots of $\Ae$, we have the following

\begin{lemma}[The value of $\eps$]
	\label{lem:eps-value}
	If $\eps = 2^{-\sfb}$, with $\sfb \geq 5 d \tau + 9 d \lg{d} + 12$, then 
	$\Ae(x) = A(x) - \eps M(x) >0$ for all $x \in \RR$, and 
	\begin{equation}
		\label{eq:ag-sep-A-Ae}
		(1 - \tfrac{1}{d})^d \, \prod_{i=1}^{d} \Delta_i(A) 
		\leq 
		\prod_{i=1}^{d} \Delta_i(\Ae) 
		\leq 
		(1 + \tfrac{1}{d})^d \, \prod_{i=1}^{d} \Delta_i(A) .
	\end{equation}
\end{lemma}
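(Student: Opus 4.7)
The claim is essentially a bookkeeping step: it collects the two previous lemmas (\Cref{lem:eps-value-pos} and \Cref{lem:eps-value-sep}) and passes from the local separation bounds $\Delta_i$ to their aggregate product. Accordingly, I do not expect any substantive mathematical obstacle; the plan is to verify (a) that the single, stronger hypothesis on $\sfb$ in the statement implies both hypotheses used before, and (b) that multiplying the per-root inequalities preserves the sign and yields the announced product bound.

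First I would show positivity of $\Ae$ on $\RR$. The assumption $\sfb \geq 5d\tau + 9d\lg d + 12$ is strictly stronger than the threshold $\sfb \geq 4d\tau + 8d\lg d$ required by \Cref{lem:eps-value-pos}, since for $d \geq 1$ and $\tau \geq 0$ the difference $d\tau + d\lg d + 12$ is nonnegative. Hence \Cref{lem:eps-value-pos} applies directly and gives $\Ae(x) = A(x) - \eps M(x) > 0$ for every $x \in \RR$. This also guarantees that $\Ae$ has no real roots, so its local separation bounds $\Delta_i(\Ae)$ (distances between complex conjugate pairs and distinct simple complex roots) are well defined and positive.

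Next I would turn to the product inequality \eqref{eq:ag-sep-A-Ae}. Under the same hypothesis on $\sfb$, \Cref{lem:eps-value-sep} delivers \eqref{eq:sep-A-Ae}, namely
\[
(1 - \tfrac{1}{d})\, \Delta_i(A) \;\leq\; \Delta_i(\Ae) \;\leq\; (1 + \tfrac{1}{d})\, \Delta_i(A) \qquad \text{for every } i \in [d].
\]
Since $A$ is square-free each $\Delta_i(A) > 0$, and the lower factor $1 - 1/d \geq 0$ (with equality only in the trivial case $d = 1$, which is excluded because $d = 2m$ is even and positive). All quantities involved are therefore nonnegative, so I can take the product of these $d$ inequalities term-by-term without sign reversal, yielding
\[
(1 - \tfrac{1}{d})^{d} \prod_{i=1}^{d} \Delta_i(A) \;\leq\; \prod_{i=1}^{d} \Delta_i(\Ae) \;\leq\; (1 + \tfrac{1}{d})^{d} \prod_{i=1}^{d} \Delta_i(A),
\]
which is exactly \eqref{eq:ag-sep-A-Ae}. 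Combining this with the positivity statement from the first step completes the proof.

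The only step that is not entirely mechanical is confirming that the hypothesis threshold of \Cref{lem:eps-value-sep} dominates that of \Cref{lem:eps-value-pos}, so that a single bound on $\sfb$ suffices for both conclusions; this is the mild verification I would flag explicitly. Everything else reduces to multiplying nonnegative inequalities, so no further argument is needed.
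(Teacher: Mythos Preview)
Your proposal is correct and matches the paper's approach exactly: the paper simply states that \Cref{lem:eps-value} follows ``by combining Lemmata~\ref{lem:eps-value-pos} and~\ref{lem:eps-value-sep} and considering all the roots of $\Ae$'', which is precisely the check that the single threshold on $\sfb$ dominates both earlier ones together with the termwise product of the per-root inequalities~\eqref{eq:sep-A-Ae}.
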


\paragraph{Bit complexity of approximating the roots of $\Ae$.}

To approximate the roots of $\Ae$, up to precision $2^{-\kappa}$, for a positive integer $\kappa$,
we employ the algorithm supported by \Cref{thm:root-isol-approx}
that computes complex numbers $\waej^{\pm} \in \QQ[\imath]$, where $j \in [m]$,
such that 
\[
  \norm{\Ae - \waed \prod\nolimits_{j=1}^{m}(x - \waej^{+})(x - \waej^{-})}
	\leq 2^{-\sfb} \norm{\Ae}.
\]
We let \[\wAe(x) \coloneq \waed \prod_{j=1}^{m}(x - \waej^{+})(x - \waej^{-}).\]
The rootfinding algorithm returns the real and imaginary part 
of the $\waej^{\pm} = \wgej \pm \imath \wdej$ 
as dyadic fractions of the form 
$a\, 2^{-b}$, where $a \in \ZZ$ and $b \in \NN$;
all fractions have the same denominator. 
The next lemma bounds the bit complexity of approximating the roots of $\Ae$,
both the real and imaginary parts, up to precision $2^{-\kappa}$,
as a function of $d$, $\tau$, and $\kappa$. We estimate the value of $\kappa$ in the next subsection.

\begin{corollary}
	\label{cor:Ae-roots-approx-complexity}
	If $\eps = 2^{-\sfb}$, with $\sfb \geq 5 d \tau + 9 d \lg{d} + 12$ (that is as in Lemma~\ref{lem:eps-value}),
	then one can compute rational approximations (of the real and imaginary part) of the roots of $\Ae$,
	up to precision $2^{-\kappa}$, for a positive integer $\kappa$,  
	at the cost of $\sOB(d^3 + d^2\tau + d \kappa)$ bit operations.
\end{corollary}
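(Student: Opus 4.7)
The plan is to apply Theorem~\ref{thm:root-isol-approx} to the perturbed polynomial $\Ae$ and to control the parameters that enter its complexity bound. The subtlety, and the whole reason for choosing $\sfb$ as in Lemma~\ref{lem:eps-value}, is that $\Ae$ has coefficient bitsize $\OO(\sfb + \tau) = \sOO(d\tau)$, a factor of $d$ larger than $\tau$; a naive application of a root-finding bound that depends linearly on the bitsize of the input would therefore yield $\sOB(d^3\tau + d\kappa)$ and the wrong bound.

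First I would record the input-side parameters of $\Ae$: it has degree $d$, leading coefficient $a_d - \eps \in [1/4, 1]$ (since $\eps \leq 1/8 \leq a_d/4$), and all its roots lie in the disk $|x| \leq R \leq 2^{\tau+5}$ as in~\eqref{eq:boundsizeroots}. Thus the norm $\normi{\Ae}$ and the radius of the root region are both $\OO(\tau)$-controlled, and the only quantity for which passing from $A$ to $\Ae$ could a priori inflate the cost is the aggregate separation of the roots, which governs the precision the splitting-circle / Newton-based factorisation procedure underlying Theorem~\ref{thm:root-isol-approx} has to reach in order to decouple the roots.

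The crux of the proof is to pass the aggregate log-separation of $\Ae$ through Lemma~\ref{lem:eps-value}. From~\eqref{eq:ag-sep-A-Ae} we have $\prod_{i=1}^d \Delta_i(\Ae) \geq (1 - 1/d)^d \prod_{i=1}^d \Delta_i(A)$, so
\[
\sum_{i=1}^{d}\lg\frac{1}{\Delta_i(\Ae)} \;\leq\; \sum_{i=1}^{d}\lg\frac{1}{\Delta_i(A)} + \OO(d) \;=\; \sOO(d\tau),
\]
where the last equality is the standard aggregate-separation estimate for an integer polynomial of size $(d,\tau)$, e.g.~\cite{emt-dmm-j}. Hence, despite $\Ae$ having bitsize $\sOO(d\tau)$, its roots are as well-separated as those of $A$, and this is exactly where the factor $d$ is saved.

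Feeding these estimates into Theorem~\ref{thm:root-isol-approx}, whose bit-complexity for approximate factorisation of a degree-$d$ polynomial up to precision $2^{-\kappa}$ takes the form $\sOB\bigl(d^3 + d^2 \cdot (\text{aggregate log-separation}) + d\kappa\bigr)$, one obtains the claimed bound $\sOB(d^3 + d^2\tau + d\kappa)$. The main obstacle is the bookkeeping that reconciles the complexity statement of Theorem~\ref{thm:root-isol-approx}, phrased in terms of $\Ae$'s own parameters, with the separation estimate available only for $A$; Lemma~\ref{lem:eps-value} is precisely the bridge that makes this substitution legal.
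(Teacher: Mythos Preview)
Your outline coincides with the paper's: control $\tau_{\Ae}$, transfer the aggregate separation of $A$ to $\Ae$ via Lemma~\ref{lem:eps-value}, and invoke Theorem~\ref{thm:root-isol-approx}. The gap is that you have compressed the cost in Theorem~\ref{thm:root-isol-approx} to ``$d^3+d^2\cdot(\text{agg.\ log-sep.})+d\kappa$'', whereas it actually reads
\[
\sOB\Bigl(d^3 + d^2\tau_{\Ae} + d\sum_{i=1}^d\lg M\bigl(\Delta_i(\Ae)^{-1}\bigr) + d\sum_{i=1}^d\lg M(\hP_i^{-1}) + d\kappa\Bigr),
\]
with $\hP_i=\prod_{j\neq i}\abs{\aei-\alpha_{\eps,j}}$. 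You never bound the $\hP_i$-term, and this is not automatic from the separation transfer: the crude inequality $\hP_i\ge \Delta_i(\Ae)^{d-1}$ would only yield $\sum_i\lg M(\hP_i^{-1})\le(d-1)\sum_i\lg M(\Delta_i(\Ae)^{-1})$, reinstating precisely the factor~$d$ you are trying to save.

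The paper treats this quantity separately. It first lower-bounds the analogous product $P_i=\prod_{j\neq i}\abs{\alpha_i-\alpha_j}$ for the unperturbed polynomial~$A$ via~\eqref{eq:Pi-eval} (a consequence of Lemma~\ref{lem:eval-dA-at-a} applied to $\abs{A'(\alpha_i)}$), giving $-\lg P_i\le\sOO(d\tau)$, and then passes to $\Ae$ through the two-sided estimate $P_i/2\le \hP_i\le 2P_i$ taken from~\cite[proof of Theorem~4]{msw-aprox-fact-15}, obtaining $-\lg\hP_i\le\sOO(d\tau)$. You need this additional step to close the argument.
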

\begin{proof}
	We bound the various quantities appearing in the complexity bounds of \Cref{thm:root-isol-approx}.

	First, we estimate the bound on the coefficients, 
	$\tau_{\Ae} \leq \ceil{ \frac{\normi{\Ae}}{\lc(\Ae)}}  \leq \tau+5$.
	
	To bound the (aggregate) separation bound of the roots of $\Ae$, we employ Eq.~\eqref{eq:ag-sep-A-Ae}. Then,
	\[
	-\lg \prod_{i=1}^{d} \Delta_i(\Ae) 
	\leq 
	-\lg (1 - \tfrac{1}{d})^d \, \prod_{i=1}^{d} \Delta_i(A) 
	\leq
	2 -\lg \prod_{i=1}^{d} \Delta_i(A) = \sOO(d \tau),
	\]	
	where for the last equality we refer to \cite[Theorem~1]{emt-dmm-j}.
	
	It remains to bound 
	$\hP_i \coloneq \prod_{j \not= i}\abs{\waei - \waej} = \tfrac{1}{d \, a_{\eps,d}}\abs{\Ae'(\waei)}$.
	It holds $\tfrac{P_i}{2} \leq \hP_i \leq 2 P_i$ \cite[proof of Theorem~4]{msw-aprox-fact-15}
	and so 
	\[
		-\lg \hP_i \leq -\lg \tfrac{1}{2} - \lg{P_i} 
		\leq 3d \tau + 3d\lg{d} + \lg{d} = \sOO(d \tau) \enspace,
	\]
	where the last inequality is due to \eqref{eq:Pi-eval}.
	By combining all the bounds, we conclude the proof.
\end{proof}

\paragraph{The cost of constructing the polynomials $\wPe$ and $\wQe$.}

It remains to actually compute the polynomials $\wAe$, 
$\wPe$, and $\wQe$ from the approximations $\waej$'s, 
based on Lemma~\ref{lem:fan-in}.
It holds \[\abs{\waej - \aei} \leq 2^{-\kappa}.\]
The fan-in algorithm, supported by Lemma~\ref{lem:fan-in},
computes the polynomial
\[\wAe(x) = \lc(\Ae) \prod_{j=1}^{m}(x - \waej^{+})(x - \waej^{-}),\] 
and the polynomial $\wPe$ and $\wQe$ (and $\wAe$) from the product
\[
	\prod_{i=1}^{m} (x - \wgej + \imath\, \wdej) = 
	\wPe(x) + \imath \, \wQe(x).
	\]
Recall, that $\Ae$ is a positive polynomial, so it admits a representation $\Ae = \Pe^2 + \Qe^2$. As we approximate the roots of $\Ae$, we compute approximations $\wPe$ and $\wQe$,
so that, after simplifications,
\[ 
\normi{\Pe - \wPe} \leq 2^{-\kappa + (4m-4)(\tau+5) + 32m - (\lg{m}+5)^2 -7}
\leq 2^{-\kappa + 2d \tau + 26d - \lg^2{d} - 30},
\]
and similarly for $\Qe$ and $\wQe$.
The cost is  $\sOB(d(\kappa + d \tau))$ bit operations,
it holds $\normi{\wPe}, \normi{\wQe} \leq 2^{d \tau + 4d}$,
and the bitsize of all three polynomials 
is at most $\OO(\kappa + d \tau)$.

\begin{corollary}
	\label{cor:P-Q-complexity}
	Assume $\eps = 2^{-\sfb}$, with $\sfb \geq 5 d \tau + 9 d \lg{d} + 12$ (that is as in Lemma~\ref{lem:eps-value}).
If we are given a rational approximation of the roots of $\Ae$, with precision $2^{-\kappa}$, 
	for a positive integer $\kappa$, then 
	we can compute the polynomials $\wPe$ and $\wQe$
	at the cost of $\sOB(d^2 \tau + d \kappa)$ bit operations. 
	The bitsize of the polynomials is $\sOO(\kappa + d \tau)$.
\end{corollary}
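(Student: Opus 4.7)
The strategy is to apply the approximate fan-in multiplication algorithm of Lemma~\ref{lem:fan-in} directly to the $m$ linear factors $(x - \wgej + \imath\, \wdej) \in \QQ[\imath][x]$, producing $\wPe(x) + \imath\, \wQe(x)$ in a balanced binary tree, and then extract $\wPe$ and $\wQe$ as the real and imaginary parts of the result.

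First, I would bound the bitsize of the inputs to the fan-in routine. By the root bound \eqref{eq:boundsizeroots}, every root $\aei$ of $\Ae$ satisfies $|\aei| \leq R \leq 2^{\tau+5}$; combined with the precision $2^{-\kappa}$ of our approximations, each $\wgej$ and $\wdej$ is a dyadic rational of bitsize $\OO(\tau + \kappa)$. So the $m = d/2$ linear factors feeding the fan-in tree each have bitsize $\OO(\tau + \kappa)$. Applying Lemma~\ref{lem:fan-in} to these factors yields a bit complexity of $\sOB(m \cdot (\kappa + m\tau)) = \sOB(d^2\tau + d\kappa)$, which matches the claim. The exact coefficients of $\Pe, \Qe$ are elementary symmetric functions of the approximate roots, so they satisfy $\normi{\Pe}, \normi{\Qe} \leq 2^{d\tau + \OO(d)}$; the coefficients of $\wPe, \wQe$ differ from these by the fan-in error, so their bitsize is $\OO(\kappa + d\tau) = \sOO(\kappa + d\tau)$, as required.

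The main delicate point is propagating the absolute root-precision $2^{-\kappa}$ through the $\lceil\lg m\rceil$ levels of the fan-in tree. At each level, a multiplication $F\cdot G$ contributes an absolute coefficient error bounded roughly by $\normi{F - \widetilde{F}}\cdot \normi{G} + \normi{\widetilde{F}}\cdot \normi{G - \widetilde{G}}$, and the partial products at depth $\ell$ have $\normi{\cdot} \leq 2^{2^{\ell}(\tau + 5) + \OO(2^{\ell})}$ via the usual Mahler/binomial bound. Summing these contributions geometrically across the tree yields the announced estimate $\normi{\Pe - \wPe} \leq 2^{-\kappa + 2d\tau + \OO(d)}$ (and similarly for $\Qe - \wQe$); this is what makes the end-to-end precision loss only $\sOO(d\tau)$ bits.

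The hard part is the bookkeeping of this error analysis, in particular verifying that the constants match $2^{-\kappa + 2d\tau + 26d - \lg^2 d - 30}$ rather than a worse bound: a naive per-level bound would give an extra factor of $d$ in the exponent and destroy the $\sOO(\kappa + d\tau)$ bitsize target. I would therefore invoke Lemma~\ref{lem:fan-in} as a black box for the error accounting, and restrict my work to bounding the two parameters on which it depends (root magnitudes via \eqref{eq:boundsizeroots}, and input precision $\kappa$), and verifying that the assumption $\sfb \geq 5d\tau + 9d\lg d + 12$ places us in the regime where $\lc(\Ae) = a_d - \eps \geq 1/4$, so that dividing by $\waed$ at the end costs only $\OO(\tau)$ additional bits and does not affect the claimed bounds.
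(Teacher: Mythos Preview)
Your proposal is correct and follows essentially the same route as the paper: invoke Lemma~\ref{lem:fan-in} as a black box on the $m$ linear factors $(x - \wgej + \imath\,\wdej)$, using the root bound $|\aei| \leq 2^{\tau+5}$ to control the input parameters, and read off the cost $\sOB(m(\kappa + m\tau)) = \sOB(d^2\tau + d\kappa)$ and output bitsize $\sOO(\kappa + d\tau)$ directly from the lemma's statement. One minor point: there is no division by $\waed$ needed, since $\wPe$ and $\wQe$ are by definition the real and imaginary parts of the monic product $\prod_{j}(x - \wgej + \imath\,\wdej)$; the leading coefficient $\waed$ only enters when forming $\wAe = \waed(\wPe^2 + \wQe^2)$.
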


\subsubsection*{[Step 3] Write $B(x) + \varepsilon M(x)$ as SOS}

Now, we have computed rational approximations of the roots of $\Ae$ such that 
$\abs{\aei - \waei} \leq 2^{-\kappa}$, that correspond to the polynomial $\wAe$.
Based on Lem.~\ref{lem:fan-in} we deduce that 
\[
	\normi{\Ae - \wAe} \leq 
	2^{-\kappa + 2d \tau + 26d - \lg^2{d} - 30},
\]
and, since $B := \Ae - \wAe$,  it holds 
\[\normi{B} \leq 2^{-\kappa + 2d \tau + 26d - \lg^2{d} - 30}.\]
To satisfy the inequality \eqref{eq:eps-ineq}, 
that is $\eps \geq |b_{2k+1}|/4 - b_{2k} + |b_{2k-1}|$, 
the following inequality needs to hold 
\begin{equation}
	\label{eq:kappa-ineq}
	\eps = 2^{-\sfb} \geq 2^{-\kappa + 2d \tau + 26d - \lg^2{d} - 30}
	\quad\Rightarrow\quad
	\kappa \geq 5d\tau + 40 d \lg{d} \enspace,
\end{equation}
where we also use the bound on $\sfb$ from Lemma~\ref{lem:eps-value}.
Consequently, $\eps$ and all the coefficients $b_k$ of $B$ have bitsize $\sOO(d \tau)$.
This leads to the following lemma:

\begin{lemma}
	\label{lem:kappa-value}
	Let $\eps = 2^{-\sfb}$, with $\sfb \geq 5 d \tau + 9 d \lg{d} + 12$ (Lem.~\ref{lem:eps-value}).
	Then, if $\kappa \geq 5d\tau + 40 d \lg{d}$, that is $\kappa = \sOO(d \tau)$, then 
	 \eqref{eq:eps-ineq} holds for all $k$.
	 The bitsize of the positive rationals $w_j$ in \eqref{eq:BeM-remainder} is also $\sOO(d \tau)$.
\end{lemma}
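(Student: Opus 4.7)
The plan is to combine the coefficient bound on $B$ obtained just above the statement with the choice $\eps = 2^{-\sfb}$ from Lem.~\ref{lem:eps-value}, and then solve the resulting inequality for $\kappa$. Recall that for every index $k$ we have $\abs{b_k} \leq \normi{B}$, and from the fan-in analysis
\[
\normi{B} = \normi{\Ae - \wAe} \leq 2^{-\kappa + 2d\tau + 26d - \lg^2{d} - 30}.
\]
Consequently, $\tfrac{1}{4}\abs{b_{2k+1}} + \abs{b_{2k-1}} + \abs{b_{2k}} \leq 3\,\normi{B}$, so it suffices to ensure $\eps \geq 3\,\normi{B}$, i.e.
\[
2^{-\sfb} \;\geq\; 3 \cdot 2^{-\kappa + 2d\tau + 26d - \lg^2{d} - 30}.
\]

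The first step is therefore to take logarithms and substitute the bound $\sfb \leq 5d\tau + 9 d\lg{d} + 12 + O(1)$ from the hypothesis. This yields the condition
\[
\kappa \;\geq\; \sfb + 2d\tau + 26d - \lg^2{d} - 30 + \lg 3,
\]
which is implied by the slightly larger bound $\kappa \geq 5 d \tau + 40 d\lg d$ stated in the lemma. With this choice, the inequality \eqref{eq:eps-ineq} holds simultaneously for every $k$, so \eqref{eq:BeM-remainder} is a genuine weighted SOS representation. Since $\kappa = \widetilde{O}(d\tau)$, the dyadic approximations of the real and imaginary parts of the roots of $\Ae$ have bitsize $\widetilde{O}(d\tau)$, and by Cor.~\ref{cor:P-Q-complexity} so do the coefficients of $\wPe$ and $\wQe$, hence also the coefficients of $B = \Ae - \wAe$.

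For the second assertion, one inspects the explicit formula for the weights in \eqref{eq:BeM-remainder}: $w_k = \abs{b_{2k+1}}$ for $k \in \{0,\dots,m-1\}$, and $w_{m+k} = \eps + b_{2k} - \abs{b_{2k-1}} - \tfrac{1}{4}\abs{b_{2k+1}}$ for $k \in \{0,\dots,m\}$. The first family are just absolute values of coefficients of $B$, and therefore have bitsize $\widetilde{O}(d\tau)$. The second family is an arithmetic combination of $\eps = 2^{-\sfb}$ with $\sfb = \widetilde{O}(d\tau)$ and coefficients of $B$ of bitsize $\widetilde{O}(d\tau)$; taking a common denominator (a power of two, up to the fixed denominator of the coefficients of $A$) the bitsize remains $\widetilde{O}(d\tau)$.

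The only delicate point is choosing $\kappa$ large enough that the estimate on $\normi{B}$ dominates $\eps$ with enough slack for all three terms in \eqref{eq:eps-ineq}, while simultaneously keeping $\kappa = \widetilde{O}(d\tau)$ so that the final bitsize claim holds. This is the verification to perform carefully; the remaining arithmetic on the weights is routine.
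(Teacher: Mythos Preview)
Your proposal is correct and follows the same approach as the paper: bound $\normi{B}$ via the fan-in estimate, compare it to $\eps = 2^{-\sfb}$, solve for $\kappa$, and then read off the bitsize of the $w_j$'s from their explicit formula in \eqref{eq:BeM-remainder}. One minor slip: you write ``substitute the bound $\sfb \leq 5d\tau + 9d\lg d + 12 + O(1)$ from the hypothesis'', but the hypothesis gives $\sfb \geq$, not $\leq$; the argument works because in the algorithm $\sfb$ is fixed at (essentially) this minimal admissible value, a point the paper also leaves implicit.
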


\subsection{Overall complexity estimates}
\label{sec:overall-complexity-wpu-sos}

The previous two sections imply that 
for a given polynomial $A \in \ZZ[x]$, of degree $d = 2m$ and bitsize $\tau$,
that is positive over $\RR$, 
\usos computes a representation of $A$ 
as a weighted SOS of polynomials with rational coefficients, as in \eqref{eq:pos-certificate-Q}.
In particular, it represents $A$ as  
\begin{equation}
	\label{eq:wsos-decomp}
	A(x) =  \waed \, \wPe(x)^2 + \waed\, \wQe (x)^2 + 
		\sum_{k=0}^{m-1} w_k \, \Bigl(x^{k+1} \pm \frac{x^{k}}{2}\Bigl)^2 
 + \sum_{k=0}^m w_{m+k} \, x^{2k},
\end{equation}
where $\waed = a_d - \eps \in \QQ_{\geq 0}$,
$\eps \in (0, \tfrac{1}{8})$,
$\wPe, \wQe \in \QQ[x]$,
and $w_j \in \QQ_{\geq 0}$, for $0 \leq j \leq d$.

\begin{theorem}[Bit complexity of \usos]
		\label{thm:usos-bit-complexity}
	Let $A \in \ZZ[x]$ be a square-free polynomial of degree $d = 2m$ and maximum coefficient
	bitsize $\tau$. 
	If $A$ is positive over $\RR$, then \usos computes a weighted SOS representation of $A$
	as in \eqref{eq:wsos-decomp}, see also  \eqref{eq:pos-certificate-Q}, 
	at the cost of $\sOB(d^3 + d^2\tau)$ bit operations.
\end{theorem}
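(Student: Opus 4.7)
The plan is to assemble the bound by tracing through the three steps of \usos and plugging in the precise values of $\eps$ and $\kappa$ justified by the preceding lemmas. The first order of business is to reduce to the normalized setting assumed in Section~\ref{sec:wu-sos-presentation}: multiplying $A$ by a rational in $[1/(2a_d), 1/a_d]$ to force $\lc(A) \in [\tfrac{1}{2},1]$ preserves positivity and does not change $d$ or $\tau$ asymptotically, so all complexity bounds carry over.

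Next, I will fix the perturbation parameter via Lemma~\ref{lem:eps-value} by choosing $\sfb = \sOO(d\tau)$ (concretely $\sfb \geq 5d\tau + 9d\lg d + 12$); this guarantees both that $\Ae = A - \eps M$ is strictly positive on $\RR$ and that $\Ae$ has essentially the same aggregate separation as $A$. Then I fix the root-approximation precision via Lemma~\ref{lem:kappa-value}, taking $\kappa = \sOO(d\tau)$, which ensures that the inequality \eqref{eq:eps-ineq} is satisfied coefficient-wise, and hence that \eqref{eq:BeM-remainder} is a genuine weighted SOS of the remainder $B + \eps M$ with rational weights of bitsize $\sOO(d\tau)$.

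With $\eps$ and $\kappa$ fixed, the dominant work is split between Step~2 and Step~3. Step~2 requires two sub-tasks: (i) approximating the roots of $\Ae$ to precision $2^{-\kappa}$, which by Corollary~\ref{cor:Ae-roots-approx-complexity} costs $\sOB(d^3 + d^2\tau + d\kappa) = \sOB(d^3 + d^2\tau)$ because $\kappa = \sOO(d\tau)$, and (ii) assembling the approximate factors $\wPe, \wQe$ using the fan-in algorithm, which by Corollary~\ref{cor:P-Q-complexity} costs $\sOB(d^2\tau + d\kappa) = \sOB(d^2\tau)$ and yields polynomials of bitsize $\sOO(d\tau)$. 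Step~3 computes the coefficients $b_k$ of $B = \Ae - \wAe$ and the weights $w_k$ appearing in \eqref{eq:BeM-remainder}; since there are $\OO(d)$ such coefficients, each obtained by a constant number of additions and subtractions of rationals of bitsize $\sOO(d\tau)$, this step contributes at most $\sOB(d^2\tau)$ bit operations.

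Summing over the three steps gives the stated bound $\sOB(d^3 + d^2\tau)$, and the correctness of the output representation \eqref{eq:wsos-decomp} is exactly the content of \eqref{eq:A=P+Q+B+eM} combined with \eqref{eq:BeM-remainder}, which becomes an SOS once \eqref{eq:eps-ineq} holds. The only genuinely delicate point — the place where the factor-$d$ improvement over \cite{mss-wusos-alg-19} is obtained — is the transition from Step~1 to Step~2: a naive choice of $\eps$ would force the root-isolation routine of Theorem~\ref{thm:root-isol-approx} to work with the separation bound of $\Ae$ computed in terms of $\Ae$'s own bitsize, which is $\sOO(d\tau)$ and would yield a bound of $\sOB(d^4 + d^3\tau)$; the careful choice of $\sfb$ in Lemma~\ref{lem:eps-value} ensures via \eqref{eq:ag-sep-A-Ae} that the aggregate separation of $\Ae$ is governed by that of $A$, saving exactly the factor of $d$.
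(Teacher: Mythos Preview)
Your proposal is correct and follows essentially the same approach as the paper's own proof: fixing $\sfb$ and $\kappa$ via Lemmas~\ref{lem:eps-value} and~\ref{lem:kappa-value}, then bounding each step using Corollaries~\ref{cor:Ae-roots-approx-complexity} and~\ref{cor:P-Q-complexity}, with Step~3 reduced to $\OO(d)$ additions of $\sOO(d\tau)$-bit rationals. Your account is in fact somewhat more thorough, since you explicitly recall the normalization of the leading coefficient and spell out where the factor-$d$ saving over \cite{mss-wusos-alg-19} is obtained; the only (harmless) omission is that you do not explicitly bound the cost of forming $\Ae$ in Step~1, which the paper records as $\sOB(d^2\tau)$.
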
 
 
\begin{proof}
	The first step of the \usos involves computing the polynomial $\Ae$. This requires $\OO(d)$ additions 
	of numbers of bitsize $\sOO(d \tau)$ (Lemma~\ref{lem:eps-value}). 
	So, the bit complexity is $\sOB(d^2 \tau)$.
	
	The second step requires to approximate the roots of $\Ae$ up to precision $\kappa$. 
	As $\kappa = \OO(d \tau)$ (Lemma~\ref{lem:kappa-value}), this
	costs $\sOB(d^3 + d^2 \tau)$ bit operations (Cor.~\ref{cor:Ae-roots-approx-complexity}).
	This also includes the cost of computing the polynomials 
	$\wPe$ and $\wQe$~(Cor.~\ref{cor:P-Q-complexity}).
	
	The last step requires $\OO(d)$ additions of numbers of bitsize $\sOO(d \tau)$
	to construct the coefficients $w_j$.

	Hence, the overall bit complexity is $\sOB(d^3 + d^2 \tau)$.	
\end{proof}

\begin{lemma}[Bitsize of the certificate]
	\label{lem:bsz-of-certif}
	The representation in \eqref{eq:wsos-decomp} involves at most $d + 3$ summands.
	The bitsize of the rationals involved in the representation is at most $\sOO(d \tau)$,
	while their total bitsize  is  $\sOO(d^2\tau)$. 
\end{lemma}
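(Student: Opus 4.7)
The plan is to count the summands directly from the explicit form of the decomposition in \eqref{eq:wsos-decomp}, and then bound the bitsize of each of the rational quantities appearing in it by invoking the estimates already established in the previous subsection.

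First I would count the summands. The decomposition \eqref{eq:wsos-decomp} has three blocks: the two squares $\waed \wPe^2$ and $\waed \wQe^2$, the sum $\sum_{k=0}^{m-1} w_k (x^{k+1} \pm x^k/2)^2$, which contributes $m$ summands, and the sum $\sum_{k=0}^{m} w_{m+k} x^{2k}$, which contributes $m+1$ summands. Adding these gives $2 + m + (m+1) = 2m+3 = d+3$, which establishes the first claim.

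Next I would bound the bitsize of each rational quantity. The leading coefficient $\waed = a_d - \eps$ has bitsize at most $\sfb + \OO(\tau) = \sOO(d\tau)$, since $\eps = 2^{-\sfb}$ with $\sfb = \sOO(d\tau)$ by \Cref{lem:eps-value}. The coefficients of $\wPe$ and $\wQe$ have bitsize $\sOO(\kappa + d\tau) = \sOO(d\tau)$ by \Cref{cor:P-Q-complexity}, using $\kappa = \sOO(d\tau)$ from \Cref{lem:kappa-value}. The weights $w_k = |b_{2k+1}|$ and $w_{m+k} = \eps + b_{2k} - |b_{2k-1}| - \tfrac{1}{4}|b_{2k+1}|$ are simple rational combinations of $\eps$ and the coefficients $b_k$ of $B = \Ae - \wAe$; since $\normi{B} \leq 2^{-\kappa + \sOO(d\tau)}$ and each $b_k$ is a dyadic rational with the same denominator as the approximations, their bitsizes are again $\sOO(d\tau)$ by \Cref{lem:kappa-value}. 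The monomial factors $x^{k+1} \pm x^k/2$ and $x^{2k}$ contribute only constant-bitsize coefficients, so they do not affect the estimate.

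Finally, the total bitsize is obtained by summing over the $\OO(d)$ rational coefficients appearing across all $d+3$ summands, each of bitsize $\sOO(d\tau)$, yielding $\sOO(d^2\tau)$. There is no real obstacle here: the bound on each individual quantity has already been prepared in \Cref{lem:eps-value,cor:P-Q-complexity,lem:kappa-value}, and the argument reduces to a careful bookkeeping of terms in \eqref{eq:wsos-decomp}.
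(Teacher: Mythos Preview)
Your proposal is correct and follows essentially the same approach as the paper: count the summands directly from \eqref{eq:wsos-decomp}, invoke \Cref{lem:eps-value}, \Cref{cor:P-Q-complexity}, and \Cref{lem:kappa-value} to bound each rational by $\sOO(d\tau)$, and then multiply by the $\OO(d)$ total number of rationals. Your version is in fact slightly more explicit than the paper's, which simply cites the algorithm for the summand count and the three lemmas for the bitsizes without spelling out the arithmetic $2+m+(m+1)=d+3$ or the structure of the $w_j$'s.
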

\begin{proof}
	The number of summands follows directly from the algorithm \cite{chml-usos-alg-11} and \eqref{eq:wsos-decomp}.

	We notice that $\eps$, $\waed$ (Lemma~\ref{lem:eps-value}), 
	the coefficients of $\wPe$ and $\wQe$ (Cor.~\ref{cor:P-Q-complexity}),
	and the rationals $w_j$ (Lemma~\ref{lem:kappa-value})
	all have bitsize $\sOO(d \tau)$.
	
	The polynomials $\wPe$ and $\wQe$ have at most $d+2$ coefficients,
	so the total number of rational numbers in the right hand side of \eqref{eq:wsos-decomp} is at most 
	$2d + 4$. Thus, the bitsize of all the rational appearing in 
	the right hand side of \eqref{eq:wsos-decomp} is $\sOO(d^2 \tau)$.
\end{proof}

\begin{theorem}[Bit complexity of verifying the certificate]
	\label{thm:certification-bit-complexity}
	We can verify the positivity certificate in \eqref{eq:pos-certificate-Q}, that is the identity in \eqref{eq:wsos-decomp} using 
	$\sOO( d^2\tau)$ bit operations.
\end{theorem}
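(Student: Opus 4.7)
The plan is to compute the right-hand side of \eqref{eq:wsos-decomp} as an explicit polynomial in $\QQ[x]$ and then compare it with $A$ coefficient by coefficient. By \Cref{lem:bsz-of-certif}, every rational appearing in the certificate has bitsize $\sOO(d\tau)$, and the polynomials $\wPe, \wQe$ have degree at most $m = d/2$ with coefficients of bitsize $\sOO(d\tau)$. These two facts, combined with fast polynomial arithmetic, are all that is needed to meet the claimed bound.

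The dominant work is to form the squares $\wPe^2$ and $\wQe^2$. Using fast multiplication on two polynomials of degree $m$ with coefficient bitsize $\sOO(d\tau)$, each squaring costs $\sOB(d^2\tau)$ bit operations. The product polynomials have degree $d$ and, by the one-norm bound $\normi{\wPe}, \normi{\wQe} \leq 2^{d\tau + 4d}$ from \Cref{cor:P-Q-complexity}, their coefficients still have bitsize $\sOO(d\tau)$. Multiplying each square by the rational scalar $\waed$ of bitsize $\sOO(d\tau)$ is again $\sOB(d^2\tau)$.

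Each of the remaining $m$ summands $w_k(x^{k+1} \pm x^k/2)^2$ expands to a trinomial $w_k x^{2k+2} \pm w_k x^{2k+1} + (w_k/4)\, x^{2k}$, and the $m+1$ summands $w_{m+k} x^{2k}$ are already monomials; producing all their coefficients costs $O(d)$ operations on rationals of bitsize $\sOO(d\tau)$, i.e. $\sOB(d^2\tau)$ in total. Finally, summing the $d+3$ polynomials (each of degree at most $d$ with coefficient bitsize $\sOO(d\tau)$) and subtracting $A$ requires $O(d)$ additions of such rationals, again $\sOB(d^2\tau)$. It remains to check that the resulting polynomial is identically zero, which is a comparison against zero of $d+1$ rationals of bitsize $\sOO(d\tau)$ and hence $\sOB(d^2\tau)$.

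I do not expect any genuine obstacle. The essential observation is that \Cref{lem:bsz-of-certif} and \Cref{cor:P-Q-complexity} already pin down all coefficient bitsizes at $\sOO(d\tau)$, so fast polynomial arithmetic converts this directly into $\sOB(d^2\tau)$ bit operations, matching the stated bound. A probabilistic variant, using polynomial identity testing by evaluating both sides at a random integer of suitable bitsize, yields the same asymptotic cost, but the deterministic calculation above already suffices.
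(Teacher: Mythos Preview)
Your proof is correct and follows essentially the same approach as the paper: compute $\wPe^2$ and $\wQe^2$ via fast polynomial multiplication at cost $\sOB(d^2\tau)$ each, then compare coefficients. The paper's own proof is terser (it mentions only the two squarings and a linear-time coefficient comparison), whereas you also spell out the cost of scaling by $\waed$, expanding the trinomial and monomial summands, and summing everything up; but these are exactly the details the paper implicitly subsumes under ``compare the coefficients of the left and right hand sides, which we can do in linear time.''
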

\begin{proof}
	The right hand side of \eqref{eq:wsos-decomp} involves 
	the squaring of two polynomials, that is, $\wPe$ and $\wQe$, that have 
	degree $d/2$ and bitsize $\sOO( d \tau)$.
	Each squaring corresponds to one polynomial multiplication
	that costs $\sOB(d^2 \tau)$ bit operations. 
	Then, it suffices to compare the coefficients of the left and right hand sides,
	which we can do in linear time. 
\end{proof}

\subsubsection{Dropping the assumptions}
\label{sec:assumptions}

To certify the nonnegativity of a univariate polynomial using a weighted SOS representation, 
it suffices to provide such a representation for square-free polynomials with no real roots.

To justify this, assume that $A$ is not square-free and consider its square-free factorization:
\[
A = \prod\nolimits_{\mu} A_{\mu}^{2 \mu} \, \prod\nolimits_{\nu} A_{\nu}^{2 \nu + 1}.
\]
Each factor raised to an even power, say $A_{\mu}^{2 \mu}$, satisfies $A_{\mu}(x)^{2 \mu} \geq 0$ for all $x \in \RR$, and thus does not affect the sign of $A$. Consequently, we may disregard such factors when certifying nonnegativity. For factors raised to odd powers, $A_{\nu}^{2 \nu + 1}$, it suffices to analyze the contribution of $A_{\nu}$ to the sign of $A$. Therefore, we may restrict attention to square-free polynomials.

Moreover, if $A$ has real roots and is nonnegative over~$\RR$, then all real roots must have even multiplicity. Hence, the real roots are roots of the even-powered factors $A_{\mu}$.

To construct a weighted SOS representation for a polynomial $A$ that is not square-free, we consider again its square-free factorization:
\[
A = \prod\nolimits_{\mu} A_{\mu}^{2 \mu} \, \prod\nolimits_{\nu} A_{\nu}^{2 \nu + 1}
= \underbrace{\prod\nolimits_{\mu} (A_{\mu}^{\mu})^2 \, \prod\nolimits_{\nu} (A_{\nu}^{\nu})^2}_{S^2} \, \prod\nolimits_{\nu} A_{\nu}
= S^2 \, \prod\nolimits_{\nu} A_{\nu}.
\]
If $A$ is nonnegative over~$\RR$, then each $A_{\nu}$ is a positive, square-free polynomial with no real roots. We may apply the \usos algorithm to each $A_{\nu}$ to obtain a representation of the form $A_{\nu} = \sum_{j_{\nu}} w_{j_{\nu}} s_{j_{\nu}}^2$. Multiplying by $S^2$ yields a weighted SOS representation for $A$.

If $A$ is not nonnegative, i.e., if there exists $t \in \RR$ such that $A(t) < 0$, then at least one of the $A_{\nu}(t)$ must be negative, and in fact, an odd number of them must satisfy $A_{\nu}(t) < 0$. Such a point $t$ necessarily lies between two real roots of the product $\prod_{\nu} A_{\nu}$.

\subsubsection{Witness point of non-nonnegativity}

What if $A$ is not nonnegative? In this case, $A$ has at least one real root.
By isolating the real roots of $A$ (that is, by computing intervals with rational endpoints, 
each containing exactly one real root) we can compute rational points between successive 
real roots. We refer to these as \emph{intermediate points}.

If $A$ is not nonnegative, then there exists at least one intermediate point 
$t \in \QQ$ such that $A(t) < 0$. The cost of computing these intermediate 
points is asymptotically the same as that of isolating the real roots of $A$, 
which is $\sOB(d^3 + d^2 \tau)$ \cite{pan-rootfinding-jsc}. 
The bitsize of such a $t$ is $\sOO(d \tau)$, e.g.,~\cite[Theorem~1]{emt-dmm-j}. 
This matches the bitsize of the root separation bound of $A$, 
since $t$ lies between two distinct real roots.

The evaluation of $A$ at $t$ can be performed in $\sOB(d^2 \tau)$ 
bit operations~\cite{bz-upol-eval-11,hn-upol-eval-11}.

\begin{lemma}[Witness point]
	Let $A$ be a univariate polynomial of size $(d, \tau)$. If $A$ is not nonnegative, 
	then there exists a rational number $t \in \QQ$ of bitsize $\sOO(d \tau)$ such that $A(t) < 0$.
    We can compute $t$ in $\sOB(d^3 + d^2 \tau)$ bit operations
    and  we can verify the inequality $A(t) < 0$ in $\sOB(d^2 \tau)$.
\end{lemma}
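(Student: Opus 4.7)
The plan is to obtain $t$ as a rational intermediate point between two consecutive real roots of $A$, to bound its bitsize by the root separation bound of $A$, and to bound the cost of the computation via the real root isolation algorithm and the cost of verification via fast univariate polynomial evaluation.

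If $A$ has no real roots, then $A$ has constant sign over $\RR$, and, by the hypothesis that $A$ is not nonnegative, this sign is strictly negative. Then any rational point (say $t = 0$, or $t = 1$ in case $a_0 = 0$) is a witness, has bitsize $\OO(1)$, and the inequality $A(t) < 0$ is verified by a single exact evaluation. Otherwise $A$ has at least one real root, and I would invoke the standard fact (implicit in the square-free factorization discussion of Sec.~\ref{sec:assumptions}) that, since $A$ is not nonnegative, there exists a rational point between two consecutive real roots of $A$ (or below the smallest, or above the largest) at which $A$ takes a negative value.

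To produce such a rational witness I would isolate the real roots of $A$ using the algorithm of~\cite{pan-rootfinding-jsc}, which costs $\sOB(d^3 + d^2\tau)$ bit operations and returns $\OO(d)$ pairwise disjoint isolating intervals with rational endpoints of bitsize $\sOO(d\tau)$; this bitsize matches the separation bound of $A$, cf.~\cite[Theorem~1]{emt-dmm-j}. A rational midpoint between each pair of consecutive intervals, together with one rational below and one above the extreme intervals, yields $\OO(d)$ candidate points, each of bitsize $\sOO(d\tau)$ (again controlled by the separation bound). I would then identify a negative candidate by fast multipoint evaluation at all $\OO(d)$ candidates simultaneously, which fits within the isolation budget of $\sOB(d^3 + d^2\tau)$. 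Verification of $A(t) < 0$ for a single such $t$ of bitsize $\sOO(d\tau)$ is one evaluation of a degree-$d$, coefficient-bitsize-$\tau$ polynomial at a rational of bitsize $\sOO(d\tau)$, which costs $\sOB(d^2\tau)$ by~\cite{bz-upol-eval-11,hn-upol-eval-11}.

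The only mild obstacle is the case analysis ensuring that a \emph{rational} intermediate witness always exists (treating real roots of even versus odd multiplicity, and the degenerate case in which all real roots have even multiplicity but $A$ is globally negative) and that its bitsize is indeed governed by the root separation bound; once this is dispatched, the three claims of the lemma follow immediately from the cited real root isolation and fast evaluation subroutines.
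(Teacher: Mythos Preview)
Your proposal is correct and follows essentially the same approach as the paper: isolate the real roots of $A$ using~\cite{pan-rootfinding-jsc} at cost $\sOB(d^3+d^2\tau)$, take rational intermediate points between consecutive isolating intervals whose bitsize is controlled by the separation bound~\cite[Theorem~1]{emt-dmm-j}, and verify $A(t)<0$ by a single evaluation costing $\sOB(d^2\tau)$ via~\cite{bz-upol-eval-11,hn-upol-eval-11}. You are in fact slightly more careful than the paper, which asserts that a non-nonnegative $A$ must have a real root (false for, e.g., $-x^2-1$) and does not discuss points outside the extreme roots; your separate treatment of the sign-constant case and your inclusion of candidates below and above the extreme intervals patch these gaps without changing the argument or the bounds.
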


\subsection{Positivity over an interval {$[a, b]$}}
\label{sec:pos-over-interval}

We study certificates of positivity over an interval $[a, b]$, where $a, b \in \QQ_{\geq 0}$, for the polynomial 
\[A = \sum\nolimits_{k=0}^d a_k x^k = a_d \prod\nolimits_{i=1}^d (x - \alpha_i) \in \ZZ[x]. \]
Let $A$ has size $(d, \tau)$ and the bitsize of $a$ and $b$ be bounded by $\sigma$.
Following Chevillard et al.\ \cite[Sec.~5.2.5]{chml-usos-alg-11}, we consider the transformation, 
\[\phi: x \mapsto \frac{a + b y^2}{1 + y^2} ,\] 
that in turn induces the following transformation for $A$:
\[
\Aphi \coloneq (1+y^2)^d \phi(A) = (1+y^2)^d A(\tfrac{a + b y^2}{1 + y^2})  \in \ZZ[y] , 
\] 
Now, $\Aphi$ is nonnegative over $\RR$ if and only if $A$ is nonnegative over $[a, b]$. 
Thus, we can use the results of Sec.~\ref{sec:wu-sos-presentation} 
to certify that $\Aphi$ is positive over $\RR$,
instead of certifying directly that $A$ is positive over an interval.

Notice that if the bitsize of $a$ and $b$ is at most $\sigma$,
then the bitsize of $\Aphi \in \ZZ[y]$ is $\sOO(\tau + d \sigma)$.
Thus, if we straightforwardly apply the complexity bounds of the previous section, 
then we end up with a bit complexity bound of $\sOB( d^3 \sigma + d^2 \tau)$.
However, we can save a factor of $d$ from the term involving $\sigma$,
if we study the effect of $\phi$ on the separation bound, that is the minimum distance between the roots, of $A$. 

A close look in the complexity analysis of \usos reveals that 
the two important quantities are the value of the perturbation, $\eps \coloneq 2^{-\sfb}$ 
and the separation bound of the input polynomial $A$.
We study both them for the transformed polynomial $\Aphi$. 
First we consider the separation bound. 

\subsubsection{The separation bound of $\Aphi$}

If we apply $\phi$ to $A$ and we clear denominators, then 
the resulting polynomial is 
\[
A_{\phi}(y) = (1 + y^2)^d\cdot A(\phi(x)) = a_d\cdot \prod_{i=1}^d{\Bigl((b - \alpha_i) y^2  + (a - \alpha_i)\Bigl)} 
	\in \QQ[y],
\] 
and its roots, for $i \in [d]$, are
\[
	\zeta_i^{\pm} = \pm \sqrt{\frac{\alpha_i - a}{b- \alpha_i}} \enspace.
\] 
We need to (lower) bound the separation bound for $A_{\phi}$, that 
is the quantity 
\[\Delta_i(A_{\phi}) = \abs{\zeta_i^{\pm} - \zeta_j^{\pm}},\] 
where $\zeta_j^{\pm}$ is the closest root to $\zeta_i^{\pm}$.

We consider the polynomial 
\begin{align*}
	F(\delta) & = (\delta - \zeta_i^{+} - \zeta_j^{+})(\delta - \zeta_i^{+} - \zeta_j^{-})
            (\delta - \zeta_i^{-} - \zeta_j^{+})(\delta - \zeta_i^{-} - \zeta_j^{-}) \\
&  = \delta^{4} +
	2 \left(\frac{a- \alpha_{i}}{b -\alpha_{i}}+\frac{a-\alpha_{j}}{b -\alpha_{j}}\right) \delta^{2}
	+\frac{\left(\alpha_{i}-\alpha_{j}\right)^{2} \left(a -b \right)^{2}}{\left(b -\alpha_{i}\right)^{2} \left(b -\alpha_{j}\right)^{2}}
	\enspace. 
\end{align*}

Notice that among the roots of $F \in \CC[\delta]$ 
is the separation bound of $A_{\phi}$,
that is the difference $\zeta_i^{\pm} - \zeta_j^{\pm}$.
Hence, if we compute a lower bound for the roots of $F$, we also obtain a lower bound for $\Delta_i(A_{\phi})$.

Following \cite[Theorem~1]{emt-dmm-j}, we have that
\[
	\Delta(A_{\phi}) \geq 2 \frac{\tc(F)}{\normi{F}} \geq 
	\frac{\Delta_i(A)^2 \, (b-a)}{(b - \alpha_i)(b - \alpha_j) [ 2(ab + \alpha_i\alpha_j) - (\alpha_i + \alpha_j)(a +b)]}
	\geq 
	 2^{-6\sigma - 4\tau - 8} \Delta_i(A)^2 ,
\]
where we use the inequalities \[\abs{\alpha_i} \leq 2^{\tau+2},\quad
\abs{b - a} \geq 2^{-2\sigma}, \quad\text{and}\quad \abs{b - \alpha_i} \leq 2^{\sigma + \tau + 2}.\]
Consequently, 
\[\prod_{i} \Delta_i(A_{\phi}) \geq \prod_{i} \Delta_i(A) \, 2^{-6 \sigma - 4 \tau - 8} = 2^{-\sOO(d \sigma + d \tau)}.\]

This leads to the following lemma
\begin{lemma}
	\label{lem:Aphi-sep}
	Assume $A \in \ZZ[x]$ of size $(d, \tau)$
	and $a,b \in \QQ_{\geq 0}$ of bitsize $\sigma$.
	Then, 
	$-\lg\prod_i\Delta_i(\Aphi) = \sOO(d\sigma + d \tau)$.
\end{lemma}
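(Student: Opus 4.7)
The plan is to make rigorous the chain of inequalities sketched in the paragraphs immediately preceding the lemma. The architecture has three steps: express each pairwise distance between roots of $\Aphi$ as the modulus of a root of an auxiliary quartic $F(\delta)$; bound that modulus from below by a Cauchy-type estimate; and aggregate over the $2d$ roots of $\Aphi$, invoking the known aggregate-separation bound for $A$ from~\cite{emt-dmm-j}.

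First, I would use the factorization $\Aphi(y) = a_d \prod_{i=1}^d\bigl((b-\alpha_i)y^2 + (a-\alpha_i)\bigr)$ to read off the $2d$ roots $\zeta_i^{\pm} = \pm\sqrt{(\alpha_i - a)/(b - \alpha_i)}$. Fix indices $i \ne j$ and signs $\varepsilon_1, \varepsilon_2 \in \{\pm\}$; then $\zeta_i^{\varepsilon_1} - \zeta_j^{\varepsilon_2}$ is a root of the quartic
\[
F(\delta) \;=\; \prod_{\eta_1,\eta_2 \in \{\pm 1\}}\bigl(\delta - \eta_1 \zeta_i^{+} - \eta_2 \zeta_j^{+}\bigr),
\]
whose coefficients, once expanded, match those displayed just before the lemma, with tail $\tc(F) = (\alpha_i - \alpha_j)^2(a - b)^2 / \bigl((b-\alpha_i)^2 (b-\alpha_j)^2\bigr)$.

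Second, I would apply the classical lower bound $|\text{root of } F| \geq 2\,\tc(F)/\normi{F}$ (Theorem~1 of~\cite{emt-dmm-j}) to the roots of $F$. Substituting Cauchy's upper bound $|\alpha_i| \leq 2^{\tau + 2}$ on roots of $A$, the trivial $|b - a| \geq 2^{-2\sigma}$, and $|b - \alpha_i| \leq 2^{\sigma + \tau + 2}$, and clearing common denominators, a short arithmetic manipulation yields
\[
\bigl|\zeta_i^{\varepsilon_1} - \zeta_j^{\varepsilon_2}\bigr| \;\geq\; 2^{-6\sigma - 4\tau - 8}\,\Delta_i(A)^2.
\]
Taking the minimum over admissible $(j, \varepsilon_2)$ bounds the local separation of $\Aphi$ at $\zeta_i^{\varepsilon_1}$ by $2^{-6\sigma - 4\tau - 8}\,\Delta_i(A)^2$.

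Third, summing logarithms over all $2d$ roots of $\Aphi$ (each $\alpha_i$ contributes two of them) gives
\[
-\lg\!\prod_{k=1}^{2d}\Delta_k(\Aphi) \;\leq\; 4\bigl(-\lg\!\prod_{i=1}^d\Delta_i(A)\bigr) + 2d(6\sigma + 4\tau + 8),
\]
which is $\sOO(d\sigma + d\tau)$ by the aggregate-separation estimate $-\lg\prod_i \Delta_i(A) = \sOO(d\tau)$ from~\cite{emt-dmm-j}. The main obstacle is a pair of edge cases: when $i = j$ the quartic $F$ acquires a zero tail and the Cauchy bound degenerates, so we must separately bound $2|\zeta_i^{+}| = 2\sqrt{|\alpha_i - a|/|b - \alpha_i|}$ using $|\alpha_i - a| \geq |\Im(\alpha_i)| \geq \tfrac12\Delta_i(A)$ for complex $\alpha_i$; and the quadratic coefficient of $F$ contains $(a - \alpha_i)/(b - \alpha_i)$, which requires $|b - \alpha_i|$ to be bounded away from zero by the same reasoning. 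Once these edge cases are handled uniformly in $i$, the remainder is routine magnitude bookkeeping.
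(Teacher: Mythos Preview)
Your proposal is correct and follows essentially the same approach as the paper: both construct the auxiliary quartic $F(\delta)$ whose roots are the differences $\zeta_i^{\pm}-\zeta_j^{\pm}$, apply the Cauchy-type lower bound from \cite{emt-dmm-j} to those roots, and then aggregate over all $2d$ roots of $\Aphi$ using the known $\sOO(d\tau)$ aggregate-separation bound for $A$. You are in fact slightly more careful than the paper in flagging the degenerate $i=j$ case (where $\tc(F)=0$) and the need to lower-bound $|b-\alpha_i|$, both of which the paper glosses over.
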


\subsubsection{The perturbation $\eps$ for $\Aphi$}

The bound on $\eps \coloneq 2^{-\sfb}$ depends on a (lower) bound 
of the minimum of a polynomial $A$, in our case, on a minimum of $\Aphi$.
If $\xi_i$ are the roots of the derivative of $\Aphi$,
then $\eps$ depends on a lower bound on $\abs{\Aphi(\xi_i)}$.
To obtain this lower bound we will not rely on Lem.~\ref{lem:eval-A-at-da}
as we did in Sec.~\ref{sec:wpu-sos-steps-complexity}.
Instead we will rely on the following lemma from
this will save us a factor of $d$ in the complexity.

\begin{lemma}{\cite[Lemma~3]{pt-refine-jsc-16}}
  \label{lem:poly-eval-lower-bound} 
  Consider a square-free $A \in \RR[x]$ of degree $d$, and let its real roots be $\alpha_i$.
  Let  $x_0 \in \RR$ be such that $\abs{x_0 - \alpha_i} \geq \Delta_i/ c$
  for all {\em real} $\alpha_i$ such that $i \not=1$
  and $c \geq 2$.
  Then
  \begin{displaymath}
    \abs{A(x_0)} > \abs{\lc(A)} \,\abs{x_0 - \alpha_1} \, c^{1-d} \, \mathcal{M}(A)^{-1} 2^{\lg \prod_i \Delta_i(A) -1}
    \enspace,
  \end{displaymath}
  where $\mathcal{M}(A)$ is the Mahler measure of $A$.
\end{lemma}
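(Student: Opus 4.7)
The plan is to expand $|A(x_0)| = |\lc(A)| \prod_{i=1}^d |x_0 - \alpha_i|$ and estimate each factor, peeling off the factor for $\alpha_1$ (which appears explicitly in the target bound) and proving $|x_0 - \alpha_i| \geq \Delta_i/c$ uniformly for all $i \neq 1$. For real $\alpha_i$ with $i \neq 1$, this bound is exactly the hypothesis. For nonreal $\alpha_i$, since $A \in \RR[x]$, the complex conjugate $\bar{\alpha}_i$ is also a root of $A$, so $\Delta_i \leq |\alpha_i - \bar{\alpha}_i| = 2|\mathrm{Im}\,\alpha_i|$; combined with $|x_0 - \alpha_i| \geq |\mathrm{Im}\,\alpha_i|$ (because $x_0 \in \RR$) and $c \geq 2$, this yields $|x_0 - \alpha_i| \geq \Delta_i/2 \geq \Delta_i/c$. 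This is the one slightly nontrivial observation in the argument: nonreal roots are not covered by the hypothesis, but conjugation forces exactly the same lower bound up to a factor of $2$.

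Multiplying the individual bounds gives
\[
|A(x_0)| \geq |\lc(A)| \, |x_0 - \alpha_1| \, c^{-(d-1)} \prod_{i \neq 1} \Delta_i.
\]
To convert $\prod_{i \neq 1} \Delta_i$ into the full product $\prod_i \Delta_i$, I would upper bound $\Delta_1$ by $2 \max_j |\alpha_j|$ via the triangle inequality, and then bound $\max_j |\alpha_j| \leq \prod_k \max(1, |\alpha_k|) = \mathcal{M}(A)/|\lc(A)|$ directly from the definition of the Mahler measure. Substituting $\Delta_1 \leq 2 \mathcal{M}(A)/|\lc(A)|$ and rewriting $\prod_i \Delta_i$ as $2^{\lg \prod_i \Delta_i(A)}$ then produces the stated inequality (any residual $|\lc(A)|$ factor disappears under the normalization $|\lc(A)| \geq 1$, which is the situation where the lemma is applied).

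The main obstacle, as noted, is recognizing the conjugation trick that handles nonreal roots uniformly and allows the clean $c^{-(d-1)}$ factor over \emph{all} indices $i \neq 1$. Beyond that, the proof is a purely mechanical combination of the linear factorization of $A$, the triangle inequality applied to $|\alpha_1 - \alpha_j|$, and the definition of $\mathcal{M}(A)$; no deeper tool is required.
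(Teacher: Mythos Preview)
The paper does not prove this lemma; it is quoted verbatim from \cite{pt-refine-jsc-16} and used as a black box, so there is no proof in the paper to compare against. Your argument therefore has to be judged on its own.

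Your approach is sound. Writing $|A(x_0)|=|\lc(A)|\prod_i|x_0-\alpha_i|$, separating the factor $|x_0-\alpha_1|$, and establishing $|x_0-\alpha_i|\geq\Delta_i/c$ for every remaining root is exactly the right skeleton. The conjugation observation for nonreal roots is the key point and is correct: if $\alpha_i\notin\RR$ then $\bar\alpha_i$ is another root, so $\Delta_i\le|\alpha_i-\bar\alpha_i|=2|\mathrm{Im}\,\alpha_i|\le 2|x_0-\alpha_i|$, and $c\ge2$ absorbs the $2$. The bound $\Delta_1\le 2\max_k|\alpha_k|\le 2\prod_k\max(1,|\alpha_k|)=2\,\mathcal{M}(A)/|\lc(A)|$ is also correct.

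One point deserves a cleaner statement than the parenthetical you give. Substituting your estimate for $\Delta_1$ yields
\[
|A(x_0)| \;\ge\; |\lc(A)|^{2}\,|x_0-\alpha_1|\,c^{1-d}\,\mathcal{M}(A)^{-1}\,\tfrac12\prod_i\Delta_i(A),
\]
which carries an extra factor of $|\lc(A)|$ compared with the stated inequality. This is not a gap in your reasoning, but it means the argument as written reproduces the lemma only under $|\lc(A)|\ge 1$; you should say so explicitly rather than relegating it to a remark about ``the situation where the lemma is applied''. (In the paper's application to $A_\phi$, the leading coefficient is indeed bounded below, so the restriction is harmless there.) A second, smaller point: the lemma asserts a strict inequality, whereas your chain of estimates yields $\ge$; if you care about strictness you should note where slack occurs, e.g.\ in the step $\max_k|\alpha_k|\le\prod_k\max(1,|\alpha_k|)$ whenever $d\ge2$.
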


Let $\xi \in \RR$ be the root of $\Aphi'$ where the minimum of $\Aphi$ is attained. 
Based on Dimitrov~\cite[Theorem~1]{Dimitrov-GL-98}
$\abs{\zeta_i - \xi} \geq \Delta_i(\Aphi) /d$, for all $i$.
Also $\mathcal{M}(\Aphi) \leq \normt{\Aphi}$.
In addition $\prod_i\Delta_i(\Aphi) \geq 2^{-\sOO(d\sigma + d \tau)}$
(Lem.~\ref{lem:Aphi-sep}).

Thus, 
\[ \abs{\Aphi(\beta)} \geq \abs{\lc(\Aphi)} \, \abs{\beta - \alpha_1} 
\, d^{1-d} \, \normt{\Aphi} \, 2^{\lg \prod_i \Delta_i(\Aphi) -1} ,
\]
which results in an $\eps = 2^{-\sfb}$,
where $\sfb = \sOO(d\sigma + d \tau)$.

\begin{lemma}[The value of $\eps$ for $\Aphi$]
	If $\eps = 2^{-\sfb}$, with $\sfb = \sOO(d\sigma + d \tau)$, then 
$\Aphie(x) = \Aphi(x) - \eps M(x) >0$ for all $x \in \RR$, and 
$-\lg\prod_i\Delta(\Aphi) = -\lg\prod_i\Delta(\Aphie) = \sOO(d\sigma + d \tau)$.
\end{lemma}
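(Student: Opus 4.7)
The plan is to mirror the proof of \Cref{lem:eps-value}, but with $\Aphi$ playing the role of $A$ and with the sharper per-root lower bound from \Cref{lem:poly-eval-lower-bound} replacing \Cref{lem:eval-A-at-da}. This substitution is the crucial point: $\Aphi$ has bitsize $\sOO(\tau + d\sigma)$, so a direct use of \Cref{lem:eval-A-at-da} would introduce a $d^2\sigma$ contribution, whereas \Cref{lem:poly-eval-lower-bound} keeps the $\sigma$-dependence linear in~$d$.

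First I would dispose of the positivity claim. By \cite[Theorem~1]{emt-dmm-j}, the real roots of $\Aphie = \Aphi - \eps M$ lie in an interval $[-R, R]$ with $R = 2^{\sOO(\tau + d\sigma)}$, so it suffices to guarantee $\Aphi(x) > \eps M(x)$ for every $|x| \leq R$. The minimum of $\Aphi$ on this interval is attained at a real critical point~$\xi$; by Dimitrov's inequality \cite[Theorem~1]{Dimitrov-GL-98} applied to~$\Aphi$ one has $|\xi - \zeta_i| \geq \Delta_i(\Aphi)/d$ for every real root $\zeta_i$ of~$\Aphi$. \Cref{lem:poly-eval-lower-bound} with $c = d$ then yields
\[
|\Aphi(\xi)| \geq |\lc(\Aphi)| \cdot |\xi - \zeta_1| \cdot d^{1-d} \cdot \mathcal{M}(\Aphi)^{-1} \cdot 2^{\lg \prod_i \Delta_i(\Aphi) - 1}.
\]
Bounding $|\xi - \zeta_1| \geq \Delta_1(\Aphi)/d$, using $\mathcal{M}(\Aphi) \leq \normt{\Aphi} = 2^{\sOO(\tau + d\sigma)}$, and plugging in the aggregate bound $-\lg \prod_i \Delta_i(\Aphi) = \sOO(d\tau + d\sigma)$ from \Cref{lem:Aphi-sep}, one obtains $|\Aphi(\xi)| \geq 2^{-\sOO(d\tau + d\sigma)}$. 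Since also $\max_{|x|\leq R} M(x) \leq 2^{\sOO(d\tau + d\sigma)}$, choosing $\sfb = \sOO(d\tau + d\sigma)$ is enough.

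For the separation-bound claim, I would apply \Cref{lem:approx-sep} to the pair $(\Aphi, \Aphie)$, exactly as in the derivation leading to \Cref{lem:eps-value-sep}. The analogous computation, based on $\Aphie - \Aphi = -\eps M$ and $\normo{M} = O(d)$, shows that the perturbation size is $\normo{\Aphie - \Aphi} \leq 2^{-\sfb + O(\lg d)} \normo{\Aphi}$, so the three hypotheses of \Cref{lem:approx-sep} reduce to three lower bounds on~$\sfb$ expressed in terms of the root magnitudes, the local separation $\Delta_i(\Aphi)$, and $\prod_{j \neq i}|\zeta_i - \zeta_j|$. All three are verified with $\sfb = \sOO(d\tau + d\sigma)$ using the aggregate separation bound of \Cref{lem:Aphi-sep} together with standard bounds on the roots and on the Mahler measure of~$\Aphi$. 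The conclusion is then the direct analogue of~\eqref{eq:ag-sep-A-Ae}, which combined with \Cref{lem:Aphi-sep} yields the asserted equality $-\lg \prod_i \Delta_i(\Aphie) = -\lg \prod_i \Delta_i(\Aphi) = \sOO(d\sigma + d\tau)$.

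The main obstacle is ensuring that the estimate for $\min |\Aphi|$ depends linearly on~$\sigma$. A naive invocation of the global minimum bound of \Cref{lem:eval-A-at-da} for a polynomial of bitsize $\sOO(\tau + d\sigma)$ produces a $d^2\sigma$ term, and it is precisely the per-root nature of \Cref{lem:poly-eval-lower-bound} --- whose Mahler-measure factor enters only once, not $d$ times --- that allows the saving. Once this is in place, the separation part is essentially a bookkeeping exercise identical to the one already carried out for~$A$ in Sec.~\ref{sec:same-sep}.
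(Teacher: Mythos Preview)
Your proposal matches the paper's approach: the paper also lower-bounds $\min_{\RR}\Aphi$ via \Cref{lem:poly-eval-lower-bound} combined with Dimitrov's critical-point separation and the aggregate bound from \Cref{lem:Aphi-sep}, and implicitly transfers the separation estimate from $\Aphi$ to $\Aphie$ by reusing the argument behind \Cref{lem:eps-value-sep}. If anything, the paper is sketchier than your write-up --- it never discusses $\max_{|x|\le R} M(x)$ and leaves the separation claim entirely to analogy.

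One point to watch, however: your claim $\max_{|x|\le R} M(x)\le 2^{\sOO(d\tau+d\sigma)}$ does not follow from $R=2^{\sOO(\tau+d\sigma)}$ alone. Since $\Aphi$ has degree $2d$, the relevant $M$ has degree $2d$ as well, and $M(R)\approx R^{2d}=2^{\sOO(d\tau+d^2\sigma)}$, reintroducing exactly the $d^2\sigma$ term you are trying to avoid. The paper sidesteps this by simply not writing it down. To make the argument airtight you need something sharper than the raw coefficient-based root bound for $\Aphie$ --- for instance, split at $|x|=1$ and handle $|x|>1$ by applying \Cref{lem:poly-eval-lower-bound} to the reciprocal polynomial $y^{2d}\Aphi(1/y)$, whose separation data are again controlled by \Cref{lem:Aphi-sep}. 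The same remark applies to hypothesis~(iii) of \Cref{lem:approx-sep}, where the factor $M(\zeta_i)^{2d}$ would otherwise cost $d^2\sigma$.
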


\subsubsection{Overall complexity}

\begin{theorem}[\usos at an interval]
	\label{thm:wpusos-interval}
	Let $A \in \ZZ[x]$ of size $(d, \tau)$.
	The algorithm \usos provides a certificate of positivity of $A$ 
	over an interval $[a, b]$, where $a$ and $b$ are rationals of bitsize $\sigma$,
	in $\sOB(d^3 + d^2 \sigma + d^2 \tau)$. 
\end{theorem}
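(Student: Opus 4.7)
The plan is to reduce positivity of $A$ over $[a,b]$ to positivity of the transformed polynomial $\Aphi(y) = (1+y^2)^d A(\phi(y))$ over $\RR$, where $\phi(y) = (a+by^2)/(1+y^2)$, as introduced in Sec.~\ref{sec:pos-over-interval}. Since $\phi$ is a continuous surjection $\RR \to [a,b]$, the polynomial $A$ is positive on $[a,b]$ if and only if $\Aphi$ is positive on $\RR$, so a weighted SOS certificate for $\Aphi$ transfers directly to a certificate for $A$ over the interval.

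The difficulty is that $\Aphi$ has degree $2d$ and coefficient bitsize $\sOO(\tau + d\sigma)$, so invoking \Cref{thm:usos-bit-complexity} as a black box only yields $\sOB(d^3 + d^2\tau + d^3\sigma)$, carrying an unwanted extra factor of $d$ in the $\sigma$-term. To save this factor, I will not apply the main theorem directly; instead, I will re-run the step-by-step analysis of Sec.~\ref{sec:wpu-sos-steps-complexity} for the specific polynomial $\Aphi$, substituting refined bounds for the two quantities that actually govern the complexity of \usos---namely, the value of the perturbation parameter $\eps = 2^{-\sfb}$ and the aggregate separation bound of the roots.

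The two ingredients that power this refinement are already established in Sec.~\ref{sec:pos-over-interval}: \Cref{lem:Aphi-sep} shows that $-\lg \prod_i \Delta_i(\Aphi) = \sOO(d\sigma + d\tau)$, which is linear in $d$ rather than quadratic, because the substitution $\phi$ only roughly squares each local separation of $A$ and multiplies by a factor polynomial in $2^\sigma$ and $2^\tau$; and the subsequent lemma, obtained by invoking the sharper evaluation bound of \Cref{lem:poly-eval-lower-bound}, yields $\sfb = \sOO(d\sigma + d\tau)$ as the required magnitude of the perturbation. With these two inputs in hand, each of the three steps of \usos is repeated verbatim for $\Aphi$: Step~1 constructs $\Aphie$ with $\OO(d)$ rational additions on numbers of bitsize $\sOO(d\sigma+d\tau)$, costing $\sOB(d^2\sigma+d^2\tau)$; Step~2 applies \Cref{thm:root-isol-approx} with the \emph{sharpened} separation bound to approximate the roots of $\Aphie$ to precision $\kappa = \sOO(d\sigma+d\tau)$ in $\sOB(d^3 + d^2\sigma + d^2\tau)$, and then constructs $\wPe, \wQe$ by the fan-in algorithm at the same cost; Step~3 assembles the remaining weighted-SOS summands with $\OO(d)$ further additions of rationals of bitsize $\sOO(d\sigma+d\tau)$.

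The main obstacle is ensuring that the rootfinding cost of Step~2 truly scales as $\sOB(d^3 + d^2\sigma + d^2\tau)$ and not $\sOB(d^3\sigma)$: this requires plugging the sharpened aggregate separation bound from \Cref{lem:Aphi-sep} into \Cref{thm:root-isol-approx} directly, rather than going through the naive estimate based on the coefficient bitsize of $\Aphi$. Once this is done, summing the three steps yields the announced total complexity $\sOB(d^3 + d^2\sigma + d^2\tau)$.
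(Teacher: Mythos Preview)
Your proposal is correct and follows essentially the same approach as the paper's proof: reduce to positivity of $\Aphi$ over $\RR$, then re-run the \usos analysis using the sharpened aggregate separation bound of \Cref{lem:Aphi-sep} and the refined perturbation bound $\sfb = \sOO(d\sigma + d\tau)$ from \Cref{lem:poly-eval-lower-bound}, so that the root-approximation step (which dominates) costs $\sOB(d^3 + d^2\sigma + d^2\tau)$ rather than $\sOB(d^3\sigma + d^2\tau)$. The only minor omission is that you do not mention the cost of actually constructing $\Aphi$ from $A$ via the M\"obius/Taylor shift, which the paper records as $\sOB(d^2\sigma + d\tau)$; this is in any case absorbed by the final bound.
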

\begin{proof}
	The computation of $A_{\phi}$ consists of a series of Mobius transformations. The most 
	computationally expensive is the (Taylor) shift, that is the transformation $x \mapsto x + a$, where $a$ is a rational of bitsize $\sigma$.
	This costs $\sOB(d^2 \sigma + d \tau)$ bit operations and results in a polynomial of bitsize $\sOO(\tau + d \sigma)$,
	\cite{vzgg-shift-97}.
	
	The input to \usos is $\Aphi$.		
	By choosing an $\eps \coloneq 2^{-\sfb}$, such that $\sfb = \sOO(d\sigma + d \tau)$,	we guarantee that $\Aphie = \Aphi - \eps M$ is positive over $\RR$ and 
	that the separation bounds of $\Aphi$ and $\Aphie$ are asymptotically the same; in our case $\sOO(d\sigma + d \tau)$.
The approximation of the roots of $\Aphie$ up to precision $\kappa$ costs $\sOB(d^3 + d^2\tau + d^2\tau + d \kappa)$,
	where $\kappa = \sOO(d\sigma + d \tau)$. 
	This cost dominates the overall complexity of the algorithm. 
\end{proof}

\begin{remark}
	If we target a certificate of positivity for the interval $(0, \infty)$, 
	then it suffices to consider the map $x \mapsto y^2$,
	and so there is no dependency on $\sigma$.
\end{remark}

To recover a weighted SOS representation of $A$ from a weighted SOS decomposition
of $\Aphi$, we proceed as follows \cite{chml-usos-alg-11}:
In the representation of $\Aphi$,
we decompose each polynomial $s_j$ into terms of odd and even degree, that is
\[
	\Aphi(y) = \sum\nolimits_{j=1}^{\nu} w_j \, s_j(y)^2 = 
	 \sum_j w_j\, s_{j,e}(y^2)^2 + w_j\, y^2\, s_{j,o}(y^2)^2.
\]
Then, we invert the change of variable using
$y^2 \mapsto \tfrac{x -a}{b -x}$ and $1 + y^2 \mapsto \tfrac{b-a}{b-x}$.
Consequently, after clearing denominators, we distinguish two cases.

\noindent
If $d$ is even, then 
\[
A(x) = \sum_j \tfrac{w_j}{(b - a)^d} \left[ (b - x)^{\frac{d}{2}} \,s_{j,e} \bigl( \tfrac{x - a}{b - x} \bigr) \right]^2 
+ (x - a)(b - x) \sum_j \tfrac{w_j}{(b - a)^d} \left[ (b - x)^{\frac{d}{2} - 1} \,s_{j,o} \bigl( \tfrac{x - a}{b - x} \bigr) \right]^2.
\]
If $d$ is odd, then 
\[
A(x) = (b - x) \sum_j \tfrac{w_j}{(b - a)^d} \left[ (b - x)^{\frac{d - 1}{2}} \,s_{j,e} \bigl( \tfrac{x - a}{b - x} \bigr) \right]^2 
+ (x - a) \sum_j \tfrac{w_j}{(b - a)^d} \left[ (b - x)^{\frac{d - 1}{2}} \,s_{j,o} \bigl( \tfrac{x - a}{b - x} \bigr) \right]^2.
\]

In both cases, it is important to notice that the bitsize of the polynomials in the representation is $\sOO(d\sigma + d \tau)$.

\section{Perturbed SOS certificate}
\label{sec:pert-sos}

Even though \usos algorithm is very efficient, it has the drawback that it forces us to compute (approximate) the roots of the polynomial $\Ae$,
instead of the input polynomial $A$. This has the consequence, 
that we should work with precision $\OO(d \tau)$ bits right from the beginning of the algorithm. 
Even more, the positivity certificate it corresponds to,
involves $\OO(d)$ sums of squares, while we know that it is theoretically possible to construct certificates with a smaller number of summands \cite{pourcet-5squares}. 
We try to leverage these weaknesses, 
by introducing an alternative, yet closely related, certificates that we call \emph{perturbed SOS certificate} for rational univariate polynomials. 
These, certify the nonnegativity for carefully chosen approximations (or perturbations) of the input polynomials. 
We demonstrate that, if the approximation is closed-enough (under a norm), then the nonnegativity of the perturbed polynomial is equivalent to the nonnegativity of the original polynomial. 
In this way, the perturbed SOS certificate involves only two squares of polynomials with rational coefficients.

\begin{theorem} \label{thm:approxImpliesNonNeg}
    Let $A \in \ZZ[x]$ be  of even degree $d$ and maximum coefficient bitsize $\tau$. If there is a nonnegative polynomial $B \in \QQ[x]$ such that $\normi{A-B} < 2^{-\sfb}$, with 
    $\sfb > 4 d \tau + 16d\lg{d}$,
    then $A$ is nonnegative.
\end{theorem}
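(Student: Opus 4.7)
The plan is to prove the contrapositive: assuming $A$ is not nonnegative over $\RR$, I will locate a real critical point $\xi$ at which $A(\xi)$ is quantitatively negative, and then show that the pointwise perturbation $\abs{A(\xi) - B(\xi)}$, which is controlled by $\normi{A - B} < 2^{-\sfb}$, is strictly smaller than $\abs{A(\xi)}$. This forces $B(\xi) < 0$, contradicting the assumed nonnegativity of $B$.

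First I would observe that $a_d > 0$. Since $a_d \in \ZZ$ with $\abs{a_d} \geq 1$ and $\normi{A - B} < 2^{-\sfb} \ll 1$, the degree-$d$ coefficient of $B$ has the same sign as $a_d$; together with $d$ even and $B$ nonnegative on $\RR$, this forces $a_d > 0$ (the case of $B$ having degree larger than $d$ is handled by noting that all such higher-degree coefficients of $B$ have absolute value $< 2^{-\sfb}$, hence cannot dominate the degree-$d$ term at any bounded point). Because $a_d > 0$ and $d$ is even, $A(x) \to +\infty$ as $\abs{x} \to \infty$, so the global minimum of $A$ on $\RR$ is attained at some real critical point $\xi$, and by hypothesis $A(\xi) < 0$. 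A Cauchy-type root bound gives $\abs{\xi} \leq 2^{\tau + O(1)}$, and applying \Cref{lem:eval-A-at-da} to $A$ at the root $\xi$ of $A'$ yields
\[
  \abs{A(\xi)} \;\geq\; 2^{-4 d \tau - 16 d \lg d},
\]
which is precisely the estimate already invoked in \Cref{sec:epsSmallToSOS}.

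With $\xi$ fixed, the final step compares $\abs{A(\xi)}$ with the pointwise perturbation
\[
  \abs{A(\xi) - B(\xi)}
  \;\leq\; \normi{A - B} \cdot (d + 1) \cdot \max(1, \abs{\xi}^d).
\]
The threshold $\sfb > 4 d \tau + 16 d \lg d$ is engineered precisely so that this upper bound is strictly smaller than the lower bound on $\abs{A(\xi)}$; it then follows that $\sgn(B(\xi)) = \sgn(A(\xi)) < 0$, contradicting the nonnegativity of $B$.

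The main obstacle will be the fine tracking of constants: a naive accounting $\abs{\xi}^d \leq 2^{d(\tau + O(1))}$ combined with $\abs{A(\xi)} \geq 2^{-4 d \tau - 16 d \lg d}$ appears to push the sufficient threshold up to roughly $5 d \tau + O(d \lg d)$, so to recover the stated $4 d \tau + 16 d \lg d$ one likely has to invoke \Cref{lem:eval-A-at-da} in a form that already absorbs the factor $\max(1, \abs{\xi})^d$ on its right-hand side, or to exploit a sharper estimate on $\abs{\xi}$ coming from the fact that $\xi$ is pinched between two consecutive real roots of $A$. The handling of the case $\deg B > d$, by contrast, is only a minor technicality since each excess coefficient is bounded by $2^{-\sfb}$ and hence contributes negligibly at the fixed bounded point $\xi$.
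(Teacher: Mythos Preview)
Your approach is essentially identical to the paper's: both locate a critical point $\xi$ with $A(\xi)<0$, invoke \Cref{lem:eval-A-at-da} to bound $|A(\xi)|\geq 2^{-4d\tau-16d\lg d}$, bound $|A(\xi)-B(\xi)|$ via $\normi{A-B}\cdot(d+1)\max(1,|\xi|)^d$ with $|\xi|\leq 2^{\tau+O(1)}$, and compare. The paper phrases this as first establishing $A(x)+2^{-4d\tau-16d\lg d}\geq 0$ for all $x$ and then deriving the contradiction, but the content is the same.

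The constant obstacle you flag is genuine, and the paper does \emph{not} resolve it: its chain asserts $2^{(\tau+2)d+\lg d+1-\sfb}<2^{-4d\tau-16d\lg d}$, which would require $\sfb>5d\tau+16d\lg d+O(d)$, yet only $\sfb>4d\tau+16d\lg d$ is assumed. So you are not missing a trick; the paper's stated constant is simply off by roughly $d\tau$ (consistent with its caveat that the displayed constants are ``rough''). Your remark on $a_d>0$ is more careful than the paper, which asserts it without comment; the $\deg B>d$ case is indeed a minor technicality the paper also ignores.
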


\begin{proof}
For all the (real) roots, say $\alpha$, of $A$ it holds
$\abs{\alpha} < 2^{\tau+2}$, e.g.~\cite[Theorem~1]{emt-dmm-j}.
Moreover, as the leading term of $A$ is positive and $d$ is even, 
when $x$ goes to $\pm \infty$, then $A(x)$ goes to $+ \infty$. Hence, $A(x) > 0$ for every $x$ such that  $\abs{x} > 2^{\tau + 2}$.

Now let $\abs{x} \leq 2^{\tau +2}$, 
that is $x \in [-2^{\tau +2}, 2^{\tau +2}] =: J$. We will bound the maximum value of function $\abs{(A-B)(x)}$ 
in this interval. It holds 
\[
    \max_{x \in J} \abs{(A-B)(x)} 
    \leq \big(2^{(\tau + 2)}\big)^d (d+1) \, \normi{A-B} 
    \leq 2^{(\tau + 2)d + \lg{d} + 1 - \sfb} < 2^{-4 d \tau - 16d\lg{d}}.
\]
As $B(x)$ is a positive function, we have that for any $x \in J$,
    it holds 
    $0 \leq B(x) \leq A(x) + 2^{-4 d \tau - 16d\lg{d}}.$
As $A(x) > 0$  when $x \not\in J$,
    we deduce that 
    \begin{equation}
    	\label{eq:Aplus2b}	
	    0 \leq A(x) + 2^{-4 d \tau - 16d\lg{d}} ,
	    \qquad \text{for all } x \in \RR .
    \end{equation}
	To conclude, it remain to prove that the previous inequality, \eqref{eq:Aplus2b}, 
	implies the nonnegativity of $A(x)$. 
	We argue by contradiction.
Assume that $A(x)$ is not nonnegative. 
    As $A(x) \geq 0$, $\abs{x} > 2^{\tau + 2}$, there must have a negative critical value. Assume that $x^{\star}$ is a critical point realizing the biggest, strictly-negative, critical value.
By Lem.~\ref{lem:eval-A-at-da}, we have that $A(x^{\star}) < -2^{-4 d \tau - 16d\lg{d}}$, as $x^{\star}$ is a root of $A'$ but not root of $A$. However, this implies that $A(x^{\star}) + 2^{-4 d \tau - 16d\lg{d}} < 0$, which contradicts \eqref{eq:Aplus2b}.
\end{proof}

\begin{definition}
	\label{def:pert-SOS-cert}
	Let $A \in \QQ$ be of size $(d, \tau)$, where $d$ is even,
	that is nonnegative over $\RR$.
Fix rational polynomials $s_1,\dots, s_r \in \mathbb Q[x]$ and positive rational constants $\lambda_1,\dots,\lambda_r \in \QQ_{\geq 0}$, and let $B(x) := \sum_{i=1}^r \lambda_i s_i^2$. 
If $\normi{A-B} < 2^{-\sfb}$, with $\sfb > 4 d \tau + 16 d \lg d$,
    then $B$ is a \emph{perturbed SOS certificate} of nonnegativity of $A$.
\end{definition}

\begin{remark}
    Recall that, if $A(x)$ is square-free, then $A(x) \geq 0$ is equivalent to $A(x) > 0$. Moreover, we can certify that a polynomial is square-free by certifying that the greatest common divisor of $A(x)$ and $A'(x)$ is $1$ via B\'ezout's identity. Therefore, in the square-free case, the perturbed SOS certificates of nonnegativity lead to certificates of positivity. 
\end{remark}
 
 By combining all the previous ideas, we can exploit the SOS representation of 
 \eqref{eq:pos-certificate-R} to obtain a perturbed SOS certificate for a square-free polynomial $A$. 
 For this, we will approximate the roots of $A$ up to a precision 
 that will guarantee that the induced polynomials $\wt{P}$ and $\wt{Q}$
 form a perturbed SOS certificate of positivity for $A$. 
The following theorem gives the details of this approach.

\begin{theorem}
	\label{thm:pert-sos-cert-bits}
    Let $A \in \ZZ[x]$ be a positive square-free of even degree $d = 2m$ and maximum coefficient bitsize $\tau$.
Let $\alpha_j^{+} = \gamma_j + \imath \, \delta_j$, where
    $\gamma_j \in \RR$, $\delta_j \in \RR_{\geq 0}$, for $j \in [m]$;
    that is $\alpha_j^{+}$ are the roots with positive imaginary part.
    
    Let  $\wt{\gamma}_j \in \QQ$, $\wt{\delta}_j \in \QQ_{> 0}$, for $j \in [m]$,
    be rational approximations up to precision $2^{-\lambda}$, for $\lambda = 9 d \tau + 60 d \lg d$. That is
    \[
    	\abs{\alpha_i - (\wt{\gamma}_j - \imath  \wt{\delta}_j)} \leq 2^{-\lambda}.
    \]
Then, the polynomial $\wt{P}^2 + \wt{Q}^2$, where $\wt{P}, \wt{Q} \in \QQ[x]$ are  defined as 
    \[ 
    	\wt{P}(x) + \imath \, \wt{Q}(x) = 
    	\prod\nolimits_{j = 1}^{m} (x - \wt{\gamma}_j - \imath \, \wt{\delta}_j).
    \]
    is a perturbed SOS certificate of positivity of $A$.
\end{theorem}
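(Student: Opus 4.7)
The plan is to verify \Cref{def:pert-SOS-cert} with the certificate $B := \lc(A)\,\bigl(\wt{P}^2 + \wt{Q}^2\bigr)$, that is, a weighted sum of two rational squares with the common weight $\lc(A)\in\QQ_{>0}$, by showing $\normi{A-B} < 2^{-\sfb}$ for some $\sfb > 4d\tau + 16d\lg d$, and then invoking \Cref{thm:approxImpliesNonNeg}. The starting point is the classical decomposition from the main theorem on positivity over $\RR$: since $A$ is square-free and positive on $\RR$ of even degree $d=2m$, one has $A(x) = \lc(A)\,(P(x)^2 + Q(x)^2)$, where $P,Q\in\RR[x]$ are defined by $P+\imath Q = \prod_{j=1}^m (x-\alpha_j^+)$. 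Hence the task reduces to bounding $\normi{P^2 + Q^2 - \wt{P}^2 - \wt{Q}^2}$ from above.

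First, I would bound $\normi{P - \wt{P}}$ and $\normi{Q - \wt{Q}}$ by propagating the root approximation hypothesis through the fan-in multiplication of \Cref{lem:fan-in}. Each factor $x - \alpha_j^+$ has infinity norm bounded by $1 + |\alpha_j^+| \le 2^{\tau+3}$ by the Cauchy root bound~\cite[Theorem~1]{emt-dmm-j}, and each approximate factor $x - \wt\gamma_j - \imath \wt\delta_j$ differs from it by at most $2^{-\lambda}$. Following the computation performed immediately before \Cref{cor:P-Q-complexity} verbatim (the only difference being that the bound $2^{\tau+5}$ on the roots of $\Ae$ is replaced by the bound $2^{\tau+2}$ on the roots of $A$), I obtain
\[
\max\bigl\{\normi{P-\wt{P}},\,\normi{Q-\wt{Q}}\bigr\}\ \le\ 2^{-\lambda + 2 d\tau + \OO{d}}.
\]

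Second, I would pass from errors on $P, Q$ to errors on $P^2 + Q^2$ via the identity $P^2 - \wt{P}^2 = (P - \wt{P})(P + \wt{P})$ combined with the crude estimate $\normi{fg} \le (\min\{\deg f, \deg g\} + 1)\,\normi{f}\,\normi{g}$. Because $P$ and $Q$ are the real and imaginary parts of a monic polynomial whose roots are of modulus at most $2^{\tau+2}$, their coefficients (elementary symmetric functions in those roots) satisfy $\normi{P}, \normi{Q} \le 2^{d\tau/2 + \OO{d}}$. Combining everything and multiplying by $\lc(A) \le 2^{\tau}$ gives $\normi{A - B} \le 2^{-\lambda + 3 d\tau + \OO{d\lg d}}$. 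The choice $\lambda = 9d\tau + 60d\lg d$ then leaves ample slack to clear the threshold $2^{-(4d\tau + 16d\lg d + 1)}$, so \Cref{thm:approxImpliesNonNeg} applies and $B$ certifies the nonnegativity of $A$; square-freeness upgrades this to positivity, as noted in the remark following \Cref{def:pert-SOS-cert}.

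The main obstacle is the bookkeeping of constants: \Cref{lem:fan-in} produces a prefactor of the form $2^{(4m-4)(\tau+c) + \OO{m}}$ for a small constant $c$, squaring roughly doubles this exponent, and the final multiplication by $\lc(A)$ adds another $\tau$. The precision $\lambda = 9d\tau + 60d\lg d$ is calibrated precisely to absorb each of these terms while still clearing the $4d\tau + 16d\lg d$ threshold imposed by \Cref{thm:approxImpliesNonNeg}; verifying that every constant in the chain of inequalities aligns is the only delicate point of the proof.
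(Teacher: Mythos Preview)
Your argument is correct, but you take a slightly more roundabout route than the paper. You apply \Cref{lem:fan-in} to the $m$-factor product $\prod_{j=1}^m (x-\wt\gamma_j-\imath\wt\delta_j)$ in order to bound $\normi{P-\wt P}$ and $\normi{Q-\wt Q}$, and then propagate to $P^2+Q^2-\wt P^2-\wt Q^2$ via the telescoping identity $P^2-\wt P^2=(P-\wt P)(P+\wt P)$, which in turn forces you to separately control $\normi{P},\normi{Q}$. The paper instead applies \Cref{lem:fan-in} \emph{once} to the full $d$-factor product $\prod_{j=1}^m (x-\wt\gamma_j-\imath\wt\delta_j)(x-\wt\gamma_j+\imath\wt\delta_j)$, which directly yields $\normi{A-\lc(A)(\wt P^2+\wt Q^2)}<2^{-\lambda+4d(\tau+2)+32d-6}<2^{-5d\tau-19d\lg d}$; the identification with $\wt P^2+\wt Q^2$ is then purely algebraic, as in Step~2 of \Cref{sec:wu-sos-presentation}. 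Your approach has the minor advantage of reusing the exact computation preceding \Cref{cor:P-Q-complexity}, but at the cost of an extra squaring step and the auxiliary bound on $\normi{P}$; the paper's single application of fan-in to all $d$ approximate roots is cleaner and makes the constant-tracking nearly automatic.
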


\begin{proof}
	All the roots of $A$ have magnitude smaller that $2^{\tau + 2}$, e.g.~\cite[Theorem~1]{emt-dmm-j}.
	Using this, and the fact that $\lambda > 9 d \tau + 60 \, d \lg d$, 
	Lem.~\ref{lem:fan-in}, implies the following inequality
    \[
   \Normi{A(x) - 
   	\lc(A) \prod\nolimits_{j = 1}^{m}(x - \wt{\gamma}_j - \imath \, \wt{\delta}_j)
   		                    (x - \wt{\gamma}_j + \imath \, \wt{\delta}_j)} 
    < 2^{-\lambda + (4d (\tau + 2) + 32\,d - 7) + 1} 
    < 2^{- 5 d \tau - 19 d \lg d}.
    \]
    Moreover, following the arguments in Step~2 of Sec.~\ref{sec:wu-sos-presentation}
   	\[
    \wt{P}(x)^2 + \wt{Q}(x)^2 = 
    \lc(A) \prod\nolimits_{j = 1}^{m}(x - \wt{\gamma}_j - \imath \, \wt{\delta}_j)
   		                    (x - \wt{\gamma}_j + \imath \, \wt{\delta}) .
    \]
    Therefore, from Def.~\ref{def:pert-SOS-cert} and Thm.~\ref{thm:approxImpliesNonNeg},
    $\wt{P}^2 + \wt{Q}^2$ is a perturbed SOS certificate of nonnegativity for $A(x)$. As $A(x)$ is square-free, we conclude that $A(x)$ is positive.
\end{proof}

Regarding the complexity of computing the perturbed certificate,
we have the following theorem.
\begin{theorem}
	\label{thm:pert-sos-cert-complexity}
	Let $A \in \ZZ[x]$ be a square-free positive polynomial of over $\RR$, of size $(d, \tau)$.
	We can compute a perturbed SOS certificate of positivity of $A$
	in $\sOB(d^3 + d^2 \tau)$.
	The certificate consists of the polynomial $\wt{P}^2 + \wt{Q}^2$
	that has bitsize  $\sOO(d \tau)$. 
\end{theorem}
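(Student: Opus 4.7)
The plan is to read the theorem as a purely complexity-theoretic companion to Theorem~\ref{thm:pert-sos-cert-bits}: correctness of the output $\wt P^2 + \wt Q^2$ as a perturbed SOS certificate is already established there, so the only task is to bound the cost of (i) computing rational approximations $\wt{\gamma}_j + \imath\,\wt{\delta}_j$ of the $m = d/2$ roots of $A$ with non-negative imaginary part, up to the precision $\lambda = 9d\tau + 60\, d\lg d$ required by that theorem, and (ii) assembling from these approximations the polynomial pair $(\wt P, \wt Q)$ via the fan-in procedure underlying Lem.~\ref{lem:fan-in}.

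For step (i), I would apply \Cref{thm:root-isol-approx} \emph{directly} to $A$ (no perturbation is needed, since $A$ is already square-free and positive). The quantities controlling the cost in that statement are the coefficient bitsize, $\tau_A = \tau$, the aggregate separation $-\lg \prod_i \Delta_i(A) = \sOO(d\tau)$ from \cite[Theorem~1]{emt-dmm-j}, and the products $P_i = \tfrac{1}{d a_d}\abs{A'(\alpha_i)}$ of Eq.~\eqref{eq:Pi-eval}, whose reciprocal is also $2^{\sOO(d\tau)}$. Plugging these, together with the target precision $\lambda = \sOO(d\tau)$, into the bound of \Cref{thm:root-isol-approx}, yields a cost of $\sOB(d^3 + d^2\tau + d\lambda) = \sOB(d^3 + d^2\tau)$; this is precisely the same accounting carried out in Cor.~\ref{cor:Ae-roots-approx-complexity}, but now for $A$ itself instead of $\Ae$.

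For step (ii), the rational complex numbers $\wt{\gamma}_j + \imath\,\wt{\delta}_j$ have bitsize $\OO(\tau + \lambda) = \sOO(d\tau)$, so Lem.~\ref{lem:fan-in} computes the product $\prod_{j=1}^m (x - \wt{\gamma}_j - \imath\,\wt{\delta}_j) = \wt P(x) + \imath\, \wt Q(x)$ in $\sOB(d(\lambda + d\tau)) = \sOB(d^2\tau)$ bit operations, and the coefficients of $\wt P, \wt Q$ (and hence of $\wt P^2 + \wt Q^2$) have bitsize $\sOO(\lambda + d\tau) = \sOO(d\tau)$. Splitting real and imaginary parts is free. Combining steps (i) and (ii) gives the overall bound $\sOB(d^3 + d^2\tau)$ and the claimed certificate bitsize $\sOO(d\tau)$.

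The main obstacle — already handled implicitly by the choice of $\lambda$ in Theorem~\ref{thm:pert-sos-cert-bits} — is to ensure that this single precision is simultaneously large enough that Lem.~\ref{lem:fan-in} guarantees $\normi{A - (\wt P^2 + \wt Q^2)} < 2^{-(4d\tau + 16d\lg d)}$ (so Thm.~\ref{thm:approxImpliesNonNeg} triggers) and small enough that both approximating the roots and executing the fan-in stay within $\sOB(d^3 + d^2\tau)$. Since $\lambda = \sOO(d\tau)$ satisfies both requirements, no further delicate balancing is needed, and the complexity claim follows.
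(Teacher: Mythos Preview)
Your proposal is correct and follows essentially the same approach as the paper: invoke Theorem~\ref{thm:pert-sos-cert-bits} for correctness and the value $\lambda = \sOO(d\tau)$, apply \Cref{thm:root-isol-approx} directly to $A$ for the root-approximation cost $\sOB(d^3 + d^2\tau)$, and then use Lem.~\ref{lem:fan-in} for both the cost of assembling $\wt P,\wt Q$ and the $\sOO(d\tau)$ bitsize bound. Your write-up is in fact more explicit than the paper's (you spell out which quantities feed into \Cref{thm:root-isol-approx}), but the argument is the same.
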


\begin{proof}
The computation of a perturbed SOS certificate for $A$, 
requires us to (efficiently) approximate the  roots $A(x)$ (up to any desired precision), say $\lambda$.
In our case, $\lambda = \sOO(d \tau)$ (Thm.~\ref{thm:pert-sos-cert-bits}).
The approximation of all the roots requires $\sOB(d^3 + d^2 \tau)$ bit operations (Thm.~\ref{thm:root-isol-approx}).
The computation of $\wt{P}$ and $\wt{Q}$ is based on Lem.~\ref{lem:fan-in}
and also costs $\sOB(d^3 + d^2 \tau)$. 
Lem.~\ref{lem:fan-in} shows that the bitsizes of $\wt{P}$ and $\wt{Q}$ are $\sOO(d \tau)$.
\end{proof}

\section{Interlacing polynomials and T-systems}
\label{sec:interlacing}

For a given square-free polynomial $A \in \RR[x]$, positive over $\RR$, 
$(0, \infty)$, or any interval $[a, b] \subset \RR$, 
we establish a connection, actually an equivalence, 
between an SOS representation of $A$, 
and the T-systems and Karlin points~\cite{Dio-T-systes-arxiv}. 
In this way, we make the theory of T-systems constructive
we show that the Karlin points are the real roots of certain 
interlacing polynomials.
We refer the reader to \Cref{sec:T-systems} 
for the (very) basic definitions and to \cite{Dio-T-systes-arxiv} for further details 
on T-systems and positive polynomials.

\subsection{Strict positivity and Karlin points over $\RR$}
\label{sec:Karlin-R}

Consider the square-free polynomial $A(x) =  \sum_{k=0}^d a_k x^k \in \RR[x]$,
of even degree $d$ and $a_d >0$.
We further assume that $A$ is strictly positive over $\RR$, hence it does not have any real roots; its factorization to linear factors is
\begin{equation}
	\label{eq:A-root-factor}
	A(x) = a_d \prod_{i=1}^{d}(x - \alpha_i) = 
	a_d  \prod_{i=1}^{d/2} (x - \gamma_i + \imath\, \delta_i)
	\prod_{i=1}^{d/2} (x - \gamma_i - \imath\, \delta_i),
\end{equation}
where we assume that $\delta_i > 0$ for all $i \in [d/2]$.
We define the polynomials $P, Q \in \RR[x]$ in such a way that the following equations hold,
\begin{equation}
	\label{eq:P-pm-Q}
	\prod_{i=1}^{d/2} (x - \gamma_i + \imath\, \delta_i) = 
	P(x) + \imath Q(x)
	\quad \text{ and } \quad
	\prod_{i=1}^{d/2} (x - \gamma_i - \imath\, \delta_i) = 
	P(x) - \imath Q(x) .
\end{equation}

\begin{lemma}[Positivity over $\RR$ and interlacing]
	\label{lem:PQ-interlace}
	If $A  =  \sum_{k=0}^d a_k x^k \in \RR[x]$, of degree $d = 2m$ and $a_d > 0$, 
	is strictly positive over $\RR$, let $P, Q \in \RR[x]$ be as defined above. Then, we have that
	$A(x) = a_d \, P(x)^2 + a_d \, Q(x)^2$ and the polynomials $P$ and $Q$
	are interlacing of degrees $m$ and $m-1$, respectively.
\end{lemma}
\begin{proof}
	\Cref{eq:A-root-factor,eq:P-pm-Q} imply that $A$ is a sum of two squares.
If there are no assumptions on $\delta_i$'s, 
then there are $2^{d/2 -1}$ possible distinct pairs
of $\Set{P(x) \pm \imath Q(x)}$, e.g.,~\cite{reznick-concrete17-00},
and consequently, $2^{d/2 -1}$ inequivalent representations of $A$ as a sum of two squares.
However, by the uniqueness of the decomposition in \Cref{cor:tsystemsAllLine}, the only representation that results on $P$ and $Q$ being interlacing are the ones that we obtain when  we impose the condition $\delta_i > 0$, or equivalently $\delta_i < 0$,
for all $i \in [d/2]$. 
This is no coincidence, as the claim is a direct consequence of the Hermite-Biehler (HB) theorem
(Theorem~\ref{thm:HB}), 
that states that  $P(x)  + \imath Q(x)$ has not roots in the upper half plane
if and only if $P$ and $Q$ interlace. 
\end{proof}

In the Appendix, we present an alternative proof of Lemma~\ref{lem:PQ-interlace}, 
which also could be seen as a proof of HB theorem.
It is not based on analytic properties of polynomials,
as is the proof(s) of HB, e.g.~\cite[Thm.~6.3.4]{rs-athp-2002},
but relies directly on the factorization of a univariate polynomial 
to linear factors. 
It seems, from our point of view at least, that our proof is conceptually simpler,
and potentially of independent interest. 
For a restatement of the HB theorem 
in the modern language of real-stable polynomials and proper position 
we refer the reader to the work of Br\"anden, e.g.~\cite{branden-matroid-07}.

\begin{remark}
	An interesting remark is that since $P$ and $Q$ are real rooted, 
	their number of positive real roots is \emph{exactly} the number
	of sign variations in their coefficients list. This follows from
	Descartes' rule of sign.
\end{remark}

From the point of view of T-systems, \Cref{sec:T-systems} and \Cref{thm:Karlin}, 
for a strictly positive polynomial $A$ over $\RR$
the polynomials 
$P^2$ and $Q^2$ of Lamma~\ref{lem:PQ-interlace} 
play the role of $f^{*}$ and $f_{*}$.
In our case $\cF = \{1, x, x^2, \dots, x^d\}$.
Also, the Karlin points, see \Cref{eq:K-xi-yj},
are unique and interlacing. The latter is also a property of the real roots of $P$ and $Q$. Hence, we have the following corollary
that characterizes the Karlin points and allows us to compute them.

\begin{corollary}[Karlin points over $\RR$]
	\label{cor:Karlin-pts-R}
	Let $A \in \RR[x]$ of degree $d = 2m$ be strictly positive over $\RR$, 
	and $A(x) = P(x)^2 + Q(x)^2$, where $P, Q \in \RR[x]$, 
	as in Lemma~\ref{lem:PQ-interlace}.
	The real roots of $P$ and $Q$ are the Karlin points.
\end{corollary}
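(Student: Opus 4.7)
The plan is to match the decomposition $A = a_d(P^2 + Q^2)$ produced by \Cref{thm:PQ-interlace} against the Karlin normal form recalled in the introduction, and then close by invoking the uniqueness clause of Karlin's theorem. The corollary should follow from a direct translation between the two normal forms; no new analytic input is needed beyond what \Cref{thm:PQ-interlace} already supplies.

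Concretely, I would first use \Cref{thm:PQ-interlace} to fix notation: $P$ is real-rooted of degree $m$ with roots $x_1 < \dots < x_m$, $Q$ is real-rooted of degree $m-1$ with roots $y_1 < \dots < y_{m-1}$, and the two root-sets interlace strictly, namely $x_1 < y_1 < x_2 < \dots < y_{m-1} < x_m$. Real-rootedness and interlacing are the conclusions of \Cref{thm:PQ-interlace}, and strict interlacing is inherited from the assumption $\delta_i > 0$ for all $i$ through the inductive use of Fisk's criterion in its proof.

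Next, since $P(x) + \imath Q(x) = \prod_{i=1}^{m}(x - \gamma_i + \imath \delta_i)$, the polynomial $P$ is monic of degree $m$, so $P(x) = \prod_{i=1}^{m}(x - x_i)$. Let $c$ denote the leading coefficient of $Q$; I would observe that $c \neq 0$ because $\deg Q = m-1$ (the recursion $Q_{m+1} = (x - \gamma_{m+1}) Q_m + \delta_{m+1} P_m$ in the proof of \Cref{thm:PQ-interlace} shows, inductively, that $|c|$ is a positive linear combination of the strictly positive $\delta_i$'s). Substituting the factored forms of $P$ and $Q$ into $A = a_d(P^2 + Q^2)$ yields
\[
A(x) = a_d \prod_{i=1}^{m}(x - x_i)^2 + a_d c^2 \prod_{j=1}^{m-1}(x - y_j)^2,
\]
which is exactly the Karlin normal form for $A$ over $\RR$ with $\alpha := a_d > 0$ and $\beta := a_d c^2 > 0$, with the required interlacing ordering of the distinguished points already established above.

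The decisive step is then the uniqueness clause of Karlin's theorem recalled in the introduction (see \cite{Karlin-repr-pos-63}): the constants $\alpha, \beta$ together with the $m$ points playing the role of the $x_i$ and the $m-1$ points playing the role of the $y_j$ are uniquely determined by $A$. Therefore the real roots of $P$ and $Q$ must coincide with the Karlin points of $A$. I do not expect a genuine obstacle here; the only nontrivial book-keeping is the non-vanishing of $c$, which is what guarantees $\beta > 0$ and thus that the second summand is a bona fide square of a degree-$(m-1)$ polynomial rather than a degenerate term.
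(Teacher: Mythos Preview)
Your proposal is correct and follows essentially the same route as the paper: write the decomposition $A = a_d P^2 + a_d Q^2$ in the Karlin normal form with interlacing nodes, then invoke the uniqueness of the Karlin representation over $\RR$ to identify the roots of $P$ and $Q$ with the Karlin points. Your version is somewhat more explicit than the paper's (you spell out $\alpha = a_d$, $\beta = a_d c^2$, and justify $c\neq 0$ via the recursion), but the underlying argument is identical.
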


\subsection{Strict positivity and Karlin points over $[0, \infty)$}
\label{sec:Karlin-interval}

Consider $A(x) =  \sum_{k=0}^d a_k x^k \in \RR[x]$,
where $a_d >0$ and $a_0 \not= 0$.
We assume that $A$ is strictly positive over $[0, \infty)$;
so, it does not have nonnegative real roots.
In this case, Pol\'ya and Szeg\"o proved, the now classical result,
that $A$ has the representation $A(x) = f(x) + x\, g(x)$, where $f$ and $g$ 
are sum of squares of real polynomials
and $\deg(f), \deg(x\, g) \leq \deg(A)$, 
e.g.,~\cite[Prop.~2]{pr-univ-pos-int-00}.

We will show that if $A$ is strictly positive over $[0, \infty)$, then 
$A$ has a representation $A = \rP^2 + x\,  \rQ^2$, where $\rP$ and $\rQ$ 
are interlacing polynomials
and their real roots are the Karlin points of $A$ in $[0, \infty)$.
We use the notation $\rP$ and $\rQ$ to highlight that these
polynomials are different from the polynomials $P$ and $Q$ of Lemma~\ref{lem:PQ-interlace}.

\begin{theorem}[Strict positivity over $[0, \infty)$]
	\label{lem:PQ-interlace-0-inf}
	Let $A =  \sum_{k=0}^d a_k x^k \in \RR[x]$, of degree $d$ and $a_d > 0$, 
	be strictly positive over $[0, \infty)$.
	Then, $A(x) = a_d \, \rP(x)^2 + a_d\, x\, \rQ(x)^2$, where the polynomials $\rP, \rQ \in \RR[x]$ are interlacing and have only positive real roots.	
	If $A$ has even degree, $\deg(A) = d = 2m$, 
	then $\deg(P) = \tfrac{d}{2}$ and $\deg(Q) = \tfrac{d}{2} - 1$.
If $A$ has odd degree, $\deg(A) = d = 2m + 1$, 
	then $\deg(P) = \tfrac{d-1}{2} = \deg(Q)$.
\end{theorem} 
	
To simplify the calculations, we assume that $A$ is monic, that is $a_d = 1$.

\begin{proof}
Assume that $A(x)$ is positive in the interval $(0,\infty)$. 
Then, $A(x)$ admits the following factorization
\[
A(x) = \prod_{i=1}^r (x + \gamma_i) \prod_{j=1}^t (x - \alpha_j - \imath \beta_j)(x - \alpha_j + \imath \beta_j),
\]
    where $d = r + 2t$, $\gamma_i, \beta_i > 0$ and $\alpha_i \in \mathbb R$.

    Consider now the polynomial $A(x^2)$. The factorization of this polynomial is 
    \[A(x^2) = \prod_{i=1}^r (x - \imath \sqrt{\gamma_i} )(x + \imath \sqrt{\gamma_i} ) 
    \prod_{j=1}^t 
    (x - \eta^+_j + \imath \eta^-_j )
    (x + \eta^+_j - \imath \eta^-_j ) 
    (x - \eta^+_j - \imath \eta^-_j )
    (x + \eta^+_j + \imath \eta^-_j ) \enspace, 
    \]
    where
    $\eta^{\pm}_j = \sqrt{\frac{\sqrt{\alpha_j^2+\beta_j^2} \pm \alpha_j}{2}}$, $\gamma_i,\beta_i > 0$ and $\alpha_i \in \mathbb R$.

Notice that $A(x^2)$ is strictly positive in $\mathbb R$, so by Lemma~\ref{lem:PQ-interlace}, there are two real polynomials $P$ and $Q$ with only reals interleaving roots, such that $A(x^2) = P^2 + Q^2$ and,
\begin{align*}
P + \imath\, Q  &= 
\prod_{i=1}^r (x - \imath\, \sqrt{\gamma_i})
    \prod_{j=1}^t 
    (x + \eta^+_j - \imath\, \eta^-_j ) 
    (x - \eta^+_j - \imath\, \eta^-_j )
    \\
    &= (\imath)^r \prod_{i=1}^r (\sqrt{\gamma_i} - \imath\, x)
    \prod_{j=1}^t 
    (x^2 - (\eta^+_j)^2 - (\eta^-_j)^2 - \imath\,  2 \eta^-_j x ) \enspace.
    \end{align*}
We introduce a new variable $\omega$ and we set $\omega = \imath\, x$. 
Then, there are real constants $\{c_{a,b}\}_{a,b}$ such that,
\begin{align*}
    P + \imath\, Q &= 
    (\imath)^r \prod_{i=1}^r (\sqrt{\gamma_i} - \omega)
    \prod_{j=1}^t 
    (x^2 - (\eta^+_j)^2 - (\eta^-_j)^2 - 2 \eta^-_j  \omega)   
    = 
    (\imath)^r \sum_{a,b} c_{a,b} (x^2)^a \omega^b
    \\
    &= (\imath)^r \Big(\sum_{a,k} c_{a,2k} (x^2)^a \omega^{2k}
    +
    \sum_{a,k} c_{a,2k+1} (x^2)^a \omega^{2k+1} \Big) \enspace.
\end{align*}
By combining all the identities, we get that
\[
P + \imath\, Q =
     (\imath)^r \Big(\sum_{a,k} (-1)^{k} c_{a,2k} (x^2)^{a+k} 
    +
    \imath\,  x \sum_{a,k} (-1)^{k} c_{a,2k+1} (x^2)^{a+k} \Big)
=:
(\imath)^r (\rP(x^2) + \imath\,  x \rQ(x^2)) .
\]
Therefore, it holds 
\[
A(x^2) = P^2 + Q^2 = (\imath)^r (-\imath)^r (\rP(x^2) + \imath\,  x \rQ(x^2)) (\rP(x^2) - \imath\, x \rQ(x^2)) = 
\rP(x^2)^2 + x^2 \rQ(x^2)^2 ,
\]
and so $A(x) = \rP(x)^2 + x\, \rQ(x)^2$. To see that the roots of $\rP$ and $\rQ$ interlace, with no loss of generality, assume $r$ is even. Then,
$P(x) = (-1)^{r/2} \rP(x^2)$ and $Q(x) = (-1)^{r/2} x \rQ(x^2)$.
By Lemma~\ref{lem:PQ-interlace}, $P$ and $Q$ are real rooted and interlacing.
Hence, $\rP$ and $\rQ$ have only real positive roots. 
Also they are interlacing, as the square root of every root of $\rP$, respectively $\rQ$, is a root of $P$, respectively $Q$.
\end{proof}

As in the case of $\RR$, the next corollary follows straightforwardly 
if we combine Lemma~\ref{lem:PQ-interlace-0-inf} with the uniqueness property
of Karlin points. 

\begin{corollary}[Karlin points over $[0, \infty)$]
	Let $A \in \RR[x]$, of degree $d$, be positive over $(0, \infty)$, 
	and $A(x) = \rP(x)^2 + x \rQ(x)^2$, where $\rP, \rQ \in \RR[x]$, 
	as in Lemma~\ref{lem:PQ-interlace-0-inf}.
	The real roots of $\rP$ and $\rQ$ are the Karlin points of $A$ in $(0, \infty)$.
\end{corollary}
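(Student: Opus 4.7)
The plan is to deduce this corollary as a direct consequence of Theorem~\ref{thm:PQ-interlace-0-inf} combined with the uniqueness property of the Karlin decomposition on $(0,\infty)$. Concretely, the analogue of Karlin's theorem for the half-line (see \Cref{sec:T-systems} and the variants of \cite{Karlin-repr-pos-63,KarStu-Tsystems-66,Dio-T-systes-arxiv}) asserts that every strictly positive $A \in \RR[x]$ on $(0,\infty)$ admits a \emph{unique} representation of the form
\[
A(x) = \alpha \prod_{i} (x - u_i)^2 \;+\; \beta\, x \prod_j (x - v_j)^2,
\]
where $\alpha,\beta > 0$, $u_i, v_j \in (0,\infty)$, and the sequences $(u_i)$ and $(v_j)$ are strictly interlacing; the points $u_i$ and $v_j$ are, by definition, the Karlin points of $A$ on $(0,\infty)$.

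First, I would invoke \Cref{thm:PQ-interlace-0-inf} to write $A = a_d\,\rP^2 + a_d\, x\, \rQ^2$ with $\rP,\rQ \in \RR[x]$ interlacing, both having only positive real roots, and with degrees matching those forced by the parity of $d$. Factoring $\rP$ and $\rQ$ over $\RR$ as products of linear factors corresponding to their positive real roots and collecting the leading coefficients into constants $\alpha, \beta > 0$ puts the decomposition exactly in the form demanded by Karlin's theorem on $(0,\infty)$.

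Next, I would apply the uniqueness statement from the T-system framework: since there is only one such representation, the nodes appearing in the decomposition provided by \Cref{thm:PQ-interlace-0-inf} must coincide with the Karlin points of $A$ on $(0,\infty)$. In particular, the real roots of $\rP$ are exactly the points $u_i$, and the real roots of $\rQ$ are exactly the points $v_j$.

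The only subtle point is verifying that the decomposition produced by \Cref{thm:PQ-interlace-0-inf} genuinely satisfies the hypotheses of the Karlin uniqueness theorem on $(0,\infty)$, namely: positivity of the leading coefficients, positivity of all the roots of $\rP$ and $\rQ$, the interlacing property, and the correct degree pattern (depending on the parity of $d$ and on whether the "extra" factor of $x$ is absorbed into the second summand). All of these have already been established in \Cref{thm:PQ-interlace-0-inf}, so the matching with Karlin's normal form is immediate, and uniqueness closes the argument. The main (and essentially only) conceptual obstacle is ensuring that the degree bookkeeping in the even and odd degree cases lines up with the Karlin normal form for $(0,\infty)$; once that is checked, the identification of the roots with the Karlin points is forced.
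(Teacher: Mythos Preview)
Your proposal is correct and follows exactly the same approach as the paper: combine \Cref{thm:PQ-interlace-0-inf} with the uniqueness of the Karlin decomposition on $(0,\infty)$ to identify the roots of $\rP$ and $\rQ$ with the Karlin points. The paper states this in a single sentence without spelling out the degree and positivity checks you mention, but the argument is identical.
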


\medskip

What about strict positivity over an arbitrary interval $I \subseteq \RR$?
To study this case, we follow closely Powers and Reznick \cite{pr-univ-pos-int-00}.
If a polynomial $A$ is strictly positive over an interval $[a, b] \subseteq \RR$, 
then by the transformation
\[
B(x) = A\Big( \tfrac{(b - a)x + (b + a)}{2} \Big),
\]
we obtain a polynomial $B$ that is strictly positive over $[-1,1]$.
The transformation involves homothecy and translation; both of them preserve interlacing. 
Hence, it suffices to consider the strict positivity of $A$ over $[-1,1]$.
Even more, strict positivity over $[0, \infty)$ 
and strict positivity over $[-1, 1]$ are closely related through the Goursat transform. 
The $d$-th degree Goursat transform is 
\[
\cG[A] (x)= (1 + x)^d \, A \Big( \tfrac{1 - x}{1 + x} \Big).
\]
If we apply $\cG$ two times, then we notice that it is almost, up to a constant depending on the degree, its own inverse, 
that is 
\[
\cG[\cG[A]](x) = (1 + x)^d \,\cG[A]\Big( \tfrac{1 - x}{1 + x} \Big)
= (1 + x)^d \Big(1 + \tfrac{1 - x}{1 + x}\Big)^d
A \Big( \tfrac{1 - \frac{1 - x}{1 + x}}{1 + \frac{1 - x}{1 + x}} \Big)
= 2^d A(x).
\]
 
 So, Goursat's lemma states that 
  $A$ is strictly positive over $[-1,1]$ 
 if and only if $\cG[A]$ is strictly positive over $[0, \infty)$ and $\deg(\cG[A]) = d$
 \cite[Lemma~1]{pr-univ-pos-int-00}.
 Similarly, $A$ is strictly positive over $[0, \infty)$ if and only if 
 $\cG[A]$ is strictly positive over $[-1, 1]$ and $\deg(\cG[A]) \leq d$.
 
 The application of $\cG$ consists of a composition of homothecies, translations, and inversions.
 All three preserve interlacing, hence it suffices to study the positivity,
 and compute the Karlin points,  
  on $(0, \infty)$.
  
  Even though from a theoretical point of view $(0, \infty)$  
  and $[-1,1]$ are almost equivalent, with respect to positivity certificates, this is not, exactly the case, 
  from a practical point of view. Goursat's transform might increase
  the norm of the polynomial and hence we might be forced to compute with bigger coefficients. We do not exploit this direction further.

 \section{Implementation and experiments}
 \label{sec:implementation}
 
 We provide an open source implementation of \usos in \textsc{maple}, 
 available in \href{https://gitlab.inria.fr/cert/upos.git}{UPOS}.
 We also perform some  experiments to demonstrate the efficiency
 of the algorithm and our implementation.
 The software is easy to use and, besides \textsc{maple}, 
 does not rely on external software packages.

\subsection{Some insights on the implementation}
\label{sec:implementation-details}

Regarding the actual implementation of \usos,  
there are certain choices that play an important role in practice.

\noindent
\emph{Computing the perturbation $\eps$.}
The first and most important decision concerns how to compute the perturbation
$\eps$ in Step~1 of the algorithm.
An obvious choice would be to use the a-priori worst case bound
for the minimum of the input polynomial $A$,
see Lem.~\ref{lem:eval-A-at-da}. This is a bad choice,
as  we are forcing the implementation to \emph{always} operate 
with the worst case number of bits
and slows down the running times dramatically. We do not consider this case at all.
Our goal should be to compute with the actual minimum of $A$
and not rely on worst case bounds; which very often overestimate a lot the bitsize. 

In this context, a straightforward strategy to compute $eps$  consists in starting with an initial value, say $\frac{1}{2}$, and repeatedly dividing by 2,
until the polynomial $\Ae := A - \eps M $ has no real roots
\emph{and} positive leading coefficient. Hence, each iteration performs a call
to a real root-finding algorithm.
This is the strategy implemented in the \cite{mss-wusos-alg-19}.
Unfortunately, this approach also has significant theoretical and practical drawbacks.
The worst case bound for $\eps$ is, roughly, $2^{-\sOO(d \tau)}$
(Cor.~\ref{cor:Ae-roots-approx-complexity}, Lem.~\ref{lem:eps-value}),
thus, with this approach, 
we need to perform $\sOO(d \tau)$ iterations, in the worst case, 
to compute $\eps$. This leads to a bit complexity
bound $\sOB(d^3 \tau^2)$, which is not anymore linear in the bitsize of the input! 
From a practical point of view, this choice has the consequence that it forces the 
number of iteration to depend on the bitsize of $\eps$, which, if it is big, slows down significantly the running times,
even when the worst case bitsize bounds are not attained.

A natural question to ask is if we can come up with an efficient implementation to compute $\eps$ that, at the same time, does not dominate or alter the overall worst case bit complexity of \usos. 

To overcome the previous, theoretical and practical, obstacles
we rely on the important and simply observation that we can consider the dyadic representation of the perturbation, $\eps = 2^{-\sfb}$; notice that $\sfb = \sOO(d \tau)$, roughly (Lem.~\ref{lem:eps-value}).
Then, instead computing $\eps$, our goal should be to compute $\sfb$. 
We do so by performing exponential binary search starting from zero.
To find $2^{-\sfb}$, we start at exponent $e=0$, i.e., $2^0=1$, and
successively double it: $1,2,4,\dots$ until $e$ exceeds $\sfb$.  This exponential
phase yields an interval $(e/2,\, e]$ that must contain $\sfb$.  Then,
we perform binary search on this interval to compute the exact exponent
$\sfb$.
The exponential phase doubles the exponent until it exceeds $\sfb$, requiring
$\OO(\log \sfb)$ steps.  The subsequent binary search on the interval
$(e/2,\, e]$ also takes $\OO(\lg \sfb)$ steps.  Thus, the overall complexity of
exponential binary search on the exponent to find $\eps := 2^{-\sfb}$ is
$\OO(\log \sfb) = \sOO( \lg(d \tau))$ iterations.
Recall that corresponds to number of times we call the real root-finding algorithm
to test whether that polynomial $A - 2^{-e} M$ has real roots or not. 
This approach has the advantages of all the other approaches
and none of their drawbacks.
It does not alter the worst case complexity bound
and from a practical point of view the gain is also enormous.

In Table~\ref{tab:exp-eps} we present a comparison on various approaches for 
computing $\eps$, on some positive polynomials coming from \cite{chml-usos-alg-11};
see also the next section for further details.
The first three columns are: the id of the polynomial, its degree and its input bitsize.
The fourth columns is the computed bound on $\eps := 2^{\sfb}$; actually its exponent.
The last three columns present the number of steps that various algorithms
perform to compute $\eps$.
The column \texttt{SubDiv} is the subdivision method, starting from $\tfrac{1}{2}$
and repeatedly dividing by 2. The column  \texttt{Min+ExpBin}
corresponds to a hybrid algorithm that first approximates the minimum the polynomial
and then finds the best dyadic bound using exponential search.
Finally, the last column,  \texttt{ExpBin}, is the (pure) exponential binary search algorithm that we described earlier. Evidently exponential binary search is the most efficient 
algorithm for computing $\eps$.
For a demonstration of how the computation of $\eps$ affects 
the actual running time of \usos, we refer to Table~\ref{tab:w-comparison}.
In this set of experiments, we consider 
the three algorithms on modified Wilkinson polynomials, see Eq.~\ref{eq:modif-Wilk-poly} and also next section. Again, the exponential binary search approach
is best approach. 
Our implementation of \usos offers all three approaches for computing $\eps$.

\begin{table}[h]
\centering
\begin{tabular}{|c|r|r||r||r|r|r|}
\hline
\# & Degree&  \makecell{Input \\bitsize} &  $\eps = 2^{-\sfb}$ & 
	\makecell{\texttt{SubDiv} \\ \#(iters)} & \makecell{\texttt{Min+ExpBin} \\ \#(iters)} & \makecell{ \texttt{ExpBin} \\ \#(iters)} \\
\hline
 1  & 13 & 359  &  60 &  61 & 15 & 13 \\
 3  & 32 & 439  & 116 & 117 & 27 & 15 \\
 4  & 22 & 492  & 103 & 104 & 24 & 15 \\
 5  & 34 & 775  & 235 & 236 & 27 & 17 \\
 6  & 17 & 190  &  69 &  70 & 30 & 15 \\
 7  & 43 & 371  &  70 &  71 & 27 & 15 \\
 8  & 22 & 275  &  74 &  75 & 24 & 15 \\
 9  & 20 & 353  &  45 &  46 & 27 & 13 \\
10  & 25 & 312  &  67 &  68 & 24 & 15 \\
\hline
\end{tabular}
\caption{Comparison of the number of steps that various approaches perform to compute a a bound on $\eps$; the initial perturbation of \usos. The polynomials are from \cite{chml-usos-alg-11}.}
\label{tab:exp-eps}
\end{table}

\medskip

\noindent
\emph{Deciding the working precision.}
Another important choice concerns the (initial) precision 
we use to approximate the roots of the polynomial $\Ae$.
The implementation of (complex) root finding should avoid
working right from the beginning with the worst case theoretical bounds.
It does not seem very likely that we will need to compute with
(that) many bits of precision. We expect that the separation bound,
that is the minimum distance between the roots of a polynomial
not to be very small. Indeed this is the case for (a wide variety of) ``random'' polynomials \cite{etct-bwc-issac-22}.
Following this discussion, our implementation starts
with some initial precision and if the corresponding inequalities 
are not satisfied, Eq.~\eqref{eq:eps-ineq}, then we double it. 
Our experiments suggest that a good practical heuristic is to consider 
a starting accuracy that depends on the bitsize of the perturbation $\eps$.
To approximate the complex roots of a polynomial we use the
build-in \texttt{Isolate} function of \maple.
It realizes the algorithm of Imbach and Moroz \cite{im-slv-23}.
To refine roots, up to any desired precision, 
we use the build-in \maple function \texttt{hefroots:-refine} which, unfortunately, to the date, is without documentation.

\medskip
\noindent
\emph{A disclaimer.}
We should mention, that even though it is easier to provide mathematical software on top of well established computer algebra packages, like \maple, 
this convenience comes with certain limitations.
For example, we cannot work exclusively with numbers that are powers of two;
these are the only type of numbers that we need for the perturbation $\eps$
and the approximation of the roots of $\Ae$.
Such implementation tricks can speed the actual running times by several orders of magnitude, e.g.~\cite{te-jcf-98,rz-desc-04}.
In \maple, it is very difficult, if possible at all, to implement 
such tricks, while it is not that complicated in standard programming 
languages, like  C/C++.

\subsection{Experiments with positive polynomials}
\label{sec:experiments}

We perform various experiments to demonstrate the efficiency of our implementation
and to study its practical behavior.
All the experiments were performed using \textsc{maple 25}, on MacBook Air with an Apple M2 cpu, having 24GB of memory, and running Sonoma 14.7.5.
The running times presented, at the last column of every table, are the average of 10 runs.

The first set of experiments that we performed are on polynomials coming from \cite{chml-usos-alg-11}. This is a set of nine polynomials that we need to certify that they are positive in a (small) interval having rational coefficients. 
The results appear in Table~\ref{tab:sollya}. The first column is the index of the polynomial, the second its degree, the third its bitsize, the fourth is the maximum bitsize of the polynomials and rationals in the SOS decomposition, and the last one is the time needed by our implementation of \usos to compute the rational SOS decomposition. Even though our implementation is quite efficient and the bitsize of the output is reasonable, it is difficult to draw general conclusion
as the polynomials are varying difficulty, i.e., different bitsizes, minimum, separations bounds, etc.

\begin{table}[ht]
	\centering
        \begin{tabular}{|r||r|r|r|r|}
		\hline
		\makecell{\#} & \makecell{Degree} & \makecell{Input \\bitsize} & \makecell{Output \\bitsize} & \makecell{Time\\(ms)} \\
  		\hline
 1 &	 13 &	        359 &	       2\,655 &	   7 \\ 
 3 &	 32 &	        439 &	       8\,084 &	   62 \\ 
 4 &	 22 &	        492 &	       4\,351 &	   28 \\ 
 5 &	 34 &	        775 &	      16\,675 &	  171 \\ 
 6 &	 17 &	        190 &	       3\,227 &	   15 \\ 
 7 &	 43 &	        371 &	      14\,540 &	  176 \\ 
 8 &	 22 &	        275 &	       4\,040 &	   26 \\ 
 9 &	 20 &	        353 &	       2\,620 &	    8 \\ 
10 &	 25 &	        312 &	       3\,226 &	    6 \\ 
\hline		
\end{tabular}
    	\captionof{table}{Positive polynomials in an interval from \cite{chml-usos-alg-11}.}
    	\label{tab:sollya}
\end{table}

Another set of experiments considers modified Wilkinson
polynomials,  inspired from \cite{mss-wusos-alg-19}.
These are polynomials of the form 
\begin{equation}
	\label{eq:modif-Wilk-poly}
 	A(X) = \prod_{i=2}^n(x - i)^2 - x^2/11237 + 1.
\end{equation}
The output data of this set appears in Table~\ref{tab:w}.
Again we observe a linear fit concerning 
the theoretical and the experimental bound on the bitsize of the SOS decomposition. 
The graph and the linear equation appears in Fig.~\ref{fig:w-fit}.
We notice a fluctuation in the output bitsize, which is probably due to the fact
that these polynomials are not good representatives of ``generic'' positive polynomials.
Nevertheless, our implementation computes a weighted SOS representation efficiently.

\begin{table}[ht]
    \centering
    \begin{minipage}{0.3\textwidth}
        \centering
\begin{tabular}{|r|r|r|r|}
	\hline
	\makecell{Degree} & \makecell{Input \\bitsize} & \makecell{Output \\bitsize} & \makecell{Time\\(ms)} \\
  \hline
 10 & 17  & 721        & 31      \\
12 & 22  & 870        & 27      \\
14 & 28  & 1\,283     & 49      \\
16 & 34  & 2\,472     & 53      \\
18 & 41  & 2\,568     & 77      \\
20 & 48  & 7\,527     & 146     \\
22 & 55  & 8\,133     & 196     \\
24 & 62  & 8\,557     & 265     \\
26 & 70  & 9\,118     & 145     \\
28 & 78  & 8\,675     & 162     \\
30 & 86  & 8\,447     & 302     \\
32 & 94  & 24\,294    & 822     \\
34 & 102 & 25\,567    & 1\,080  \\
36 & 111 & 11\,077    & 1\,052  \\
38 & 119 & 28\,004    & 1\,536  \\
40 & 128 & 28\,733    & 1\,826  \\
    \hline
\end{tabular}
	\captionof{table}{Modified Wilkinson polynomials}
	\label{tab:w}
    \end{minipage}
    \hfill
    \begin{minipage}{0.6\textwidth}
        \centering
        \includegraphics[scale=0.38]{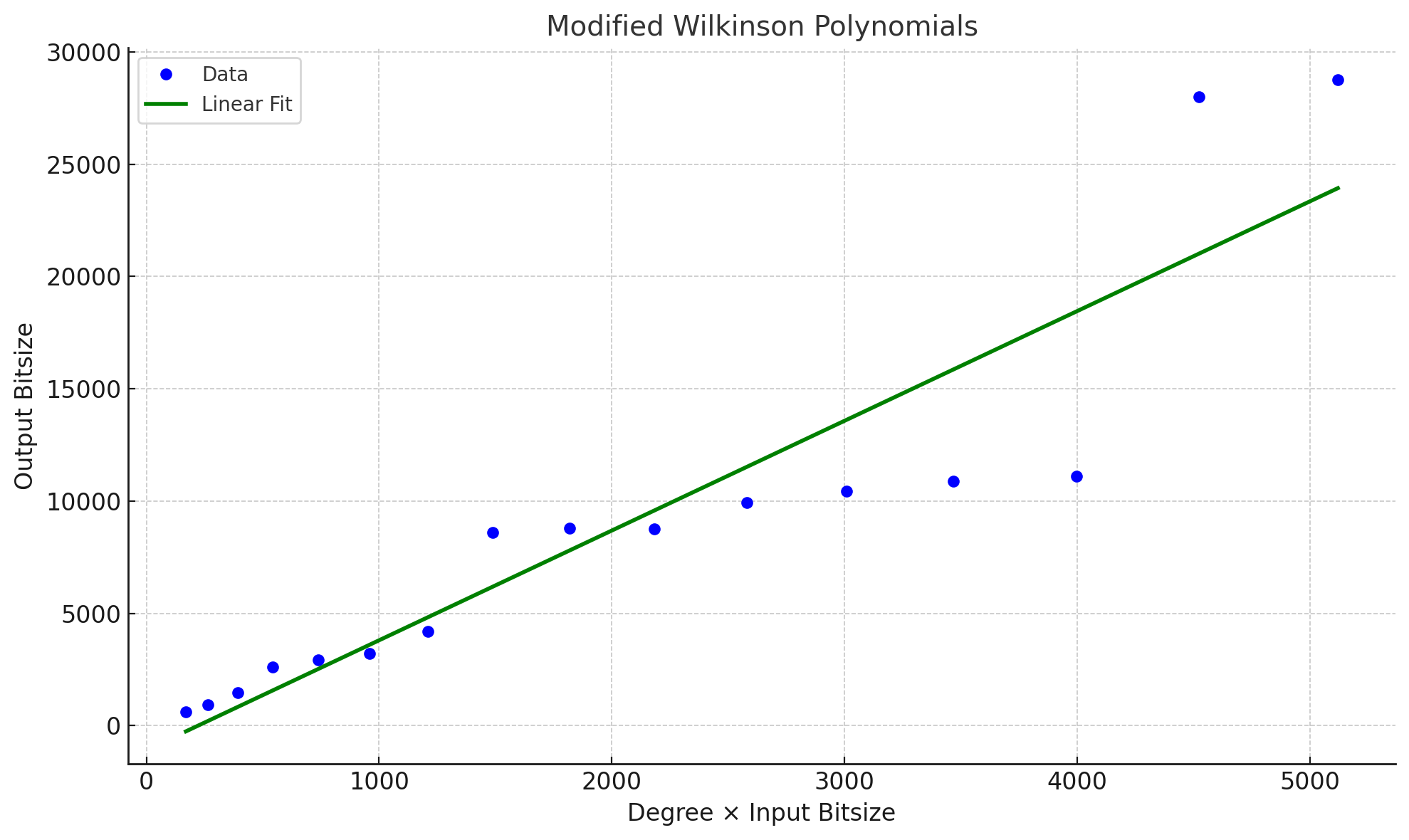}
        \captionof{figure}{The linear fit of positive Wilkinson polys is \\$\text{Output Bitsize} \simeq 4.8866 (d\cdot  \text{ Input Bitsize}) - 1085.69$}
        \label{fig:w-fit}
    \end{minipage}
\end{table}

\begin{table}[ht]
    \centering
\begin{tabular}{|r|r|r||r||r||r|}
	\hline
	\makecell{Degree} & \makecell{Input \\bitsize} & \makecell{Output \\bitsize} & \makecell{\texttt{SubDiv} \\ Time (ms) } & \makecell{\texttt{Min+ExpBin} \\ Time (ms)} & \makecell{\texttt{ExpBin} \\ Time (ms)} \\
  \hline
10 &         17 &           721  &      40    &      36    &      21 \\
 12 &         22 &          870  &      45    &      31    &      24 \\
 14 &         28 &       1\,283  &     101    &      54    &      47 \\
 16 &         34 &       2\,472  &     116    &      66    &      49 \\
 18 &         41 &       2\,568  &     179    &     139    &      75 \\
 20 &         48 &       7\,527  &     362    &     214    &     145 \\
 22 &         55 &       8\,133  &     635    &     323    &     207 \\
 24 &         62 &       8\,557  &   1\,075   &     450    &     255 \\
 26 &         70 &       9\,118  &     728    &     159    &     143 \\
 28 &         78 &       8\,675  &     879    &     164    &     152 \\
 30 &         86 &       8\,447  &   1\,316   &     525    &     291 \\
 32 &         94 &      24\,294  &   2\,425   &   1\,309   &     770 \\
 34 &        102 &      25\,567  &   4\,328   &   1\,690   &     983 \\
 36 &        111 &      11\,077  &   6\,295   &   1\,833   &     994 \\
 38 &        119 &      28\,004  &   9\,777   &   2\,837   &   1\,470 \\
 40 &        128 &      28\,733  &  12\,717   &   3\,396   &   1\,840 \\
    \hline
\end{tabular}
	\captionof{table}{Running times of \usos to compute the SOS representation of modified Wilkinson polynomials, Eq.~\eqref{eq:modif-Wilk-poly}, based on different methods to compute the initial perturbation $\eps$.}
	\label{tab:w-comparison}
\end{table}

\begin{table}[ht]
    \centering
    \begin{minipage}{0.3\textwidth}
        \centering
        \begin{tabular}{|r|r|r|r|}
		\hline
		\makecell{Deg} & \makecell{Input \\bsz} & \makecell{Output \\bsz} & \makecell{Time\\(ms)} \\
  		\hline
 20 &	         82 &	       1\,161 &	    2 \\ 
 40 &	         81 &	       2\,248 &	    2 \\ 
 60 &	         83 &	       3\,296 &	    4 \\ 
 80 &	         83 &	       4\,379 &	    5 \\ 
 100 &	         83 &	       5\,440 &	    6 \\ 
 120 &	         83 &	       6\,520 &	    7 \\ 
 140 &	         83 &	       7\,577 &	   15 \\ 
 160 &	         84 &	       8\,671 &	   17 \\ 
 180 &	         84 &	       9\,754 &	   39 \\ 
 200 &	         84 &	      10\,777 &	   49 \\ 
 220 &	         84 &	      11\,871 &	   91 \\ 
 240 &	         84 &	      12\,980 &	  248 \\ 
 260 &	         84 &	      14\,001 &	 1\,715 \\ 
   	\hline
		\end{tabular}
		\captionof{table}{Sum of 3 squares of random polynomials.}
		\label{tab:sos-3}
    \end{minipage}
    \hfill
    \begin{minipage}{0.3\textwidth}
       \centering
        \begin{tabular}{|r|r|r|r|}
		\hline
		\makecell{Deg} & \makecell{Input \\bsz} & \makecell{Output \\bsz} & \makecell{Time\\(ms)} \\
  		\hline
  20 &	         83 &	       1\,139 &	    2 \\ 
 40 &	         84 &	       2\,222 &	    1 \\ 
 60 &	         84 &	       3\,292 &	    2 \\ 
 80 &	         84 &	       4\,346 &	    3 \\ 
 100 &	         84 &	       5\,426 &	    5 \\ 
 120 &	         84 &	       6\,520 &	    5 \\ 
 140 &	         85 &	       7\,581 &	    6 \\ 
 160 &	         85 &	       8\,631 &	   12 \\ 
 180 &	         85 &	       9\,718 &	   17 \\ 
 200 &	         84 &	      10\,830 &	   22 \\ 
 220 &	         85 &	      11\,848 &	   45 \\ 
 240 &	         85 &	      12\,905 &	   91 \\ 
 260 &	         85 &	      14\,029 &	  369 \\ 
    	\hline
		\end{tabular}
    	\captionof{table}{Sum of 11 squares of random polynomials.}       
    	\label{tab:sos-11}
    \end{minipage}
        \hfill
    \begin{minipage}{0.3\textwidth}
       \centering
        \begin{tabular}{|r|r|r|r|}
		\hline
		\makecell{Deg} & \makecell{Input \\bsz} & \makecell{Output \\bsz} & \makecell{Time\\(ms)} \\
  		\hline
 20 &	         84 &	       1\,143 &	    1 \\ 
 40 &	         84 &	       2\,214 &	    1 \\ 
 60 &	         85 &	       3\,281 &	    3 \\ 
 80 &	         85 &	       4\,349 &	    3 \\ 
 100 &	         85 &	       5\,406 &	    4 \\ 
 120 &	         85 &	       6\,532 &	    4 \\ 
 140 &	         85 &	       7\,595 &	    6 \\ 
 160 &	         85 &	       8\,662 &	    9 \\ 
 180 &	         85 &	       9\,704 &	   16 \\ 
 200 &	         86 &	      10\,789 &	   26 \\ 
 220 &	         86 &	      11\,841 &	   54 \\ 
 240 &	         85 &	      12\,935 &	   88 \\ 
 260 &	         86 &	      14\,019 &	  483 \\ 
    	\hline
		\end{tabular}
		\captionof{table}{Sum of 31 squares of random polynomials.}       
		\label{tab:sos-31}
    \end{minipage}
\end{table}

\begin{figure}[ht]
    \centering
    \begin{minipage}{0.45\textwidth}
   \centering
        \includegraphics[scale=0.31]{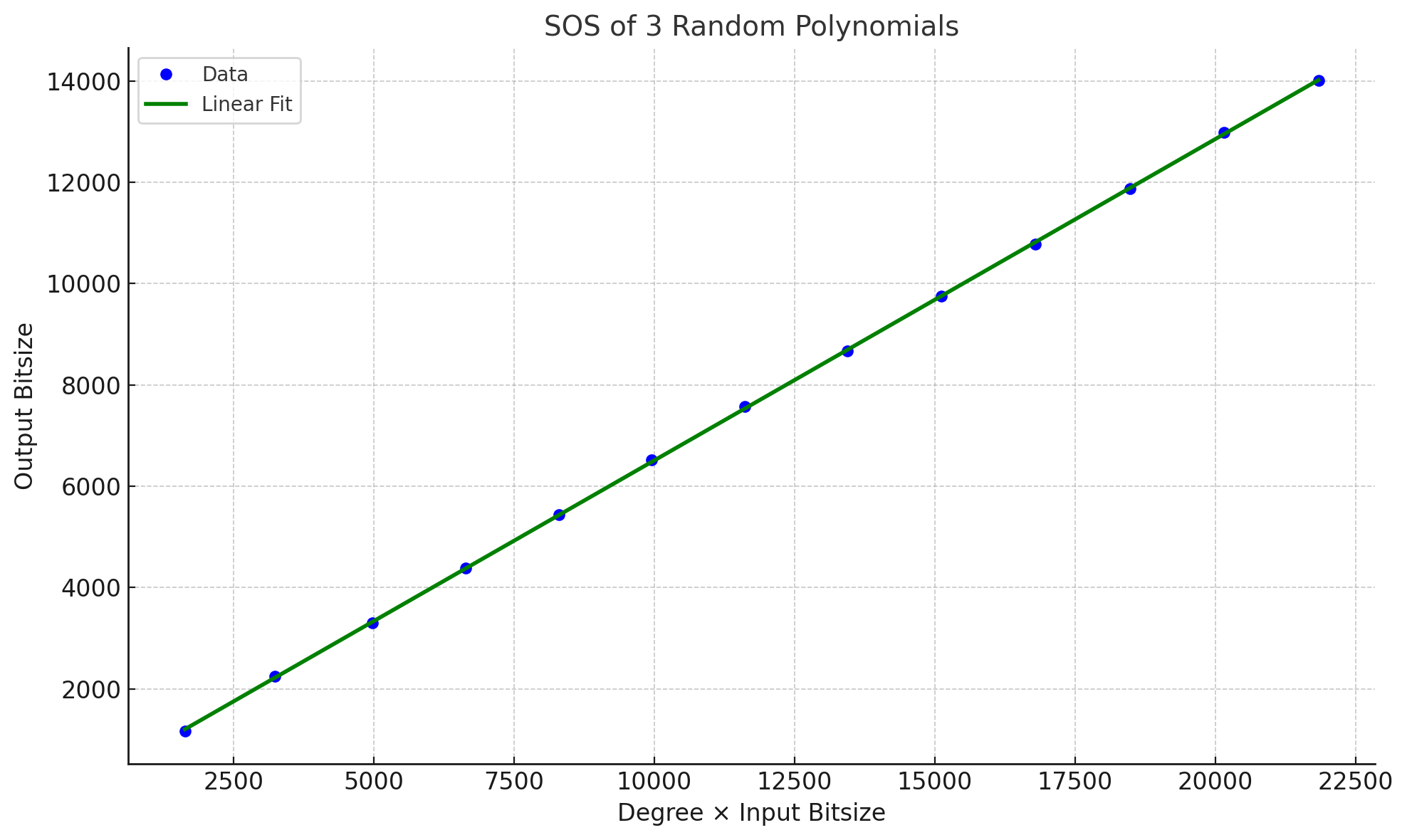}
        \caption{The linear fit of 3 squares is \\
        $\text{Output Bsz} \simeq 0.6342 (d \cdot \text{ Input Bsz}) + 163.91$}
        \label{fig:lin-fit-3} 
    \end{minipage}
    \hfill
    \begin{minipage}{0.45\textwidth}
        \centering
        \includegraphics[scale=0.31]{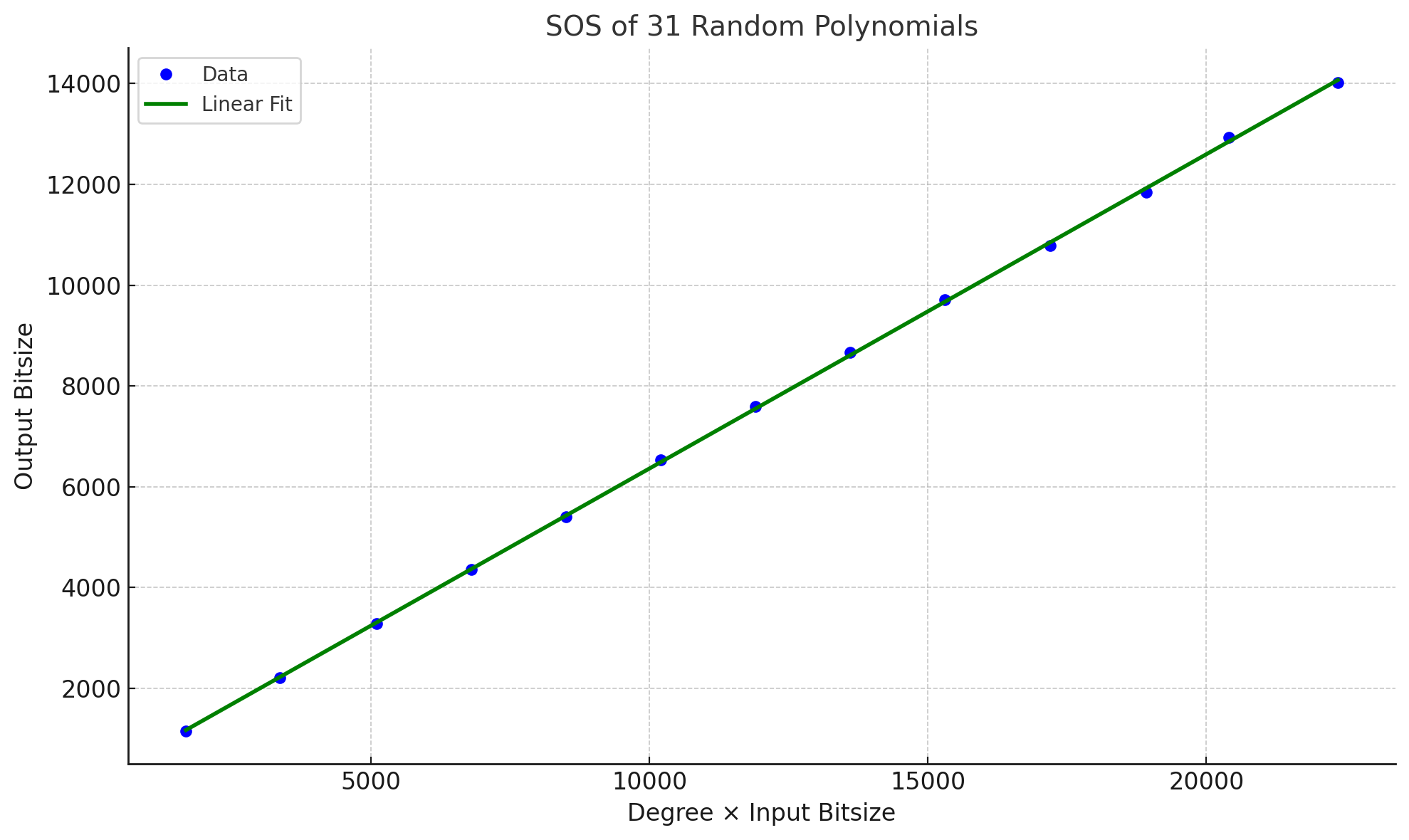}
        \caption{The linear fit of 31 squares is \\
        $\text{Output Bsz} \simeq 0.6239 (d \cdot \text{ Input Bsz}) + 120.82$}
        \label{fig:lin-fit-31}
    \end{minipage}
\end{figure}

It is rather a difficult task to consider random 
positive univariate polynomials. Such polynomials would be useful
to study the practical behaviour of \usos.
Unfortunately, this task requires us to sample uniformly from the 
convex cone of positive polynomials. This is 
computationally very expensive, even if we consider
a polyhedral approximation of the corresponding cone \cite{ergur-approx-pos-19}.
A rather good compromise\footnote{We thank Alperen Erg\"ur for this useful suggestion.}
is to consider random polynomials 
and the sum of their squares. 
Hence, the second set of experiments considers polynomials of the form 
$A(x) = \sum_{i=1}^{\nu}A_i(x)^2$,
where we sample the integer coefficients of polynomials $A_i$
uniformly at random from the interval $[-2^{40}, 2^{40}]$;
thus, the bitsize of $A$ is around 80.
Tables~\ref{tab:sos-3}, \ref{tab:sos-11}, and \ref{tab:sos-31} 
present the results of the experiments for $\nu \in \Set{3,11,31}$ SOS summands.
Figures~\ref{fig:lin-fit-3} and \ref{fig:lin-fit-31} present the graphs of bitsize of the polynomials in the decomposition vs the product of the degree of $A$ and its bitsize, for the two extreme cases $3$ and $31$.
Recall, by \Cref{lem:bsz-of-certif}, 
the bitsize of the decomposition is $\sOO(d \tau)$.
We notice that there is a precise linear fit, as predicted by the theory. 
In the figures we also mention the explicit linear equations.

\section{Conclusion}
\label{sec:conclusion}

We have presented improved complexity bounds for computing rational weighted sums of squares (SOS) certificates for univariate polynomials positive over $\RR$ or rational intervals, refining previous analyses and reducing the bit complexity by a factor of~$d$. Beyond the algorithmic advances, we uncovered new structural properties: the SOS summands form an interlacing pair, revealing a connection to Karlin points and T-systems. These insights contribute to our understanding of positivity certificates and their geometric interpretation. 
	Our open-source \maple implementation confirms the practical efficiency of the algorithm. 
	Future directions include considering sparse certificates for sparse univariate polynomials.

\section*{Acknowledgements}

We thank Mario Kummer for pointing out that Lemma~\ref{lem:PQ-interlace} corresponds to Hermite-Biehler theorem.
The authors thank John May and J\"urgen Gerhard for their help with the undocumented Maple function \texttt{hefroots:-refine}. 
This research benefited from the support of the FMJH Program Gaspard Monge for optimization and operations research and their interactions with data science. MB and ET are partially supported 
by the PGMO grant SOAP, 
ANR JCJC PeACE (ANR-25-CE48-3760), 
and ANR PRC ZADyG (ANR-25-CE48-7058).
PJdD is supported by the Deutsche Forschungs\-gemein\-schaft DFG with the grant DI-2780/2-1 and his research fellowship at the Zukunfts\-kolleg of the University of Konstanz, funded as part of the Excellence Strategy of the German Federal and State Government.

\bibliographystyle{abbrv}

\newpage

\appendix

\section{Alternative proof for interlacing}
\label{sec:alternative-interlacing}

We present an alternative proof of Lemma~\ref{lem:PQ-interlace},
which is also an alternative proof of Hermite-Bielher theorem (Thm.~\ref{thm:HB})
that is based on induction and  direct manipulation of the roots of the polynomials.
The HB theorem is as follows:
\begin{theorem}[Hermite--Biehler~{\cite[Thm.~6.3.4]{rs-athp-2002}}]
	\label{thm:HB}
Let $F(x) = P(x) + i\,Q(x) \in \CC[x]$, where $P, Q \in \mathbb{R}[x]$.  
The following statements are equivalent:
\begin{enumerate}[(i)]
  \item All zeros of $F$ lie in the open upper half-plane
  $\Set{\zeta \in \mathbb{C} : \Im(\zeta) > 0 }$.
\item   
  The real polynomials $P$ and $Q$:
  	(a) are real rooted, 
  	(b) have roots that \emph{interlace}, and (c) 
  	have a Wronskian
      $W(x) := P'(x)\,Q(x) - P(x)\,Q'(x)$
      of  constant nonzero sign on $\RR$.
  \end{enumerate}
\end{theorem}

\begin{otherlemma*}[Positivity over $\RR$ and interlacing]
	\label{lem:other-PQ-interlace}
	If $A  =  \sum_{k=0}^d a_k x^k \in \RR[x]$, of degree $d = 2m$ and $a_d > 0$, 
	is strictly positive over $\RR$, let $P, Q \in \RR[x]$ be as defined above. Then, we have that
	$A(x) = a_d \, P(x)^2 + a_d \, Q(x)^2$ and the polynomials $P$ and $Q$
	are interlacing of degrees $m$ and $m-1$, respectively.
\end{otherlemma*}
\begin{proof}
    First observe that the identity $A(x) = a_d \, P(x)^2 + a_d \, Q(x)^2$ follows from the fact that, by definition of $P(x)$ and $Q(x)$ , $A(x) = a_d (P(x) + \imath\, Q(x)) (P(x) - \imath\, Q(x))$.

    Hence, we prove by induction on $m$ that $P(x)$ and $Q(x)$ have interlacing roots.
Notice that $m$ also corresponds to the number of products required for $P$ and $Q$.
	To make this explicit,	we write $P_m$ and $Q_m$. 
	It holds $\deg(P_m) = m$ and $\deg(Q_m) = m-1$.

	\medskip
	Induction start ($m = 2$, i.e., $d = 4$): 
	Following \eqref{eq:P-pm-Q}, 
	\[ (x - \gamma_1 - \imath\, \delta_1)(x - \gamma_2 - \imath\, \delta_2)  = 
	P_2(x) - \imath Q_2(x) ,
	\]
	where 
\[P_2(x) = x^{2} - (\gamma_{1}+\gamma_{2}) x + \gamma_{1} \gamma_{2} -\delta_{1} \delta_{2}
\quad\text{and}\quad Q_2(x) = -(\delta_{1}+\delta_{2}) x + \gamma_{1} \delta_{2} + \gamma_{2} \delta_{1}.\]

Regarding $Q_2$, it is of degree 1 and
has one real root
\[\xi := \frac{\gamma_{1} \delta_{2} + \gamma_{2} \delta_{1}}{\delta_{1}+\delta_{2}} \in \RR.\]
As for $P_2$, it has degree 2 and its discriminant is \[\mathtt{disc}(P_2) = (\gamma_1 - \gamma_2)^2 + 4 \delta_1 \delta_2 \geq 0.\] 
Hence, $P_2$ has 2 real roots, $\zeta_{\pm} \in \RR$,
\[
	\zeta_{-} := \frac{\gamma_1 + \gamma_2}{2} - \frac{\sqrt{(\gamma_1 - \gamma_2)^2 + 4 \delta_1 \delta_2}}{2}
	<
	\frac{\gamma_1 + \gamma_2}{2} + \frac{\sqrt{(\gamma_1 - \gamma_2)^2 + 4 \delta_1 \delta_2}}{2} =: \zeta_{+} \enspace.
\]
It is straightforward that $\zeta_{-} \leq \zeta_{+}$.
Overall, both $P_2$ and $Q_2$ are real rooted.

It remains to show that the roots of $P_2$ and $Q_2$ interlace,
that is, $\zeta_{-} < \xi < \zeta_{+}$.
Regarding the right inequality, we have 
\begin{align*}
	 \xi < \zeta_{+} && \Leftrightarrow &&
	 \frac{\gamma_{1} \delta_{2} + \gamma_{2} \delta_{1}}{\delta_{1}+\delta_{2}} 
	 <
	 \frac{\gamma_1 + \gamma_2}{2} + \frac{\sqrt{(\gamma_1 - \gamma_2)^2 + 4 \delta_1 \delta_2}}{2} \\
&& \Leftrightarrow &&
	 0< \frac{(\delta_{1}+\delta_{2})(\gamma_1 + \gamma_2) + (\delta_{1}+\delta_{2})\sqrt{(\gamma_1 - \gamma_2)^2 + 4 \delta_1 \delta_2} - 2(\gamma_{1} \delta_{2} + \gamma_{2} \delta_{1})  }{2(\delta_{1}+\delta_{2})} \\
&& \Leftrightarrow &&
	 0 < (\delta_{1}+\delta_{2})\sqrt{(\gamma_1 - \gamma_2)^2 + 4 \delta_1 \delta_2} +
	 (\gamma_1 - \gamma_2)(\delta_1 - \delta_2) \\
&& \Leftrightarrow &&
	 (\gamma_2 - \gamma_1) (\delta_1 - \delta_2) <
	 (\delta_{1}+\delta_{2}) \sqrt{(\gamma_1 - \gamma_2)^2 + 4 \delta_1 \delta_2)} \\
	 &&&& \text{\emph{(if the lhs is negative, then the inequality holds, so we assume it is positive)}}\\
&& \Leftrightarrow &&
	 (\gamma_2 - \gamma_1)^2 (\delta_1 - \delta_2)^2 <
	 (\delta_{1}+\delta_{2})^2 ((\gamma_1 - \gamma_2)^2 + 4 \delta_1 \delta_2) \\
	 &&&& \text{ \emph{(the lhs is positive, so we square both sides)}}\\
&& \Leftrightarrow &&
	 0< - (\gamma_2 - \gamma_1)^2 (\delta_1 - \delta_2)^2 +
	 (\delta_{1}+\delta_{2})^2 ((\gamma_1 - \gamma_2)^2 + 4 \delta_1 \delta_2) \\
&& \Leftrightarrow &&
	 0 < 4 \delta_1 \delta_2 ( (\delta_{1}+\delta_{2})^2 +  (\gamma_1 - \gamma_2)^2)
\end{align*}
Similarly we prove that $\zeta_{-} < \xi$. 

\medskip
\noindent
Induction step ($m\to m+1$):
Assume that the result is true for some $m$. That is, $P_m$ and $Q_m$
are interlacing and their degrees are $m$ and $m-1$, respectively.
We obtain the polynomials $P_{m+1}$ and $Q_{m+1}$
as follows
\begin{align*}
P_{m+1} - \imath Q_{m+1} & = \prod_{i=1}^{m+1} (x - \gamma_i - \imath\, \delta_i) 
= (x - \gamma_{m+1} - \imath\, \delta_{m+1}) \prod_{i=1}^{m} (x - \gamma_i - \imath\, \delta_i) \\
 & = (x - \gamma_{m+1} - \imath\, \delta_{m+1})( P_{m} - \imath Q_{m}) \\
 & = \Big((x - \gamma_{m+1}) P_m - \delta_{m+1} Q_m \Big)  
     - \imath \Big(\delta_{m+1} P_m + (x - \gamma_{m+1}) Q_m \Big) \enspace .
\end{align*}

If we write the computation of $P_{m+1}$ and $Q_{m+1}$
from $P_m$ and $Q_m$ in matrix form, then we have 
\[
\begin{pmatrix}
x - \gamma_{m+1} & -\delta_{m+1} \\
\delta_{m+1} & x - \gamma_{m+1}
\end{pmatrix}
\begin{pmatrix}
P_{m} \\
Q_{m}
\end{pmatrix} =
\begin{pmatrix}
P_{m+1} \\
Q_{m+1}
\end{pmatrix} .
\]
Following Fisk \cite[Cor.~3.54(4)]{Fisk-interlace-bk}, as $\delta_{m+1} > 0$,
the matrix multiplying the vector ${P_m \choose Q_m}$
preserves interlacing. 
Hence, if $P_{m}$ and $Q_{m}$ are interlacing, then 
so are the polynomials $P_{m+1}$ and $Q_{m+1}$.
By inspecting the computations, we deduce that the
degrees of resulting polynomials are $m+1$ and $m$, respectively.
This concludes the proof.
\end{proof}

\section{Useful algorithms and complexity bounds}
\label{sec:additional-alg-bounds}

We present some known results that are useful in our analysis.
In particular, we use results on the separation bounds 
of polynomials that are ``close'' with respect to the one norm,
algorithms for multiplication of polynomials,
for approximating their (complex) roots to any desired accuracy,  
and bounds on the minimum of a univariate polynomial.

\subsection{Preliminaries on root separation and approximation}
\label{sec:prelim-root-approx}

We exploit the work of Mehlhorn, Sagraloff, and Wang \cite{msw-aprox-fact-15} on root approximation and refinement 
for univariate polynomials. 
The lemmata that we present are simplified variants of the original
ones, as we assume we are working with square-free polynomials. 
We refer the reader to \cite{msw-aprox-fact-15} 
for the general versions and further details.
The root isolation algorithm assumes that there is an oracle that is able to provide 
rational approximations of the coefficients of the input polynomial up to arbitrary precision.

Consider the square free polynomial 
\[p(x) = \sum_{i=0}^{n}{p_i x^i} =  p_n \prod_{i=1}^n(x - z_i) \in \CC[x], \] 
where $z_i \in \CC$ are its roots.
Also let \[\hp(x) = p_n \prod_{i=1}^n(x - \hz_i).\]
We need the following notations:
\begin{itemize}
\item $M(x):=\max\{1, \abs{x}\}$, for $x\in \RR$,
\item $\tau_{p}$ is the minimal nonnegative integer such that
  $\frac{|p_{i}|}{|p_n|}\le2^{\tau_{p}}$ for all $i=0,\ldots,n-1$,   
\item $\Gamma_p:=M(\max_i (\log M(z_i)))$ denotes the \emph{logarithmic root bound} of $p$, and 
\item $\Delta_i = \min_{j \not= i} \abs{z_i - z_j}$ is the local separation bound.
\end{itemize}

The following lemma relates the separation bound(s)
of the roots of $p$ with the separation bound(s) of the roots of $\hp$,
when the polynomials are "sufficiently" close.

\begin{lemma}[{\cite[Lemma~3]{msw-aprox-fact-15}}]
	\label{lem:approx-sep}
	Consider $p = \sum_{i=0}^{n}{p_i x^i} \in \CC[x]$. 
	Also, let $\hp \in \CC[x]$ be such that 
	\[\normo{p-\hp} \le 2^{-\sfb} \normo{p}.\] 
	If, for all $i \in [n]$, 
\begin{align}
	\label{eq:sfb-cond-1}
	\sfb &\ge  \max\{8n,n \log ( n)\} \text{, and $\sfb$ is a power of two}, \\
\label{eq:sfb-cond-2}
	2^{-\sfb/2} & \le \frac{\Delta_i}{2n}, \text{ and}\\
\label{eq:sfb-cond-3}
	2^{-\sfb/2} & \le \frac{\prod_{j\not= i}\abs{z_i - z_j}}{16(n+1)2^{\tau_p}M(z_{i})^{n}}, 
\end{align}
then 
the disk $D(z_i, 2^{-\sfb/2})$ contains exactly one root approximation. 
For $i \neq j$, let $\hz_i$ and $\hz_{j}$ be arbitrary approximations
of $z_i$ and $z_j$ in the disks $D(z_i, 2^{-\sfb/2})$ and $D(z_j, 2^{-\sfb/2})$, respectively. Then,
\begin{equation}
	\label{eq:z-hz-sep}
   \Big(1 - \frac{1}{n}\Big)\cdot \abs{z_i - z_j} \le \abs{\hz_i - \hz_j} \le \Big(1 + \frac{1}{n}\Big)\cdot  \abs{z_i - z_j}. 
\end{equation}
\end{lemma}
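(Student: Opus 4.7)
\textbf{Plan of proof for Lemma~\ref{lem:approx-sep}.}

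My strategy is to apply a quantitative Rouch\'e-type argument to localize each approximated root inside a disk of radius $2^{-\sfb/2}$ around the corresponding true root, and then to deduce the two-sided estimate \eqref{eq:z-hz-sep} by a direct triangle-inequality computation. The three hypotheses \eqref{eq:sfb-cond-1}--\eqref{eq:sfb-cond-3} will enter at distinct stages: \eqref{eq:sfb-cond-3} to make the Rouch\'e comparison succeed, \eqref{eq:sfb-cond-2} to propagate root-perturbation to root-separation, and \eqref{eq:sfb-cond-1} together with the power-of-two requirement to absorb the mild constants and the $(1\pm 1/n)^n$ factors uniformly.

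First, I fix $i$ and work on the circle $C_i := \partial D(z_i,2^{-\sfb/2})$. I need a lower bound on $|p(z)|$ for $z\in C_i$. Writing $p(z) = p_n\prod_j(z-z_j)$, I have $|z-z_i| = 2^{-\sfb/2}$ by construction, and for $j\neq i$ the inequality $|z-z_j|\ge |z_i-z_j| - 2^{-\sfb/2}$. Hypothesis \eqref{eq:sfb-cond-2} gives $2^{-\sfb/2}\le \Delta_i/(2n)\le |z_i-z_j|/(2n)$, so $|z-z_j|\ge |z_i-z_j|(1-1/(2n))$. Multiplying these out,
\[
|p(z)| \;\ge\; |p_n|\cdot 2^{-\sfb/2}\cdot (1-1/(2n))^{n-1}\prod_{j\neq i}|z_i-z_j|
\;\ge\; \tfrac{|p_n|}{2}\cdot 2^{-\sfb/2}\prod_{j\neq i}|z_i-z_j|,
\]
where the universal factor $(1-1/(2n))^{n-1}\ge 1/2$ follows from \eqref{eq:sfb-cond-1}.

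Next, I need an upper bound for $|p(z)-\hp(z)|$ on $C_i$. Expanding in the monomial basis and using $|z|\le M(z_i)+2^{-\sfb/2}\le 2M(z_i)$ on the disk (again by \eqref{eq:sfb-cond-1}),
\[
|p(z)-\hp(z)|\;\le\;\|p-\hp\|_1\cdot \max_{k\le n}|z|^k \;\le\; 2^{-\sfb}\|p\|_1\cdot (2M(z_i))^n \;\le\; 2^{-\sfb}(n+1)|p_n|2^{\tau_p}(2M(z_i))^n.
\]
Comparing this with the lower bound above, the Rouch\'e inequality $|p-\hp|<|p|$ on $C_i$ is implied exactly by hypothesis \eqref{eq:sfb-cond-3} (modulo the explicit constants $16$ and $n+1$ which are engineered into it). Hence $p$ and $\hp$ have the same number of roots inside $D(z_i,2^{-\sfb/2})$. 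Since $p$ has a simple root $z_i$ there, so does $\hp$; call it $\hz_i$.

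Finally, for $i\neq j$, any approximations $\hz_i\in D(z_i,2^{-\sfb/2})$ and $\hz_j\in D(z_j,2^{-\sfb/2})$ satisfy by the triangle inequality
\[
\bigl||\hz_i-\hz_j|-|z_i-z_j|\bigr|\;\le\; 2\cdot 2^{-\sfb/2}\;\le\;\frac{|z_i-z_j|}{n},
\]
where the last step uses \eqref{eq:sfb-cond-2} after noting $\Delta_i\le|z_i-z_j|$. This rearranges to \eqref{eq:z-hz-sep}.

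The main obstacle is bookkeeping: the constants $16$, $n+1$, $2^n$, and the $(1\pm 1/(2n))^{n-1}$ factors must be chased precisely so that the Rouch\'e step succeeds under exactly the stated form of \eqref{eq:sfb-cond-3}, rather than under some slightly weaker or stronger variant. The remaining steps---Rouch\'e itself and the triangle inequality---are routine once the quantitative inequalities are lined up.
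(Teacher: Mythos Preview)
The paper does not prove this lemma at all: it is quoted verbatim as \cite[Lemma~3]{msw-aprox-fact-15} in the appendix, with the explicit remark that the reader is referred to that source for proofs. So there is no ``paper's own proof'' to compare against; I can only assess your argument on its merits.

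Your Rouch\'e-plus-triangle-inequality outline is the standard route and is essentially correct. One concrete slip in the bookkeeping: when you bound $\max_{k\le n}|z|^k$ on $C_i$ you pass from $|z|\le 2M(z_i)$ to $(2M(z_i))^n$, picking up a spurious factor $2^n$. With that factor, the Rouch\'e comparison would require $2^{-\sfb/2}\le P_i/(2^{n+1}(n+1)2^{\tau_p}M(z_i)^n)$, which for $n\ge 4$ is \emph{not} implied by \eqref{eq:sfb-cond-3} (whose constant is only $16$). The fix is to use \eqref{eq:sfb-cond-1} more sharply: since $\sfb\ge 8n$, one has $2^{-\sfb/2}\le 2^{-4n}$, hence $|z|\le M(z_i)(1+2^{-4n})$ and $|z|^n\le M(z_i)^n(1+2^{-4n})^n\le 2\,M(z_i)^n$. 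This replaces $2^n$ by $2$ and the comparison then goes through comfortably under \eqref{eq:sfb-cond-3}. You flagged the constants as ``the main obstacle''; this is precisely the place where the obstacle bites and where \eqref{eq:sfb-cond-1} is genuinely needed.

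The final step---deriving \eqref{eq:z-hz-sep} from $|\hz_i-z_i|,|\hz_j-z_j|\le 2^{-\sfb/2}$ via the triangle inequality and \eqref{eq:sfb-cond-2}---is clean and correct as written.
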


Based on the previous lemma and Pan's root approximation algorithm \cite{pan-rootfinding-jsc}, Mehlhorn, Sagraloff, and Wang \cite{msw-aprox-fact-15}
developed an algorithm 
for isolating and approximating the roots of a univariate polynomial
up to any desired precision. The algorithm has the additional capability 
to consider polynomials with bitstream coefficients.
A simplified version of their theorem \cite[Theorem~4]{msw-aprox-fact-15}
that considers 
square-free polynomials is as follows:

\begin{theorem}
	\label{thm:root-isol-approx}
	Consider a square-freee polynomial $p(x) = \sum_{i=0}^{n}{p_i x^i} \in \CC[x]$
	such that $\tfrac{1}{4} \leq p_n \leq 1$. 
	If $z_i$ are the roots of $p$, then $P_i \coloneq \prod_{j\not= i}\abs{z_i - z_j}$, for $i \in [n]$. 
We can compute isolating discs $D(\hz_i, R_i)$
	with radius $R_i < 2^{-\kappa}$,
	for the roots $z_i$ of $p$,
	in a number of bit operations upper
	bounded by 
\[
		\sOB\Big(n^3 + n^2 \tau_p + n \sum_{i=1}^{n} \lg M(P_i^{-1})
		+ n \sum_{i=1}^{n} \lg M(\Delta_i^{-1})
		+ n \kappa\Big) .
	\]
	For this bound, we need rational approximation of the coefficients of $p$ 
	up to precision of $L$ bits, where
	\[
		L = \sOO\Big(n \Gamma_p  + \sum_{i=1}^{n} \lg M(P_i^{-1})
		+ \sum_{i=1}^{n} \lg M(\Delta_i^{-1})
		+ n \kappa\Big) .
	\]
The numbers $\hz_i$ are the approximations to the roots $z_i$
	and $R_i \leq \Delta_i/(64n)$.
\end{theorem}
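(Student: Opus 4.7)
The plan is to establish \Cref{thm:root-isol-approx} in two phases: first compute an approximate factorization of $p$ whose numerical roots are already certified to lie in isolating discs around the true roots $z_i$, and then refine these approximations down to absolute precision $2^{-\kappa}$. For the first phase, the natural tool is Pan's near-optimal complex root-finding algorithm in the bitstream model, which, applied to $p$ with coefficients read up to some precision $L$, returns approximations $\hz_i$ such that the reconstructed polynomial $\hp = p_n \prod_i (x - \hz_i)$ satisfies $\|p-\hp\|_1 \le 2^{-\sfb}\|p\|_1$ for a parameter $\sfb$ we can choose.

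The next step is to pick $\sfb$ large enough for \Cref{lem:approx-sep} to apply, namely so that \cref{eq:sfb-cond-1,eq:sfb-cond-2,eq:sfb-cond-3} all hold. Solving these conditions for every $i$ forces $\sfb$ to absorb $n \tau_p$, $n \Gamma_p$, and the aggregate logarithmic terms $\sum_i \lg M(\Delta_i^{-1})$ and $\sum_i \lg M(P_i^{-1})$; these are exactly the quantities that appear in the claimed complexity and precision bounds. Once this threshold is reached, \Cref{lem:approx-sep} certifies that each disc $D(z_i, 2^{-\sfb/2})$ contains exactly one approximation $\hz_i$ and that pairwise distances are preserved up to a factor $1 \pm 1/n$; consequently $R_i \le \Delta_i/(64n)$ follows for $\sfb$ slightly above this threshold.

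To push the precision from $2^{-\sfb/2}$ down to $2^{-\kappa}$ when $\kappa$ is larger, I would apply a certified, quadratically-convergent refinement, for instance a Newton iteration per disc, which is safe precisely because each disc is already isolated. Each refinement step on a single disc costs $\sOB(n \cdot (\text{current precision}))$ bit operations for evaluating $p$ and $p'$, and the geometric progression of precisions sums by the standard doubling argument to $\sOB(n\kappa)$ across all $n$ roots. Likewise the coefficient-oracle precision $L$ has to track the size of the roots (contributing $n \Gamma_p$), the separation sums, and the target $n\kappa$, matching the stated bound on $L$.

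The main obstacle is the bookkeeping that combines Pan's arithmetic cost with the bitstream precision needed at every subdivision. Pan's algorithm at working precision $\Theta(\sfb)$ costs $\sOB(n^3 + n^2 \tau_p + n \sfb)$, so substituting the minimal admissible $\sfb$ dictated by \Cref{lem:approx-sep} produces exactly the $n \sum_i \lg M(\Delta_i^{-1})$ and $n \sum_i \lg M(P_i^{-1})$ terms in the overall complexity. The delicate point is verifying that Pan's splitting-circle recursion and intermediate compressions do not amplify the perturbation $2^{-\sfb}$ beyond the threshold at which \Cref{lem:approx-sep}'s hypotheses fail; this requires a backward-error analysis along the recursion tree, after which the whole complexity bound falls out of the two phases by addition.
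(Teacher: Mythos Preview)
The paper does not supply its own proof of \Cref{thm:root-isol-approx}: it is stated in the appendix as a simplified, square-free specialization of \cite[Theorem~4]{msw-aprox-fact-15} and is used as a black box. So there is no in-paper proof to compare against.

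That said, your outline is essentially the strategy of the cited source. Mehlhorn, Sagraloff, and Wang indeed run Pan's near-optimal splitting-circle factorization at a precision $\sfb$ chosen so that the perturbation lemma (their Lemma~3, here \Cref{lem:approx-sep}) certifies that the numerical roots sit in pairwise-disjoint discs of radius $2^{-\sfb/2}\le \Delta_i/(2n)$, and then refine each disc via quadratically convergent Newton/QIR steps to reach $2^{-\kappa}$. The aggregate $\sum_i \lg M(\Delta_i^{-1})$ and $\sum_i \lg M(P_i^{-1})$ terms enter exactly through the three conditions \eqref{eq:sfb-cond-1}--\eqref{eq:sfb-cond-3}, as you say. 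One point worth tightening in your sketch: the cost you quote for Pan's algorithm, $\sOB(n^3 + n^2\tau_p + n\sfb)$, is not the standard statement; Pan gives $\sOO(n)$ arithmetic operations on $\Theta(\sfb)$-bit numbers, i.e.\ $\sOB(n\sfb)$ once $\sfb$ dominates $n\tau_p$, and the $n^3 + n^2\tau_p$ terms arise from the minimal admissible $\sfb$ rather than as separate additive contributions. Also, the refinement phase in the cited paper is not a naive per-root Newton costing $\sOB(n\kappa)$ per root (which would give $n^2\kappa$ total) but an amortized scheme across all roots; you should be careful that your doubling argument really lands on $n\kappa$ and not $n^2\kappa$.
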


The algorithm supported by Theorem~\ref{thm:root-isol-approx}
and its complexity depend on the geometry of the roots, that is the (aggregate) separation bound. Moreover, they do not depend on the type and the size of the coefficients.

\subsection{Bounds on the mimimum and the Fan-in algorithm}
\label{sec:fan-in}

Besides root approximations, we also need to bound the minimum of a
univariate polynomial, the evaluation of this polynomial at the roots
of its derivatives, and an algorithm to compute an approximation of
this polynomial from approximations of its roots.

\begin{lemma}
	\label{lem:eval-A-at-da}
	\label{lem:eval-dA-at-a}
	
	Consider the polynomial 
	\[A(x) = \sum_{k=0}^n a_k x^k = a_n \prod_{i=1}^n (x - \alpha_i) \in \QQ[x], \]
	such that $\tfrac{1}{2} \leq a_n \leq 1$
	and the other coefficients are rationals with the same denominator
	of bitsize at most $\tau$. 
	Let $A'$ be the derivative of $A$ with respect to $x$ 
	and let $\beta_j$, $j \in [n-1]$, be its roots. 
	Then, either $A'(\alpha_i) = 0$  (or $A(\beta_j) = 0$), or it holds
	\[
		2^{-4 n \tau - 16n\lg{n}}
		\leq \abs{A'(\alpha_i)}, \abs{A(\beta_j)}  \leq 
		2^{2 n \tau + 8n\lg{n}} 
		\]
        for all $i \in [n]$ and $j \in [n-1]$.
\end{lemma}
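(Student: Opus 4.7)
The plan is to split the four bounds into two halves: the upper bounds come from Cauchy's root bound plus a direct triangle-inequality estimate, while the lower bounds (which hold whenever the evaluation is nonzero) come from a resultant/Mahler-measure argument that realizes the nonzero values $A'(\alpha_i)$ and $A(\beta_j)$ as nonzero roots of an explicit integer polynomial with controlled coefficients.

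First I would bound the moduli of all the roots involved. Since $\tfrac12 \le a_n \le 1$ and every other coefficient $a_k$ has bitsize at most $\tau$, Cauchy's bound yields $|\alpha_i| \le 1 + \|A\|_\infty/|a_n| \le 2^{\tau+2}$, and an identical argument applied to the derivative (whose coefficients are $k a_k$, bounded by $n\cdot 2^\tau$, while its leading coefficient is $n a_n \ge n/2$) gives $|\beta_j| \le 2^{\tau+2}$. From here the two upper bounds are immediate: substituting $|\alpha_i| \le 2^{\tau+2}$ into $A'(x)=\sum_k k a_k x^{k-1}$ and summing,
\[
\abs{A'(\alpha_i)} \;\le\; \sum_{k=1}^n k\abs{a_k}\abs{\alpha_i}^{k-1}
\;\le\; n^2\, 2^\tau\, 2^{(\tau+2)(n-1)} \;\le\; 2^{2n\tau + 8n\lg n},
\]
and the analogous computation for $|A(\beta_j)|$ gives the same bound; the factor $2$ and the $8n\lg n$ term absorb the combinatorial overhead coming from $n+1$ summands, the $n$ in $k|a_k|$, and the $(\tau+2)$ vs.\ $\tau$ slack in $|\alpha_i|$.

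For the lower bounds I would clear the common denominator and pass to an integer polynomial. Let $D$ be a positive integer such that $\widetilde A := D\cdot A \in \mathbb{Z}[x]$ with $\|\widetilde A\|_\infty \le 2^{O(\tau)}$; this is possible because $a_n \in [\tfrac12,1]$ and the remaining coefficients share a denominator of bitsize at most $\tau$. To lower-bound the nonzero values among $\widetilde A'(\alpha_1),\dots,\widetilde A'(\alpha_n)$, I would consider
\[
R(y) \;:=\; \mathrm{Res}_x\bigl(\widetilde A(x),\; y - \widetilde A'(x)\bigr) \;\in\; \mathbb{Z}[y],
\]
which has degree $n$ in $y$ and whose roots are exactly the $\widetilde A'(\alpha_i)$. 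Writing $R$ as a Sylvester determinant of size $(2n-1)\times(2n-1)$ and applying Hadamard's inequality entrywise (entries have magnitude at most $n\cdot 2^{O(\tau)}$), one gets $\|R\|_\infty \le 2^{O(n\tau + n\lg n)}$. Any nonzero root $y_0$ of $R$ satisfies the classical Cauchy lower bound $|y_0| \ge |R_{\text{low}}|/(|R_{\text{low}}| + \|R\|_\infty) \ge 2^{-O(n\tau + n\lg n)}$, where $R_{\text{low}}$ is the smallest-index nonzero coefficient of $R$ (an integer, hence of absolute value at least $1$). Dividing by $D \le 2^\tau$ converts this into the desired bound on $|A'(\alpha_i)|$. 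The lower bound on $|A(\beta_j)|$ is obtained by the symmetric construction $S(y) := \mathrm{Res}_x(\widetilde A'(x),\; y - \widetilde A(x))$; its nonzero roots are precisely the nonzero critical values $\widetilde A(\beta_j)$, and the same Hadamard-plus-Cauchy argument applies (with degree $n-1$ in $y$ and entry bounds of the same order).

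The main obstacle is bookkeeping the constants to reach exactly $4n\tau + 16n\lg n$ rather than just $O(n\tau + n\lg n)$. The $n\lg n$ contribution comes from two sources: the $n^n$-type factor produced by expanding the Sylvester determinant (or equivalently by Hadamard), and the $n!$-style factor hidden in the Leibniz expansion. I would absorb these into a single clean bound by going through $\|R\|_\infty \le (2n-1)!\,(n\|A\|_\infty)^{2n-1}$ and taking logarithms, checking that the constant $16$ comfortably dominates the sum of these contributions together with the factor of $D \le 2^\tau$ lost in the final normalization; nothing here is subtle, but it is the only place where the explicit numerical constant matters.
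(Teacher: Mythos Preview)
Your approach is correct and essentially the same as the paper's: both construct the resultant $R(y)=\res_x(A(x),\,y-A'(x))$ (and the symmetric $\res_x(A'(x),\,y-A(x))$) so that the values $A'(\alpha_i)$, respectively $A(\beta_j)$, appear as roots of an explicit univariate polynomial in $y$, and then bound those roots above and below via coefficient estimates. The only minor technical differences are that the paper works directly over $\QQ$ and controls the integer part of each resultant term via Sombra's height bound rather than Hadamard's inequality on the Sylvester matrix, and it reads off both the upper and lower bounds from the resultant's root bounds instead of obtaining the upper bound by your direct substitution-plus-triangle-inequality argument.
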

\begin{proof}
	The proof is a direct application of the bounds of the resultant,
	appeared in \cite{emt-dmm-j}.
	
	For the first bound, we consider the resultant 
    \[ H = \res(A(x), y - A'(x), x) = a_n^{n-1} \prod_{i=1}^{d}( y - A'(\alpha_i)) \in \QQ[y], \]
    that eliminates $x$; the last equality is due to the Poisson formula of the resultant.
   Then, $H$ is a univariate polynomial in $y$, and its roots are the evaluation of $A'$ at the roots of $A$.
    To bound the coefficients of $H$, proceed as follows.
    We notice that $H$ is a homogeneous polynomial of degree $n-1$ in the coefficients of $A$ and homogeneous of degree $n$ in the coefficients of $y - A'(x) \in (\QQ[y])[x]$.
	Specifically, $H$ is of the form 
	\[
		H = \dots + \varrho \, \bm{a}_{1}^{n-1} \bm{a}_{2}^{n} + \dots,
  	\]
	where $\varrho \in \ZZ$, $\bm{a}_1^{n-1}$ denotes a monomial in the coefficients of $A$ of total degree $n$,
	and $\bm{a}_2^{n-1}$ denotes a monomial in the coefficients of $y-A'$ of total degree $n$.

	We can bound $\varrho$ using \cite{sombra-mix-res-height}, see also \cite[Table~1 and Eq.~(1)]{emt-dmm-j}
	as 
	 \[\abs{\varrho} \leq (n+1)^{n-1} n^n \leq n^{2n} .\]
Since the bitsize of $A$ is at most $\tau$, we can upper bound $\bm{a}_1^{n-1} \in \QQ$ as 
	\[\abs{\bm{a}_1^{n-1}} \leq (2^{\tau})^{n-1} = 2^{\tau (n-1)};\]
	also the denominator of $\bm{a}_1^{n-1}$ is a integer at most  $2^{\tau (n-1)}$.
	
	To upper bound $\bm{a}_2^{n}$, we consider the worst case scenario
	that every coefficient of $y-A'$ is $y - n 2^{\tau}$.
	So $\bm{a}_2^{n-1}$ is a polynomial in $y$ 
	with coefficients rational number with numerator at most $\binom{n}{2} 2^{n \tau + n \lg{n}}$
	and denominator at most $2^{n \tau}$.
	
	Taking all these bounds into account, 
	$H$ is a polynomial in $\QQ[y]$ of degree at most $n$,
	its leading coefficient 	is in $[ (\tfrac{1}{2})^{n-1}, 1]$,
	and the other coefficients are rationals 
	having a numerator with magnitude at most $2^{2 n \tau + 7 n \lg{n}}$,
	and a denominator of magnitude at most $2^{2\tau(n-1)}$.
	
	Consequently, 
	using \cite[Theorem~1]{emt-dmm-j},
	we can bound the roots of $H$,
	and thus the evaluations $A'(\alpha_i)$, as follows
	\[
		 2^{-4 n \tau - 16n\lg{n}}
		 \leq 
		\frac{\tc(H)}{2 \normi{H}} 
		\leq \abs{A'(\alpha_i)} \leq 
		2 \frac{\normi{H}}{\lc(H)}
		\leq 
		2^{2 n \tau + 8n\lg{n}} .
	\] 	
	
	The same bounds hold for $A(\beta_j)$,
	where we use the resultant
	$\res(A'(x), y - A(x), x) \in \QQ[y]$.
\end{proof}

The following theorem supports an algorithm to compute an approximation 
of a polynomial $A$ when we have approximations of its roots.

\begin{lemma}[{\cite[Theorem~17]{pt-struct-mat-j}}]
  \label{lem:fan-in}
  Assume that we are given $n$ complex numbers $z_i$
  known up to an absolute precision $\lambda$, 
  that is, we know a rational $\hz_i$ such that 
  \[\abs{z_i - \hz_i} \leq 2^{-\lambda}.\]
  Also assume that  
  $\abs{z_i} \leq 2^{\tau}$, for a positive integer $\tau$.

  Based on the Fan-in process of the Moenck--Borodin algorithm~\cite{BorMoe-jcss-74}
  we can approximate the (coefficients of the) polynomial 
  with the (rational coefficients of the) polynomial \[\hatm(x) = \prod_i(x - \hz_i), \]
  so that it holds
  \[\normi{m - \hatm} \leq 2^{-\ell +(4n-4)\tau + 32n - (\lg{n} +5)^2 - 7} , \]
 at the cost of 
  $\sOB( n (\ell +n\tau) )$
  bit operations. 
Moreover,  it holds \[\lg \normi{m} \leq n\tau + 8n - 2\lg{n} - 8.\]
\end{lemma}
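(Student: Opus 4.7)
The statement has essentially three independent parts: (i) an a-priori bound on $\normi{m}$, (ii) a stability estimate for perturbing the roots, and (iii) the arithmetic-error/complexity analysis of the Moenck--Borodin fan-in at finite precision $\ell$. The plan is to handle them in that order and then combine.

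First, for $\normi{m}$, the $k$-th coefficient of $m(x)=\prod_i(x-z_i)$ is, up to sign, the $(n-k)$-th elementary symmetric function in the $z_i$'s, so $|m_k|\leq\binom{n}{k}(2^\tau)^{n-k}\leq 2^{n+n\tau}$. Using $\binom{n}{k}\leq 2^n/\sqrt{n}$ and slack factors, I would derive the stated $\lg\normi{m}\leq n\tau+8n-2\lg n-8$; the large additive $8n$ is deliberately loose so that intermediate steps in the following two parts can tolerate growth.

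Second, for the exact perturbation, introduce the auxiliary polynomial $m^{*}(x):=\prod_i(x-\hz_i)$. A telescoping expansion, replacing $z_i$ by $\hz_i$ one factor at a time, shows that each coefficient of $m-m^{*}$ is a sum of at most $n\binom{n-1}{k}$ terms, each of the form $(z_i-\hz_i)$ times a monomial of degree $n-1$ in quantities bounded by $\max(|z_i|,|\hz_i|)\leq 2^\tau+2^{-\lambda}$. This yields $\normi{m-m^{*}}\leq 2^{-\lambda+(n-1)\tau+O(n)}$; the dominant exponent $(n-1)\tau$ is what eventually feeds into part~(iii).

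Third, and this is where the main work lies: the fan-in computes $\hatm$ in $L=\lceil\lg n\rceil$ levels, where at level $k$ one multiplies pairs of polynomials of degree $\leq 2^k$. Represent each intermediate coefficient as a dyadic number truncated to $\ell$ bits after the binary point; then each polynomial multiplication introduces a coefficient-wise rounding error bounded by the degree times the $\infty$-norm of the factors times $2^{-\ell}$. The plan is to show inductively that at level $k$ the polynomial held in memory differs from its exact value by at most $2^{-\ell+c_k}$, where $c_k$ satisfies $c_{k+1}\leq 2c_k+2^{k+1}\tau+O(2^k)$ because multiplying by a polynomial of norm $\leq 2^{2^k\tau+O(2^k)}$ at most doubles the current error bound in the exponent and then adds the freshly committed rounding error. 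Unrolling this recursion across $L=\lg n$ levels produces the exponent $(4n-4)\tau+32n-(\lg n+5)^2-7$ stated in the lemma: the leading $(4n-4)\tau$ comes from the doubling of the $\tau$-contribution at each level, the $32n$ is the analogous accumulation of the $O(2^k)$ slack, and the $-(\lg n+5)^2$ is the saving from the fact that at level $k$ only degree $2^k$ coefficients are involved (so the $2^{-\ell}$ is multiplied by $2^k$ rather than $n$).

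For the complexity, at level $k$ there are $n/2^k$ multiplications of polynomials of degree $\leq 2^k$ with coefficient bit-size $O(\ell+2^k\tau)$. Each such multiplication costs $\sOB(2^k(\ell+2^k\tau))$ by fast polynomial multiplication, so the cost per level is $\sOB(n(\ell+n\tau))$ and summing over the $\lg n$ levels gives the claimed $\sOB(n(\ell+n\tau))$. Combining parts (ii) and (iii) by the triangle inequality $\normi{m-\hatm}\leq\normi{m-m^{*}}+\normi{m^{*}-\hatm}$, and absorbing the first summand into the second (since the arithmetic error already dominates when $\ell\lesssim\lambda$), yields the announced bound.

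\emph{The hard part} will be the bookkeeping in part~(iii): pinning down the exact shape $(4n-4)\tau+32n-(\lg n+5)^2-7$ requires tracking simultaneously the growth of coefficient norms, the truncation error freshly committed at each level, and the saving from degree $2^k$ rather than $n$. Everything else reduces to elementary symmetric-function bounds and a standard Moenck--Borodin cost count.
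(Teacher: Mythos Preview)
The paper does not actually prove this lemma: it is quoted verbatim as \cite[Theorem~17]{pt-struct-mat-j} and used as a black box, so there is no in-paper argument to compare your proposal against. Your three-part plan (symmetric-function bound, telescoping root-perturbation, level-by-level fan-in error) is the natural route and is presumably close to what the cited reference does.

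That said, your part~(iii) contains a concrete error. You write that the error exponent satisfies $c_{k+1}\le 2c_k+2^{k+1}\tau+O(2^k)$ ``because multiplying by a polynomial of norm $\le 2^{2^k\tau+O(2^k)}$ at most doubles the current error bound in the exponent''. That is not what happens: multiplying an error of size $2^{-\ell+c_k}$ by a factor of norm $2^{2^k\tau+O(2^k)}$ \emph{adds} $2^k\tau+O(2^k)$ to the exponent, it does not double $c_k$. With the correct additive recursion $c_{k+1}\le c_k+2^k\tau+O(2^k)$ (plus the freshly committed rounding), unrolling over $\lg n$ levels yields a leading term of order $(n-1)\tau$, which is comfortably below the stated $(4n{-}4)\tau$ and so the lemma still follows. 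But your recursion as written, taken literally, solves to $c_{\lg n}=\tau\, n\lg n$, which neither matches $(4n{-}4)\tau$ nor supports your verbal explanation of where the $4$ comes from. Fix the recursion to be additive in the exponent and the rest of the bookkeeping goes through; the constants in the lemma are deliberately slack, so you need not chase the exact form $(4n{-}4)\tau+32n-(\lg n+5)^2-7$.
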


\subsection{About T-systems}
\label{sec:T-systems}

We borrow the following definition from \cite{Dio-T-systes-arxiv},
where we also refer the reader for further details.
For the following definition and further properties of the T-systems
see \cite[Chapter~4]{Dio-T-systes-arxiv}.
\begin{definition}[T-systems]
	\label{dfn:tSystem}
	Let $n\in \NN_0$, $\cX$ be a set with $|\cX|\geq n+1$,
	and $\cF = \{f_i\}_{i=0}^n$ be a family of real functions 
	$f_i:\cX\to\RR$. 
	In this setting, a polynomial the following linear combination 
	\begin{equation} 
		\label{eq:linF}
		f = \sum_{i=0}^n a_i\cdot f_i \quad
		\in\lin(\cF) := \{a_0 f_0 + \dots + a_n f_n \,|\, a_0,\dots,a_n\in\RR\}
			.
\end{equation}
The family $\cF$ on $\cX$ is a \emph{Tchebycheff system}
(or short \emph{T-system})
 \emph{of order $n$ on $\cX$} if every polynomial $f\in\lin(\cF)$ with 
 $\sum_{i=0}^n a_i^2 > 0$ has at most $n$ zeros in $\cX$.

If additionally $\cX$ is a topological space and $\cF$ is a family of continuous functions, then we call $\cF$ a \emph{continuous T-system}.
\end{definition}

The following theorem, due to Karlin, see  \cite[Chapter~7, Theorem~7.1]{Dio-T-systes-arxiv} and references therein,
leads to Positivstellensatz 
  for positive univariate polynomials.

\begin{theorem}
	\label{thm:Karlin}
Let $n\in\NN_0$, $\cF = \{f_i\}_{i=0}^n$ be a continuous T-system of order $n$ on $[a,b]$ with $a<b$, and let $f\in \cC([a,b],\RR)$ with $f>0$ on $[a,b]$ be a strictly positive continuous function. The following hold:
\begin{enumerate}[(i)]
\item There exists a unique polynomial $f_*\in\lin(\cF)$ such that
\begin{enumerate}[(a)]
\item $f(x) \geq f_*(x) \geq 0$ for all $x\in [a,b]$,
\item $f_*$ vanishes on a set with index $n$,
\item the function $f-f_*$ vanishes at least once between each pair of adjacent zeros of $f_*$,
\item the function $f-f_*$ vanishes at least once between the larges zero of $f_*$ and the end point $b$, and
\item $f_*(b)>0$.
\end{enumerate}
\item There exists a unique polynomial $f^*\in \lin(\cF)$ which satisfies the conditions \mbox{(a) to (d)} of (i) and
\begin{enumerate}[(a')]\setcounter{enumii}{4}
\item $f^*(b) = 0$.
\end{enumerate}
\end{enumerate}
\end{theorem}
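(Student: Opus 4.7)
My plan is to obtain $f_*$ and $f^*$ as extremizers of a linear functional on the compact convex set of admissible ``under-polynomials'' of $f$, and then to force the claimed zero structure via a perturbation argument that exploits the interpolation property of a T-system; uniqueness will follow from a sharp zero-counting argument using Def.~\ref{dfn:tSystem}.

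First, I introduce the set
\[
\mathcal{U} \;=\; \{\, g \in \lin(\cF) \;:\; 0 \leq g(x) \leq f(x) \text{ for all } x \in [a,b]\,\}.
\]
Because $\lin(\cF)$ is finite-dimensional, $f$ is continuous on the compact interval $[a,b]$, and $f > 0$ (so $\min f > 0$), the set $\mathcal{U}$ is closed, convex, and bounded—hence compact. Here boundedness of coefficients follows from the T-system property: selecting $n+1$ points in $[a,b]$ yields an invertible ``Chebyshev'' evaluation map whose norm controls the coefficient vector of any $g$ pointwise bounded by $f$. The linear functional $L : g \mapsto g(b)$ is therefore attained on $\mathcal{U}$; I then \emph{define} $f_*$ as a maximizer of $L$ and $f^*$ as a minimizer. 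The constant $0$ lies in $\mathcal{U}$, so $f^*(b) = 0$, which gives (a'). To see $f_*(b) > 0$ in (e), I use Def.~\ref{dfn:tSystem} to produce a polynomial $p \in \lin(\cF)$ with $p(b) > 0$ and at most $n$ zeros in $[a,b]$; scaling $p$ by a small $\delta > 0$ and shifting so that $0 \leq \delta p \leq f$ exhibits a competitor with $L(\delta p) > 0$.

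Next, I show that the extremizer must exhibit the zero pattern (b)--(d). Suppose $f_*$ fails one of these conditions: then the total ``index'' of its zeros in $[a,b]$ (counting tangential contacts with $0$ twice, boundary contacts once, and including the missing alternation points of $f - f_*$) is strictly less than $n$. The defining interpolation property of a T-system—any $n$ specified conditions admit a nonzero $h \in \lin(\cF)$ realizing them—produces an $h$ that vanishes at the existing zeros of $f_*$ with the correct sign pattern and satisfies $h(b) > 0$. A one-sided Taylor expansion at each zero of $f_*$ combined with continuity away from these zeros shows $f_* + \varepsilon h \in \mathcal{U}$ for small $\varepsilon > 0$, while $L(f_* + \varepsilon h) > L(f_*)$, contradicting maximality. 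The same perturbation argument handles $f^*$, with the modification that the perturbation $h$ is chosen with $h(b) = 0$ to preserve (a'). Finally, for uniqueness, if $f_*$ and $\tilde f_*$ both satisfied (a)--(e), the difference $h = f_* - \tilde f_* \in \lin(\cF)$ would vanish at every common zero and, by an alternation argument on the touching points of each with $f$, would accumulate total zero index strictly greater than $n$, forcing $h \equiv 0$ by Def.~\ref{dfn:tSystem}; the identical argument handles $f^*$.

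The main obstacle is the bookkeeping around the \emph{index} of a zero in a T-system: it is not mere multiplicity, but a count that distinguishes sign-changing zeros from non-sign-changing ones and treats endpoints of $[a,b]$ specially. The heart of the argument—both the perturbation in Step~2 and the final zero count—rests on showing that the index is \emph{strictly subadditive} with respect to the passage to a difference, and on verifying case-by-case (interior tangency, endpoint contact, boundary zero, adjacent alternation) that the interpolation data furnished by the T-system yields an \emph{admissible} perturbation $h$. This is precisely the delicate part of Karlin's framework, and the cleanest path forward is to invoke the index formalism and the interpolation lemma developed in \cite{Dio-T-systes-arxiv} rather than to reprove them from scratch.
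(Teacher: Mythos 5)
The paper does not actually prove this theorem: it is quoted as a classical result of Karlin, with the proof delegated to \cite[Chapter~7, Theorem~7.1]{Dio-T-systes-arxiv} and the references therein. So there is no in-paper argument to compare against; I can only assess your sketch on its own terms, and it has a genuine gap.

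The gap is in your construction of $f^*$. You define $f^*$ as a minimizer of $L(g)=g(b)$ over $\mathcal{U}=\{g\in\lin(\cF):0\le g\le f\}$. Since $0\in\mathcal{U}$, the minimum value is $0$, but the minimizer is wildly non-unique: $g\equiv 0$ attains it and fails (b)--(d). Worse, your mechanism for forcing the zero structure --- ``perturb and contradict extremality'' --- collapses here: you yourself note that the perturbation $h$ for $f^*$ must satisfy $h(b)=0$ to preserve (a'), but then $L(f^*+\varepsilon h)=L(f^*)$, so no contradiction with minimality is available. You need a genuinely different extremal problem to single out $f^*$ (e.g.\ a secondary maximization over the face $\{g\in\mathcal{U}:g(b)=0\}$, or Karlin's route through principal/canonical representations of the moment cone), and the sketch as written does not supply one. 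Two smaller issues: (i) to get $f_*(b)>0$ you ``scale and shift'' a polynomial $p$ with few zeros into $\mathcal{U}$, but shifting leaves $\lin(\cF)$ unless constants belong to it, and having at most $n$ zeros does not make $p$ nonnegative --- you need the (true, but nontrivial) existence of a nonzero nonnegative element of a continuous T-system on $[a,b]$; (ii) in the uniqueness step the difference $f_*-\tilde f_*$ need not ``vanish at every common zero,'' since the two candidates need not share zeros --- the correct argument is a sign-alternation count at the interlaced contact points with $0$ and with $f$ guaranteed by (b)--(d). Finally, the entire index bookkeeping and the existence of an admissible perturbation realizing prescribed signed contact data --- which you correctly identify as the heart of the matter --- is deferred to the reference, so even for $f_*$ the sketch is an outline of the standard strategy rather than a proof.
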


As consequences of \Cref{thm:Karlin} we get the following (much sharper versions of known) results,
where we emphasize the uniqueness of $\alpha,\beta>0$, $x_i$, and $y_i$.

\begin{corollary} \label{cor:tsystemsAllLine}
    Let $p\in\RR[x]$ with even degree $\deg p = 2m$, $m\in\NN$, and $p\geq 0$ on $\RR$.
    Let $z_1,\dots,z_k$ be the zeros of $p$ in $\RR$ with (even) multiplicities $m_1,\dots,m_k\in 2\NN$.
    Then there exist unique $\alpha,\beta>0$ and unique
\[x_1 < y_1 < x_2 < \dots < y_{l-1} < x_l\]
with $\deg p = m_1 + m_2 + \dots + m_k + 2l$ such that
\[p(x) = \prod_{i=1}^k (x-z_i)^{m_i} \cdot \left(\alpha\cdot \prod_{i=1}^l (x-x_i)^2 + \beta\cdot \prod_{i=1}^{l-1} (x-y_i)^2 \right).\]
\end{corollary}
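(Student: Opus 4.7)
The plan is to reduce to the strictly positive case and then invoke \Cref{thm:PQ-interlace} together with the uniqueness statement in Karlin's theorem recalled in the introduction. Since $p \geq 0$ on $\RR$ with $\deg p = 2m$ even, the leading coefficient of $p$ must be positive; moreover, every real root of $p$ has even multiplicity, for otherwise $p$ would change sign there. I would set $R(x) := \prod_{i=1}^k (x - z_i)^{m_i}$, which is monic and nonnegative on $\RR$, and define $q(x) := p(x)/R(x) \in \RR[x]$. By construction $q$ has no real roots; since $\lc(q) = \lc(p) > 0$ and $q$ cannot change sign, we get $q > 0$ on all of $\RR$. Write $2l := \deg q = 2m - \sum_i m_i$.

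Next, I would apply \Cref{thm:PQ-interlace} to the strictly positive, even-degree polynomial $q$, obtaining $q(x) = \lc(q)\,P(x)^2 + \lc(q)\,Q(x)^2$ with $P, Q \in \RR[x]$ interlacing, $\deg P = l$, and $\deg Q = l-1$. Writing the real-root factorizations $P(x) = \lc(P)\prod_{i=1}^{l}(x - x_i)$ and $Q(x) = \lc(Q)\prod_{j=1}^{l-1}(x - y_j)$ with roots listed in increasing order, the interlacing condition for a degree-$l$ and a degree-$(l-1)$ real-rooted pair translates exactly into the chain $x_1 < y_1 < x_2 < \cdots < y_{l-1} < x_l$. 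Setting $\alpha := \lc(q)\,\lc(P)^2 > 0$ and $\beta := \lc(q)\,\lc(Q)^2 > 0$ then yields the claimed identity
\[
    p(x) = R(x)\cdot \left(\alpha\prod_{i=1}^{l}(x-x_i)^2 + \beta\prod_{j=1}^{l-1}(x-y_j)^2\right).
\]

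For uniqueness, I would argue in two steps. First, the factorization $p = R\cdot q$ is forced, since the real roots of $p$ together with their multiplicities are intrinsic to $p$; hence both $R$ and $q$ are uniquely determined by $p$. Second, uniqueness of $\alpha,\beta>0$ together with the Karlin points $x_i$ and $y_j$ follows by appealing to the uniqueness clause in the corollary of Karlin quoted in the introduction just before~\eqref{eq:K-xi-yj}, applied to the strictly positive even-degree polynomial $q$. The main obstacle is essentially ensuring that the hypotheses of that classical uniqueness statement are fulfilled --- strict positivity on $\RR$ and even degree $2l$ for $q$ --- which the reduction above makes transparent, so no further computation is needed.
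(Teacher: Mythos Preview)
Your proof is correct but takes a route different from the paper's. In the paper this corollary appears in the appendix simply as a ``consequence of \Cref{thm:Karlin}'', with no further details given; both existence and uniqueness are drawn from Karlin's general T-system theorem applied to the monomial system. You instead reduce to a strictly positive quotient $q$, establish existence of the decomposition \emph{constructively} via \Cref{thm:PQ-interlace}, and only then appeal to Karlin's uniqueness (the corollary quoted in the introduction before~\eqref{eq:K-xi-yj}). This has the advantage of exhibiting the $x_i$ and $y_j$ explicitly as the real roots of the interlacing pair $(P,Q)$ manufactured from the complex roots of $q$, rather than extracting them from an abstract T-system argument.

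Two small points are worth flagging. First, Section~\ref{sec:Karlin-R} sets up \Cref{thm:PQ-interlace} under a standing square-free hypothesis on $A$, while your $q$ may well have repeated complex conjugate pairs; the inductive proof of \Cref{thm:PQ-interlace} never uses square-freeness (each step just multiplies by one more linear factor and applies the interlacing-preserving matrix), so this is harmless, but you should say so explicitly. Second, you are inverting the paper's logical order: there, \Cref{cor:tsystemsAllLine} is background feeding into the Remark after \Cref{thm:PQ-interlace} and into \Cref{cor:Karlin-pts-R}, whereas you use \Cref{thm:PQ-interlace} to prove \Cref{cor:tsystemsAllLine}. Since the proof of \Cref{thm:PQ-interlace} is self-contained and does not rely on \Cref{cor:tsystemsAllLine}, no circularity arises, but it is worth making that independence explicit.
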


In this setting, we consider polynomials in $\RR[x]$,
thus we can assume that we can assume that we have factored out their real zeros
and we consider only polynomials without real roots.
Hence, we state the following two consequences of \Cref{thm:Karlin} only for $p>0$.

\begin{corollary}
    Let $p\in\RR[x]$ with $p>0$ on $[0,\infty)$.
\begin{enumerate}[(i)]
        \item If $\deg p = 2m$ with $m\in\NN_0$, then there exist unique $\alpha,\beta>0$ and unique
\[0 < x_1 < y_1 < \dots < y_{m-1} < x_m < \infty\]
such that
\[p(x) = \alpha\cdot \prod_{i=1}^m (x-x_i)^2 + \beta\cdot x\cdot\prod_{i=1}^{m-1} (x-y_i)^2.\]

        \item If $\deg p = 2m+1$ with $m\in\NN_0$, then there exist unique $\alpha,\beta>0$ and unique
\[0 < x_1 < y_1 < \dots < x_m < y_m < \infty\]
such that
\[p(x) = \alpha\cdot\prod_{i=1}^m (x-x_i)^2 + \beta\cdot x\cdot\prod_{i=1}^m (x-y_i)^2.\]
    \end{enumerate}
\end{corollary}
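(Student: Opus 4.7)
The plan is to obtain existence directly from Theorem~\ref{thm:PQ-interlace-0-inf}, and then to derive uniqueness from the T-system framework of Theorem~\ref{thm:Karlin} (adapted to the half-line via the Goursat transform discussed at the end of Section~\ref{sec:Karlin-interval}). Both statements (i) and (ii) are essentially a bookkeeping of what Theorem~\ref{thm:PQ-interlace-0-inf} already produces, split according to the parity of $\deg p$.

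First I would apply Theorem~\ref{thm:PQ-interlace-0-inf} to write $p(x) = a_d\, \rP(x)^2 + a_d\, x\, \rQ(x)^2$, where $\rP,\rQ\in\RR[x]$ are interlacing polynomials whose real roots are strictly positive. Since $p>0$ on $[0,\infty)$ and the leading coefficient controls the sign at $+\infty$, we have $a_d>0$. In case~(i), $\deg p=2m$ forces $\deg \rP = m$ and $\deg \rQ = m-1$. Writing
\[
\rP(x)=c_1\prod_{i=1}^m(x-x_i),\qquad \rQ(x)=c_2\prod_{j=1}^{m-1}(x-y_j),
\]
with $x_i,y_j>0$, and setting $\alpha:=a_d c_1^2$, $\beta:=a_d c_2^2$, produces the desired identity. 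The strict positivity $\alpha>0$ is immediate; $\beta>0$ follows because $\rQ\not\equiv 0$ (otherwise $p=\alpha\rP^2$ would vanish at the positive roots of $\rP$, contradicting $p>0$ on $[0,\infty)$). The prescribed order $0<x_1<y_1<\dots<y_{m-1}<x_m$ is then forced: interlacing of a degree-$m$ polynomial with a degree-$(m-1)$ polynomial, both having only positive roots, must nest the smaller list strictly inside the larger one. Case~(ii) is completely analogous with $\deg\rP=\deg\rQ=m$; the ordering $x_1<y_1<\dots<x_m<y_m$ (rather than the flipped labeling) is pinned down by matching leading terms, since in odd degree $\deg(\alpha\rP^2)=2m<2m+1=\deg(\beta x\rQ^2)$, so $a_d=\beta c_2^2$ is governed by $\rQ$; the convention of which interlacing polynomial carries the $x$-factor fixes the labeling.

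For uniqueness, I would invoke Theorem~\ref{thm:Karlin} applied to the continuous T-system $\cF=\{1,x,\dots,x^d\}$. Since this theorem is stated on a compact interval $[a,b]$, I would first transport the problem from $[0,\infty)$ to $[-1,1]$ via Goursat's transform $\cG$ (see the discussion following Theorem~\ref{thm:PQ-interlace-0-inf}); $\cG$ is an involution up to a scalar, preserves positivity and interlacing, and bijects SOS-type decompositions of the two forms. Theorem~\ref{thm:Karlin} then supplies \emph{unique} polynomials $f_*$ and $f^*$ in $\lin(\cF)$ with prescribed vanishing patterns. Identifying $f_*$ with the summand that is nonzero at the right endpoint and $f^*$ with the one that vanishes there (which, after pulling back through $\cG$, correspond respectively to $\alpha\prod(x-x_i)^2$ and $\beta x\prod(x-y_j)^2$ in the odd-degree case, and with the opposite assignment in the even-degree case) gives uniqueness of $\alpha,\beta$ and of the points $x_i,y_j$.

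The main obstacle I foresee is precisely this last identification: matching the intrinsic Karlin polynomials $f_*,f^*$ on the transported interval with the explicit factors $\alpha\prod(x-x_i)^2$ and $\beta x\prod(x-y_j)^2$ on $[0,\infty)$, and making sure the total zero-count (with multiplicities, including any zero at the endpoint $x=0$ of $\beta x\prod(x-y_j)^2$) adds up to $\deg p$ in each parity. Once that accounting is done carefully, the uniqueness clause of Theorem~\ref{thm:Karlin} delivers the result for both cases simultaneously. The residual edge cases ($m=0$ in case~(i), where the statement is vacuous, and $m=0$ in case~(ii), where $p$ is linear and $\alpha=p(0)$, $\beta=\lc(p)$) are handled separately by direct inspection.
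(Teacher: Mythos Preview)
The paper offers no proof of this corollary: it is listed in the appendix as one of several direct consequences of Theorem~\ref{thm:Karlin}, with the passage from the compact-interval setting of that theorem to $[0,\infty)$ left to the cited T-system literature (Karlin--Studden treats the half-line directly). So the paper's intended route is that both existence and uniqueness flow from the abstract Karlin machinery in one stroke.

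Your route is different and more constructive: existence from Theorem~\ref{thm:PQ-interlace-0-inf}, uniqueness from Theorem~\ref{thm:Karlin} transported via Goursat. This is valid, and in fact mirrors how Section~\ref{sec:Karlin-interval} later \emph{identifies} the abstract Karlin points with the roots of $\rP,\rQ$ (combining the two results). What you gain is an explicit construction of the decomposition; what you pay is the Goursat bookkeeping you already flag as the main obstacle.

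One point where your argument is thin: in case~(ii), the justification for the specific ordering $x_1<y_1<\cdots<x_m<y_m$ via ``matching leading terms'' does not actually settle which family of roots starts first---knowing that $\beta x\rQ^2$ carries the top degree says nothing about whether $\min_i x_i<\min_j y_j$. The ordering \emph{is} determined, but seeing it requires going inside the proof of Theorem~\ref{thm:PQ-interlace-0-inf}: under $x\mapsto x^2$, when $\deg p$ is odd the higher-degree polynomial $P$ of Theorem~\ref{thm:PQ-interlace} has roots $\{0\}\cup\{\pm\sqrt{y_j}\}$ while $Q$ has roots $\{\pm\sqrt{x_i}\}$, and since $P$'s roots are the outer ones in that interlacing, between $0$ and $\sqrt{y_1}$ sits $\sqrt{x_1}$, forcing $x_1<y_1$. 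Alternatively, you can defer this to the uniqueness step, since Karlin's endpoint conditions $f_*(b)>0$ versus $f^*(b)=0$ (carried through Goursat) pin down which summand vanishes at infinity and hence which root list ends last.
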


\begin{corollary}
    Let $p\in\RR[x]$ with $p>0$ on $[a,b]$ for some $a,b\in\RR$ with $a < b$.
\begin{enumerate}[(i)]
        \item If $\deg p = 2m$ for some $m\in\NN_0$, then there exist unique $\alpha,\beta>0$ and unique
\[a< x_1 < y_1 < \dots < y_{m-1} < x_m < b\]
such that
\[p(x) = \alpha\cdot\prod_{i=1}^m (x-x_i)^2 + \beta\cdot (x-a)\cdot (b-x)\cdot \prod_{i=1}^{m-1} (x-y_i)^2.\]

        \item If $\deg p = 2m+1$ for some $m\in\NN_0$, then there exist unique $\alpha,\beta>0$ and
\[a < x_1 < y_1 < \dots < x_m < y_m < b\]
such that
\[p(x) = \alpha\cdot (x-a)\cdot\prod_{i=1}^m (x-x_i)^2 + \beta\cdot (b-x)\cdot\prod_{i=1}^m (x-y_i)^2.\]
    \end{enumerate}
\end{corollary}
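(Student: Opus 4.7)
My plan is to derive the decomposition from Theorem~\ref{thm:Karlin} applied to the continuous T-system $\cF = \{1, x, x^2, \dots, x^n\}$ on $[a, b]$, where $n = 2m$ in case (i) and $n = 2m+1$ in case (ii). This is indeed a continuous T-system of order $n$ on $[a,b]$ since any nontrivial polynomial in $\lin(\cF)$ has degree at most $n$ and hence at most $n$ zeros on any subset of $\RR$. The positivity of $p$ on $[a,b]$ supplies the hypothesis $f > 0$ required by Theorem~\ref{thm:Karlin}, so both the lower principal representation $f_*$ and its dual $f^*$ exist uniquely.

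For case (i) I would invoke Theorem~\ref{thm:Karlin} to obtain the unique $f_* \in \lin(\cF)$ with $0 \leq f_* \leq p$ on $[a,b]$, $f_*(b) > 0$, and zero set of index $n = 2m$. Since $f_*$ is a nonnegative polynomial of degree at most $2m$ not vanishing at $b$, the T-system index count forces its zero set to consist of exactly $m$ interior double zeros $x_1 < \dots < x_m$ in $(a,b)$, so that $f_*(x) = \alpha \prod_{i=1}^m (x-x_i)^2$ for $\alpha > 0$ the leading coefficient. I would then examine the nonnegative residual $q := p - f_*$, of degree at most $2m$. Conditions (c)--(d) of Theorem~\ref{thm:Karlin} give at least one zero of $q$ in each gap $(x_i, x_{i+1})$ and one in $(x_m, b)$; applying the mirrored statement via $x \mapsto a+b-x$ (or equivalently combining with $f^*$ for the reflected interval) supplies a zero in $(a, x_1)$. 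Nonnegativity of $q$ forces every interior zero to have even multiplicity, and the total degree budget of $2m$ is then tight only if the two boundary-adjacent zeros collapse to simple zeros precisely at $a$ and $b$, while each interior gap $(x_i, x_{i+1})$ contains exactly one double zero $y_i$. This yields $q(x) = \beta (x-a)(b-x) \prod_{i=1}^{m-1}(x-y_i)^2$ with $\beta > 0$ and the required interlacing $x_1 < y_1 < \dots < y_{m-1} < x_m$.

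For case (ii) the argument runs analogously with $n = 2m+1$, but now the parity of the degree forces the endpoint mass to split asymmetrically between the two summands: $f_*$ must carry a simple factor $(x-a)$ together with $m$ interior double zeros $x_1, \dots, x_m$ in order to reach index $2m+1$ while still satisfying $f_*(b) > 0$, and the residual $p - f_*$ carries the complementary simple factor $(b-x)$ with $m$ interior double zeros $y_1, \dots, y_m$ interlacing with the $x_i$'s as prescribed. Uniqueness of $\alpha$, $\beta$, and of the Karlin points $x_i, y_j$ in both cases is inherited directly from the uniqueness of $f_*$ in Theorem~\ref{thm:Karlin}, together with the uniqueness of the factorization of $q = p - f_*$ into its leading coefficient and root data.

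The main obstacle I expect is the careful zero/multiplicity bookkeeping at the endpoints: Theorem~\ref{thm:Karlin} as stated only controls zeros of $p - f_*$ between consecutive zeros of $f_*$ and between the largest zero of $f_*$ and the endpoint $b$, so one must either invoke the symmetric version of the theorem to handle the left endpoint $a$ or combine the information furnished by $f_*$ and $f^*$. Reconciling the "index $n$" condition on $f_*$ with the nonnegativity of $q$ via a tight degree count--thereby forcing the marginal interior zeros to coalesce into simple zeros exactly at $a$ and $b$--is the crux of identifying the $(x-a)(b-x)$ boundary factor in case (i) and the asymmetric split in case (ii).
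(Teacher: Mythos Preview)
The paper does not prove this corollary; it simply states it as a consequence of Theorem~\ref{thm:Karlin} and refers to \cite{Dio-T-systes-arxiv} for the details. Your plan---apply Karlin's theorem to the monomial T-system $\{1,x,\dots,x^n\}$ on $[a,b]$ and extract the two summands from the principal representations---is therefore exactly the route the paper indicates.

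One caution on the endpoint bookkeeping you rightly flag as the crux: the reflection trick does not work as written, because applying Theorem~\ref{thm:Karlin} to the reflected polynomial $p(a+b-\cdot)$ produces a \emph{different} lower principal representation, not the reflection of $f_*$; it therefore gives no direct information about zeros of $p-f_*$ near $a$. Your alternative of working with $f^*$ is the cleaner route. In the even case, the index-$2m$ condition together with $f^*(b)=0$ already forces $f^*(a)=0$ by parity (an endpoint contributes index~$1$ while each interior zero contributes~$2$, so a single endpoint cannot make the total index even), whence $f^*$ itself has the shape $\beta(x-a)(b-x)\prod_{i=1}^{m-1}(x-y_i)^2$. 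What remains---identifying $p-f_*$ with $f^*$ and establishing the interlacing $x_1<y_1<\cdots<y_{m-1}<x_m$---is where the argument ultimately leans on the detailed T-system machinery in the cited reference, as you correctly anticipate.
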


Even though we have presented the statements that if $p>0$ (on $\RR$, $[0,\infty)$, or $[a,b]$), then there exists a unique representation as a sum of squares (with additional factors $x$, $x-a$, and $b-x$),
the opposite implications are straightforward. 
Thus, all the previous imply $p>0$. 
Therefore, these equivalences are positivity certificates.

\end{document}